\tikzset{->-/.style={decoration={
				markings,
				mark=at position #1 with {\arrow{latex}}},postaction={decorate}}}
	\tikzset{-<-/.style={decoration={
				markings,
				mark=at position #1 with {\arrowreversed{latex}}},postaction={decorate}}}
\tikzset{cross/.style={cross out, draw,
         minimum size=2*(#1-\pgflinewidth),
         inner sep=0pt, outer sep=0pt}}
\newcommand{\boundellipse}[3]
{(#1) ellipse (#2 and #3)}
\theoremstyle{plain}
\newtheorem{thm}{Theorem}[section]
\newtheorem{cor}[thm]{Corollary}
\newtheorem{lem}[thm]{Lemma}
\newtheorem{prop}[thm]{Proposition}
\newtheorem{defn}[thm]{Definition}
\theoremstyle{remark}
\newtheorem*{rem}{Remark}
\newtheorem*{ex}{Example}
\numberwithin{equation}{section}
\newcommand{\nlap}{\partial_{\textrm{n}} \Delta}
\newcommand{\SH}{\mathrm{SH}}
\newcommand{\lfun}{\lambda}
\newcommand{\normal}{{\tt{n}}}
\newcommand{\dNpara}{u}
\newcommand{\dn}{\partial_{\textrm{n}}}
\newcommand{\calA}{{\mathcal A}}
\newcommand{\calH}{{\mathscr H}}
\newcommand{\calN}{{\mathcal N}}
\newcommand{\calS}{{\mathsf S}}
\newcommand{\calF}{{\mathcal F}}
\newcommand{\fluct}{\operatorname{fluct}}
\newcommand{\calW}{{\mathscr W}}
\newcommand{\pol}{{p}}
\newcommand{\Int}{\operatorname{Int}}
\newcommand{\trace}{\operatorname{trace}}
\newcommand{\erfc}{\operatorname{erfc}}
\newcommand{\R}{{\mathbb R}}
\newcommand{\C}{{\mathbb C}}
\newcommand{\eps}{{\varepsilon}}
\newcommand{\re}{\operatorname{Re}}
\newcommand{\im}{\operatorname{Im}}
\def\ROne{\calA(r_1,\delta_n)}
\def\RTwo{\calA(r_2,\delta_n)}
\renewcommand{\d}{{\partial}}
\newcommand{\dbar}{\bar{\partial}}
\newcommand{\1}{\mathbf{1}}
\newcommand{\dist}{\operatorname{dist}}
\newcommand{\supp}{\operatorname{supp}}
\newcommand{\radlap}{\partial_r \Delta}
\newcommand{\bigO}{\mathcal{O}}
\begin{document}

\title[Fluctuations through a spectral gap]{The two-dimensional Coulomb gas: fluctuations through a spectral gap}

\author{Yacin Ameur}
\address{Yacin Ameur\\
Department of Mathematics\\
Lund University\\
22100 Lund, Sweden}
\email{ Yacin.Ameur@math.lu.se}
\author{Christophe Charlier}
\address{Christophe Charlier\\
Department of Mathematics\\
Lund University\\
22100 Lund, Sweden}
\email{Christophe.Charlier@math.lu.se}

\author{Joakim Cronvall}
\address{Joakim Cronvall\\
Department of Mathematics\\
Lund University\\
22100 Lund, Sweden}
\email{Joakim.Cronvall@math.lu.se}

\keywords{Coulomb gas; spectral gap; soft edge; fluctuations; discrete normal distribution; Jacobi theta function; weighted Szeg\H{o} kernel.}

\subjclass[2010]{60B20; 60G55; 41A60; 33E05;
30C40; 31A15}

\begin{abstract} We study a class of radially symmetric Coulomb gas ensembles at inverse temperature $\beta=2$, for which the droplet consists of a number of concentric annuli, having at least one bounded ``gap'' $G$, i.e., a connected component of the complement of the droplet, which disconnects the droplet. Let $n$ be the total number of particles.
Among other things, we deduce fine asymptotics as $n \to \infty$ for the edge density and the correlation kernel near the gap, as well as for the cumulant generating function of fluctuations of smooth linear statistics. We typically find an oscillatory behaviour in the distribution of particles which fall near the edge of the gap.
These oscillations are given explicitly in terms of a discrete Gaussian distribution, weighted Szeg\H{o} kernels, and the Jacobi theta function, which depend on the parameter $n$.
\end{abstract}

\maketitle

\section{Introduction}
\subsection{Coulomb droplets with spectral gaps} In recent years, much work has been done relating to statistical properties of two-dimensional Coulomb gas ensembles near the edge of a connected droplet. Typically these works have focused on properties near the ``outer boundary'', i.e., the boundary of the unbounded component $U$ of the complement of the droplet. See for example \cite{ADM,AC,AM,AKM,BBNY2,BF,BF2022,BS2021,BY2022,ES,F,Fo,FJ,HW,LR,LSe,RV}.

In the present work, we study a class of radially symmetric ensembles (at inverse temperature $\beta=2$), for which the droplet $S$ consists of a finite number of concentric annuli, having at least one bounded ``spectral gap'' $G$, i.e., a
component of the complement $\C\setminus S$ which disconnects $S$.
A schematic picture, of a typical droplet under study, is given in
Figure \ref{pic1}.

Similar to what goes on near the outer boundary, we shall find that the particles that fall near the edge $\d G$ of a spectral gap tend to form a strongly correlated ``field'', but with an additional uncertainty built into it, since there are two disjoint boundary components near which each individual particle could fall. Among other things, we shall quantify the additional uncertainty in terms of a discrete Gaussian distribution, which varies (or ``oscillates'') with the total number of particles.
In a sense, we thus obtain new two-dimensional counterparts to results in the multi-cut regime found in e.g.~\cite{DKMVZ1999,BG2,CFWW}.

It is worth remarking that we here exclusively study droplets with ordinary ``soft edges''. This means that the particle density varies continuously in a neighbourhood of the boundary of $S$, with a quick but smooth (error-function type) drop-off in the direction of the complement. A very different but yet somewhat parallel setting, with ``hard edges'' where the density vanishes in a highly discontinuous manner, is studied in \cite{ACCL,C}.

\begin{figure}[h!]
\begin{center}
\begin{tikzpicture}
\node at (0,0) {\includegraphics[width=5cm]{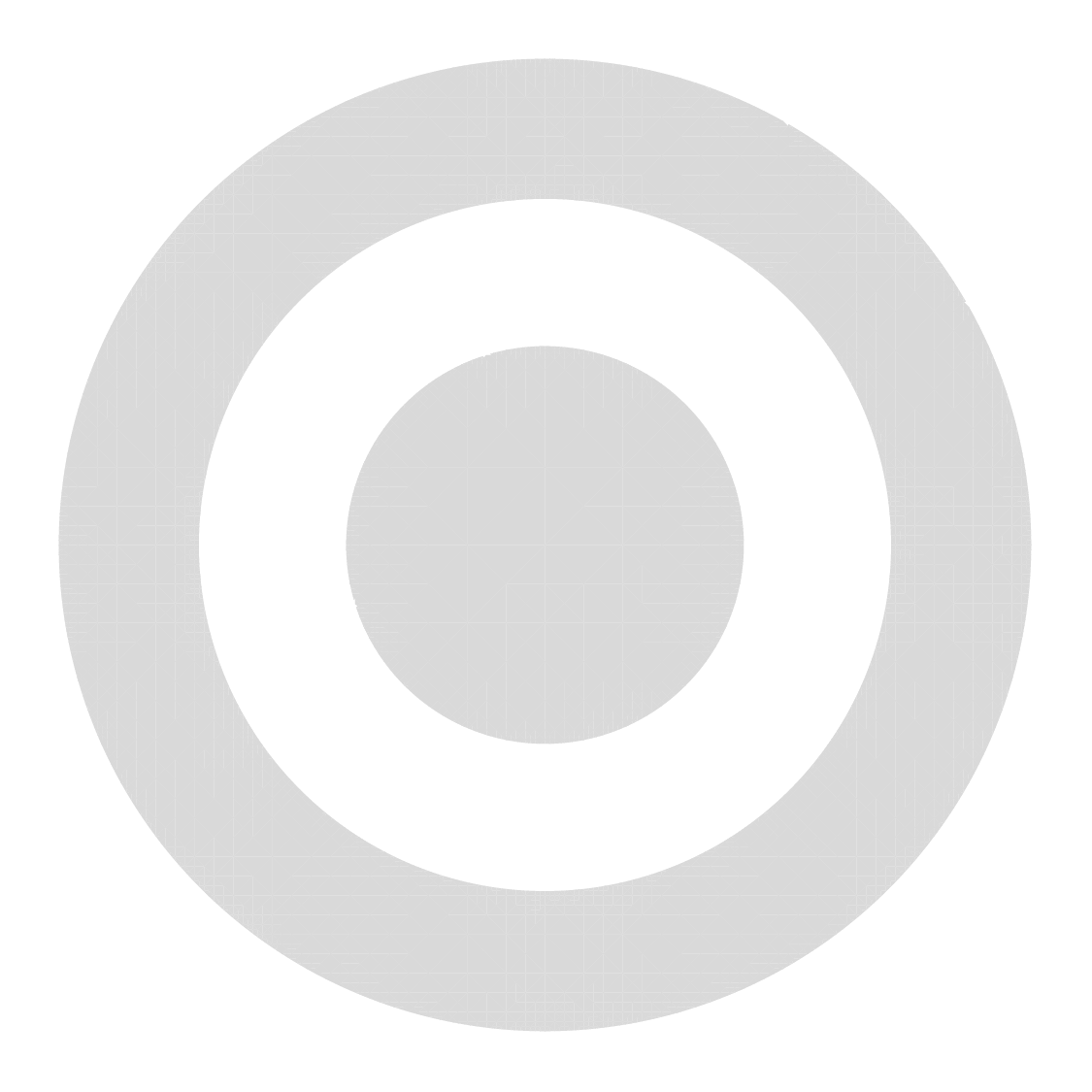}};
\node at (0,0) {\footnotesize $S$};
\node at (45:1.22) {\footnotesize $G$};
\node at (45:1.85) {\footnotesize $S$};
\node at (45:2.5) {\footnotesize $U$};
\draw[fill] (2.23,0) circle (0.03cm);
\node at (2.28,0.2) {\footnotesize $b_{N}$};
\draw[fill] (1.58,0) circle (0.03cm);
\node at (1.58,0.2) {\footnotesize $r_{2}$};
\draw[fill] (0.91,0) circle (0.03cm);
\node at (0.91,0.2) {\footnotesize $r_{1}$};
\end{tikzpicture}
\end{center}
\caption{The gap $G=\{r_1<|z|<r_2\}$ disconnects the droplet $S$. The domain $U=\{|z|>b_N\}\cup\{\infty\}$ is the component of $\hat{\C}\setminus S$ containing $\infty$.}
\label{pic1}
\end{figure}

\subsubsection{Some potential theoretic preliminaries} \label{backgr} We begin by recalling some general principles of weighted potential theory, with respect to an arbitrary admissible (not necessarily rotationally symmetric) external potential, i.e.,
a function $$Q:\C\to\R\cup\{+\infty\}$$ whose properties are specified below.

Given a compactly supported unit (positive) Borel measure on $\C$ (i.e.
$\mu(\C)=1$) we define its weighted logarithmic energy by
$$I_Q[\mu]=\int_{\C^2}\log \frac 1 {|z-w|}\, d\mu(z)\, d\mu(w)+\mu(Q),$$
where we write $\mu(Q)=\int_\C Q\, d\mu$. If we think of $\mu$ as a blob of charge, the first term represents the self-interaction energy, and the second one gives the energy from interaction with the external potential.

Here and in what follows,
the potential $Q$ is assumed to be lower semicontinuous, finite on some set of positive capacity, and ``large'' near infinity in the sense that $Q(z)-2\log|z|\to\infty$ as $|z|\to\infty$. By standard results (cf.~ e.g.~\cite{ST}) there then exists a unique equilibrium measure $\sigma$ on $\C$ minimizing $I_Q$ over all compactly supported unit Borel measures on $\C$.

The support of $\sigma$ is termed the \textit{droplet} and denoted $S=S[Q]$. Assuming (as we will) that $Q$ is $C^2$-smooth in a neighbourhood of $S$, we have by Frostman's theorem (see \cite[Theorem II.1.3]{ST})
that
\begin{align}\label{def of eq measure}
d\sigma=\Delta Q\cdot\1_S\, dA,
\end{align}
where we use the conventions
$$\Delta=\d\dbar=\frac 1 4(\d_{xx}+\d_{yy}),\qquad dA=\frac 1\pi\, dxdy.$$
(Here and in what follows, $\d=\frac 1 2(\frac \d {\d x}-i\frac \d {\d y})$ and $\dbar=\frac 1 2(\frac \d {\d x}+i\frac \d {\d y})$ are the usual complex derivatives with respect to $z=x+iy$.)

Note that $Q$ is subharmonic on $S$ (since $\sigma$ is a measure).

Given a positive measure $\mu$, we write $U^\mu$ for the usual logarithmic potential
\begin{align}\label{Umu}
U^\mu(z)=\int_\C\log\frac 1 {|z-w|}\, d\mu(w).
\end{align}
It is known (see \cite[Theorem I.1.3]{ST}) that there exists a constant $\gamma=\gamma_Q$ (``Robin's constant'') such that the equilibrium measure satisfies
\begin{equation}\label{ob1}Q+2U^\sigma=\gamma\qquad \text{on } S\end{equation}
and
\begin{equation}\label{ob2}Q+2U^\sigma\ge \gamma\qquad \text{on } \C\setminus S.\end{equation}

We denote by
$$\check{Q}(z)=\gamma-2U^\sigma(z)$$
the so-called obstacle function, which is a subharmonic function \cite[Theorem 0.5.6]{ST} satisfying
$\check{Q}=Q$ on $S$, $\check{Q}\le Q$ on $\C$ (see also Figure \ref{fig1}) and
$$\check{Q}(z)=2\log|z|+O(1),\qquad \text{as}\qquad z\to\infty.$$

Moreover, $\check{Q}$ is harmonic on $\C\setminus S$ and globally $C^{1,1}$-smooth,
i.e.,
its gradient is Lipschitz continuous (this directly follows from \eqref{Umu}). In the sense of distributions,
$\Delta \check{Q}=\Delta Q\cdot \1_S.$

Many of the above facts are easy to understand by a Perron-family argument: let $\SH_1$ be the family of all subharmonic functions $s:\C\to \R$ such that $s\le Q$ everywhere on $\C$ and $s(z)\le 2\log|z|+O(1)$ as $z\to\infty$. By 
\cite[Theorem I.4.1]{ST} $\check{Q}(z)$ is the envelope
\begin{equation}\label{envo}
\check{Q}(z)=\sup\{s(z)\,;\, s\in\SH_1\}.
\end{equation}

We shall make a few further mild assumptions on our potential $Q$. First of all, we shall assume that the Laplacian $\Delta Q$ is \textit{strictly} positive on the boundary $\d S$. (The case where $\Delta Q$ vanishes on some subset of $\d S$ requires a different analysis and will not be considered in this work.) It is also convenient to assume that we have strict inequality $Q>\check{Q}$ in the complement $\C\setminus S$,
i.e., $S=S^*$ where $S^*=\{Q=\check{Q}\}$ is the contact set. We remark that for smooth potentials, it might happen that the complement $\C\setminus S$ has infinitely many components, but if $Q$ is real-analytic near $\d S$ there are only finitely many such components due to Sakai's theory in \cite{Sa}.

With these preliminaries out of the way, we specialize to a class of radial potentials such that the droplet has one or several ``gaps''.

\begin{figure}[ht]
\includegraphics[width=8cm]{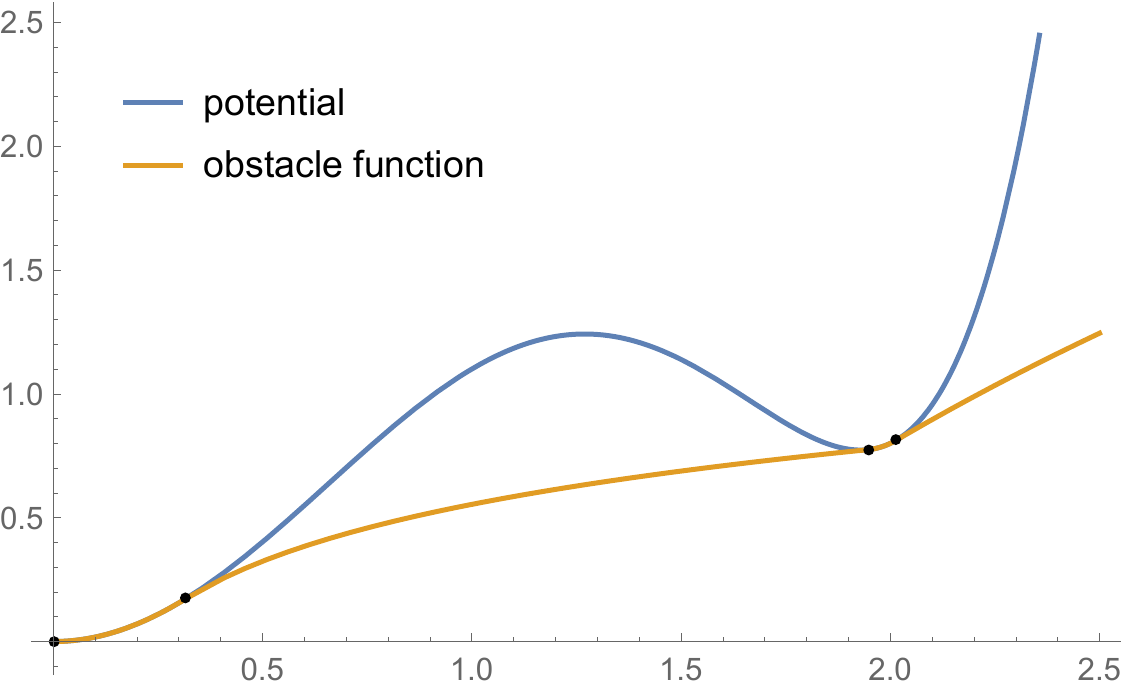}
\caption{The blue and orange curves are radial cross-sections of the graphs of $Q$ and $\check{Q}$, respectively, with $Q=a|z|^6-b|z|^4+c|z|^2$, $a=0.1$, $b=0.8$ and $c=1.8$. Here $N=1$, and the black dots are $a_{0}=0$, $b_{0}=r_{1}$, $a_{1}=r_{2}$, and $b_{1}$ (see \eqref{union}). (For $z\in G$,
$\check{Q}(z)=A+B\log |z|$ with $A$ and $B$ specified by \eqref{obs1}, \eqref{obs3}.)}
\label{fig1}
\end{figure}

\subsubsection{Class of potentials} \label{kalops} Let $Q$ be a potential that obeys the assumptions above and which is
radially symmetric,
\begin{align}\label{Q is rad symm}
Q(z)=q(|z|),\qquad q:[0,+\infty) \to \R.
\end{align}

We remark that already the case when $q(r)$ is a polynomial in $r^2$ is quite rich, and a concretely minded reader may think of this class
in what follows.

It is easily seen that the connected components of the droplet are then  closed concentric annuli (some of which might be circles). To avoid ``degenerate'' cases, we will assume that the droplet $S$ is a finite union of annuli:
\begin{equation}\label{union}S=\cup_{j=0}^N\{a_j\le |z|\le b_j\}\end{equation}
where $0\le a_0<b_0<a_1<b_1<\cdots$.

To focus on the main case of interest, we shall assume that $N\ge 1$ and thus that there is a ``gap'', i.e., a component of $\C\setminus S$ of the form
$$G=\{r_1<|z|<r_2\}$$
where we write $r_1=b_j$ and $r_2=a_{j+1}$ for some $j$ between $0$ and $N-1$.

In addition to our above assumptions, we will generally (unless the opposite is made explicit) assume that $Q$ is $C^6$-smooth in some neighbourhood of $\d G$.

We shall study one- and two-point correlations in a neighbourhood of the closure of a gap, especially near the boundary circles $|z|=r_1$ and $|z|=r_2$.
We remark that we can also regard the unbounded component of the complement, $U=\{|z|>b_N\}\cup\{\infty\}$
as a spectral gap.

Asymptotics near the outer boundary $|z|=b_N$ have been well studied, for example in \cite{ADM,AC,AKM,HW,LR,FJ} and references therein, but we shall nevertheless find some new contributions also for this case.

\subsubsection{Determinantal point processes} Given an $n$-point configuration $\{z_j\}_1^n$ we define the Hamiltonian
\begin{equation}\label{hamn}H_n=\sum_{j\ne k}\log\frac 1 {|z_j-z_k|}+n\sum_{j=1}^n Q(z_j).\end{equation}

With $dA_n(z_1,\ldots,z_n)=dA(z_1)\cdots dA(z_n)$ the normalized Lebesgue measure in $\C^n$, we then consider the Gibbs probability measure
\begin{equation}\label{bogi}d\mathbb{P}_n=\frac 1 {Z_n}e^{-H_n}\, dA_n,\end{equation}
where $Z_n=\int_{\C^n}e^{-H_n}\, dA_n$ is the partition function.

The Coulomb gas in external potential $Q$ (at inverse temperature $\beta=2$) is a sample $\{z_j\}_1^n$, picked randomly with respect to $\mathbb{P}_n$.

For $k\le n$ we define the $k$-point correlation function as the unique continuous function
on $\C^k$
satisfying (with $\{z_j\}_1^n$ a random sample and $\mathbb{E}_n$ expectation with respect to $\mathbb{P}_n$)
$$\mathbb{E}_n[f(z_1,\ldots,z_k)]=\frac {(n-k)!}{n!}\int_{\C^k}f\, R_{n,k}\, dA_k$$
for all compactly supported continuous functions $f$ on $\C^k$.

For each fixed $k\ge 1$,  we have Johansson's convergence theorem \cite{Deift,J,HM}
$$\frac 1 {n^k} R_{n,k}\, dA_k\to d\sigma(z_{1})\ldots d\sigma(z_{k}),\qquad (n\to\infty),$$
in the weak sense of measures on $\C^k$.

As is well-known (see e.g. \cite{M,ST}) the process $\{z_j\}_1^n$ is determinantal, i.e., there exists
a correlation kernel $K_n(z,w)$ such that $R_{n,k}(z_1,\ldots,z_k)=\det(K_n(z_i,z_j))_{i,j=1}^k$.

The kernel $K_n(z,w)$ is only determined up to multiplication by cocycles $c_n(z,w)=h_n(z)/h_n(w)$ where $h_n$ is a non-vanishing measurable function.
We fix a canonical choice in the following way.

Let $\calW_n\subset  L^2=L^2(\mathbb{C},dA)$ be the subspace of all weighted (holomorphic) polynomials on $\C$ of the form
$$\calW_n=\{p=P\cdot e^{-\frac n 2 Q}\,:\, P \mbox{ is a holomorphic polynomial of degree }
\le n-1\},$$
where the norm in $L^2$ is defined by $\|f\|^2=\int_\C|f|^2\, dA$.

The canonical correlation kernel $K_n$ is just the reproducing kernel for the space $\calW_n$, i.e.,
$$K_n(z,w)=\sum_{j=0}^{n-1}\frac {p_j(z)\overline{p_j(w)}}{\|p_j\|^2},$$
where $\{p_j\}_0^{n-1}$ is the orthogonal basis of $\calW_n$ consisting of the weighted monomials
$$\pol_j(z)=z^je^{-\frac n 2 q(r)},\qquad (r=|z|).$$

Following Mehta \cite{M}, we denote the 1-point function by
\begin{equation}\label{1pt}R_n(z):=K_n(z,z).\end{equation}

\subsubsection{Error-function asymptotics}
For ease of reference, we note the following fact.

\begin{prop} \label{crude} Suppose that $Q$ is radially symmetric, is $C^2$-smooth in a neighbourhood of $S$, and strictly subharmonic on $\d G$.
Let $p$ be a boundary point of $S$ and let $\normal_1(p)$ be the unit normal to $\d S$ pointing out of $S$.
Then, as $n\to\infty$, we have, uniformly for $t$ in  compact subsets of $\C$,
\begin{equation}\label{unie}R_n(p+\frac {t}{\sqrt{2n\Delta Q(p)}}\normal_1(p))=n\Delta Q(p)\frac 1 2 \erfc t +o(n),\end{equation}
where $\erfc$ is the usual complementary error function
\begin{equation}\label{erfcdef}\erfc t=\frac 2 {\sqrt{\pi}}\int_t^{+\infty}e^{-s^2}\, ds.\end{equation}
\end{prop}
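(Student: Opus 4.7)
The plan is to exploit the rotational symmetry of $Q$ to write $R_n$ as a one-dimensional sum of Gaussian bumps, apply Laplace's method to the monomial norms $h_{n,j}=\|\pol_j\|^2$, and then recognise the resulting sum as a Riemann-sum approximation of a truncated Gaussian integral, which evaluates to $\tfrac12\erfc t$.

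\emph{Setup.} Since $Q$ is radial, the basis $\{\pol_j\}$ is already orthogonal, so
$$R_n(z)=e^{-nq(r)}\sum_{j=0}^{n-1}\frac{r^{2j}}{h_{n,j}},\qquad h_{n,j}=2\int_0^\infty s^{2j+1}e^{-nq(s)}\,ds.$$
Writing $\tau=j/n$ and $V_\tau(s)=q(s)-2\tau\log s$, any critical point of $V_\tau$ satisfies $sq'(s)=2\tau$, and at such a point $V_\tau''(r_\tau)=q''(r_\tau)+q'(r_\tau)/r_\tau=4\Delta Q(r_\tau)$. For $\tau$ such that the global minimiser $r_\tau$ lies strictly in the interior of a component of $S$, Laplace's method yields
$$h_{n,j}=\frac{2r_\tau\sqrt{2\pi}}{\sqrt{4n\Delta Q(r_\tau)}}\,e^{-nV_\tau(r_\tau)}(1+o(1)),$$
so that, using the quadratic approximation $V_\tau(r)-V_\tau(r_\tau)\approx 2\Delta Q(r_\tau)(r-r_\tau)^2$, each term of $R_n$ becomes
$$T_j(r):=\frac{r^{2j}e^{-nq(r)}}{h_{n,j}}\approx\frac{1}{r_\tau}\sqrt{\frac{n\Delta Q(r_\tau)}{2\pi}}\,e^{-2n\Delta Q(r_\tau)(r-r_\tau)^2},$$
a Gaussian of width $O(n^{-1/2})$ centred at $r_\tau$.

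\emph{Riemann sum and cutoff.} Differentiating $r_\tau q'(r_\tau)=2\tau$ at $\tau=\tau_p$ (the unique value with $r_{\tau_p}=p$) gives $dr_\tau/d\tau=1/(2p\Delta Q(p))$, so consecutive peak centres are spaced by $(2pn\Delta Q(p))^{-1}$, which is far below the peak width, and $\sum_j T_j$ is well-approximated by a Riemann integral in the variable $y=r_\tau$. The effective range of integration is cut off sharply at $y=p$: for $\tau$ slightly past $\tau_p$ the critical point $r_\tau$ would cross into the gap, but there the \emph{global} minimiser of $V_\tau$ in fact jumps to the neighbouring annulus, and $T_j(r)$ then receives only exponentially small mass near $p$; at the outermost edge $p=b_N$ the cutoff comes directly from $j\leq n-1$. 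Setting $r=p+\epsilon t/\sqrt{2n\Delta Q(p)}$ with $\epsilon=\pm1$ recording whether $\normal_1(p)$ points in the direction of increasing or decreasing $r$, and changing variable $s=\sqrt{2n\Delta Q(p)}(r_\tau-r)$, the truncated Gaussian collapses to $\pi^{-1/2}\int_t^\infty e^{-s^2}\,ds=\tfrac12\erfc t$. Tracking the prefactors, namely $2pn\Delta Q(p)$ from the Riemann-sum Jacobian $d\tau/dy$ and $1/r_\tau\approx 1/p$ from the bump amplitude, produces the claimed leading term $n\Delta Q(p)\cdot\tfrac12\erfc t$.

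\emph{Main obstacle.} The principal technical difficulty is uniform control of the Laplace-type errors for those indices $j$ whose saddle $r_\tau$ sits in the microscopic window of size $O(n^{-1/2})$ around $p$: the cubic and higher terms in the Taylor expansion of $V_\tau$ around $r_\tau$ must be shown to contribute only a $1+o(1)$ multiplicative correction uniformly in $t$ on compact sets, and the Gaussian tails outside the window must decay faster than the $o(n)$ tolerance. The hypothesis $\Delta Q(p)>0$ (strict subharmonicity on $\partial G$) secures a non-degenerate Gaussian profile, while the $C^2$-smoothness of $q$ is precisely what is needed to control the cubic remainder; together these absorb all the errors into the $o(n)$ correction demanded by the statement.
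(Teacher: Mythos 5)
Your proposal is correct in outline, but it is not the paper's own proof: the paper does not prove Proposition \ref{crude} directly at all --- it simply notes that the statement is immediate from the boundary universality result of \cite[Theorem 1.8]{AKM}, which applies to all boundary components and not only to radial potentials. Your route --- writing $R_n$ as a sum of weighted monomial bumps, evaluating the norms $\|p_j\|^2$ by Laplace's method at the saddle $r_\tau q'(r_\tau)=2\tau$ (with $V_\tau''=4\Delta Q$), and reading the sum as a Riemann approximation of a truncated Gaussian integral --- is exactly the mechanism the paper itself deploys later, in Section \ref{sec3}, to extract the finer $\sqrt{n}$-order terms of Theorems \ref{sr1}--\ref{ur3}; at leading order it yields an elementary, self-contained proof in the radial setting, whereas the citation buys generality (no rotational symmetry needed) at the price of invoking a deep external theorem. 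One caveat about your cutoff step at a gap edge: the picture ``the global minimiser jumps to the other annulus, so the mass near $p$ is exponentially small'' is accurate only once $|j-Bn/2|$ exceeds order $\log n$; for the transitional indices with $|j-Bn/2|\lesssim \log n$ the two peaks of $|p_j|$ are of comparable size, the norm receives contributions from both, and $T_j$ is neither the full Gaussian bump nor negligible (this window is precisely where the paper's theta-function corrections live, cf.\ the splitting into $S_1,S_2,S_3$ and Lemma \ref{c1c2lem}). To close your argument you should observe that each such term is $O(\sqrt{n})$ and there are only $O(\log n)$ of them, so the total discrepancy is $O(\sqrt{n}\log n)=o(n)$, which is within the tolerance of \eqref{unie}; with that remark, and the uniform Laplace error control you already flag as the main obstacle, your proof goes through.
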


\begin{proof}[Remark on the proof.] As far as we are aware, error-function asymptotics of the above type was first noted for the case of the Ginibre ensemble in \cite{FH}.
Universality of the asymptotic in \eqref{unie} for a large class of potentials was settled in \cite{HW}, which however
only discusses asymptotic at an outer boundary $\d U$. On the other hand, Proposition \ref{crude} is immediate from \cite[Theorem 1.8]{AKM}, which applies also for other boundary components.
\end{proof}

In what follows, we shall find and exploit a subleading term in \eqref{unie}, which typically turns out to be of order $\sqrt{n}$. Subleading terms are commonly referred to as finite-size corrections \cite{GFF} in random matrix theory and appeared also for the elliptic Ginibre ensemble in the work \cite{LR}; more on that below. (Finite-size corrections find applications in e.g.~\cite{FM1,FM2}.)

We note also that \cite{MR} gives a recent application of asymptotics \`{a} la \eqref{unie} to freezing problems.

\subsubsection{Twin peaks}

Since the obstacle function $\check{Q}(z)$ is harmonic and radially symmetric in the gap $G=\{r_1<|z|<r_2\}$ there are constants $A$ and $B$ such that
$$\check{Q}(z)=A+B\log r,\qquad r_1\le r=|z|\le r_2.$$
(This follows by a well-known theorem on harmonic functions on annuli, see e.g. \cite{Ax,ST}.)

Since $\check{Q}=Q$ on $\d G$ we have
\begin{equation}\label{obs1}A=q(r_1)-B\log r_1=q(r_2)-B\log r_2.\end{equation}

By $C^1$-smoothness of $\check{Q}$ and $Q$ in the radial direction at $r=r_1$, we find that
$$q'(r_1)=\frac d {dr}(A+B\log r)\bigg|_{r=r_1}=\frac B {r_1}.$$
Since a similar relation holds at $r=r_2$ we find
\begin{equation}\label{obs2}B=r_1q'(r_1)=r_2q'(r_2).\end{equation}

Note that \eqref{obs1} implies
\begin{equation}\label{obs3}
B=\frac {q(r_2)-q(r_1)} {\log(r_2/r_1)}.
\end{equation}
The following proposition (essentially a case of Gauss' Theorem \cite[Theorem II.1.1]{ST}) gives an intrinsic meaning to the parameter $B$, in terms of the equilibrium measure $\sigma$.
\begin{prop}\label{paralem}
	The parameter $B$ equals to
$B=2\cdot \sigma(\{z\,;\, |z|\le r_1\}).$
\end{prop}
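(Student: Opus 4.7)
The plan is to interpret the parameter $B$ as a flux of $\nabla\check{Q}$ and apply Gauss' theorem. Since $\check{Q}$ is $C^{1,1}$-smooth globally (as noted in Section~\ref{backgr}), its distributional Laplacian is an $L^\infty$ function equal to $\Delta Q\cdot\mathbf{1}_S$ almost everywhere, with no singular component on $\partial S$. Hence, for any $r\in(r_1,r_2)$, writing $D_r=\{|z|\le r\}$ and using the structure \eqref{union} of the droplet (so that $D_r$ meets $S$ exactly in the union of those components with outer radius $\le r_1$), we have
$$\int_{D_r}\Delta\check{Q}\,dA \;=\; \int_{D_r\cap S}\Delta Q\,dA \;=\; \sigma(\{|z|\le r_1\}).$$

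On the other hand, in the gap $G$ the obstacle function has the explicit form $\check{Q}(z)=A+B\log r$. Combining this with Green's identity and the paper's normalizations $\Delta=\tfrac14(\partial_{xx}+\partial_{yy})$ and $dA=\tfrac{1}{\pi}dxdy$, for any radially symmetric $u=u(r)$ one has
$$\int_{D_r}\Delta u\,dA \;=\; \frac{1}{4\pi}\oint_{|z|=r}\partial_{\textrm{n}} u\,ds \;=\; \frac{r\,u'(r)}{2}.$$
Applied to $u=\check{Q}$ with $r\check{Q}'(r)=B$, this produces $\int_{D_r}\Delta\check{Q}\,dA = B/2$. Equating the two expressions for this integral yields $B=2\sigma(\{|z|\le r_1\})$.

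The main obstacle I anticipate is essentially bookkeeping: matching the factors of $\pi$ arising from Green's identity with the normalized area $dA$, and ruling out any singular mass of $\Delta\check{Q}$ concentrated on $\partial S$, which the $C^{1,1}$ regularity already precludes. As a sanity check, one can rederive the formula by a direct potential-theoretic computation: rotational symmetry of $\sigma$ combined with the standard circle average $\tfrac{1}{2\pi}\int_0^{2\pi}\log|re^{i\theta}-s|\,d\theta=\max(\log r,\log s)$ implies that for $|z|=r\in(r_1,r_2)$,
$$U^\sigma(z) \;=\; -\log r\cdot \sigma(\{|w|\le r_1\}) \;+\; \mathrm{const},$$
and matching this, via $\check{Q}=\gamma-2U^\sigma$, against $\check{Q}=A+B\log r$ gives the same conclusion.
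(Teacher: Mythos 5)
Your proof is correct, but it takes a genuinely different route from the paper. The paper works directly with $Q$ rather than $\check Q$: it sets $B_k = b_k q'(b_k) = a_{k+1}q'(a_{k+1})$, expresses $\frac{B_k}{2}$ as a contour integral $\frac{1}{2\pi i}\int_{|z|=b_k}\partial Q\,dz$, applies Stokes' theorem on each annular component $\{a_k\le|z|\le b_k\}$ to get the recursion $\frac{B_k}{2}=\frac{B_{k-1}}{2}+\sigma(\{a_k\le|z|\le b_k\})$, and iterates from the innermost component (using $B_{-1}=a_0 q'(a_0)=0$). You instead invoke the global $C^{1,1}$ regularity of $\check Q$ to see $\Delta\check Q = \Delta Q\cdot\mathbf 1_S$ in the distributional sense with no singular mass on $\partial S$, and then apply the divergence theorem once on a disc $D_r$ with $r\in(r_1,r_2)$, computing the flux $\tfrac12 r\check Q'(r)=\tfrac B2$ on one side and $\sigma(D_r\cap S)=\sigma(\{|z|\le r_1\})$ on the other. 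Your route is more economical — a single flux computation replaces the iteration over components — at the cost of appealing to the regularity theory for the obstacle function (which the paper does record in the preliminaries, so this is fair game). The paper's proof is slightly more self-contained in that it never touches $\check Q$ and needs only the representation \eqref{def of eq measure} plus Stokes on smooth annuli. Your potential-theoretic sanity check via the circle-average identity for $\log|z-w|$ is also correct and amounts to yet a third derivation (closer in spirit to a direct Gauss-theorem argument for radial measures); it is a nice cross-check but not needed.
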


\begin{proof} Recall that the droplet is given as the union \eqref{union}. For $k=0,\ldots,N-1$, write
 $B_k:=b_k q'(b_k)$. By \eqref{obs2} 
 we know that $B_k=a_{k+1}q'(a_{k+1})$. Using also \eqref{Q is rad symm}, we can thus write
\begin{align*}
\frac{B_k}{2} = \frac{1}{2\pi i} \int\limits_{|z|=b_k} \partial Q(z)\, dz = \frac{1}{2\pi i} \int\limits_{|z|=a_{k+1}} \partial Q(z)\, dz,
\end{align*}
where the curves are oriented in the counterclockwise direction.

By Stokes' theorem
$$
\frac{1}{2\pi i} \int\limits_{|z|=b_k} \partial Q(z)\, dz - \frac{1}{2\pi i} \int\limits_{|z|=a_k} \partial Q(z)\, dz = \int\limits_{ a_k\leq|z|\leq b_k} \Delta Q(z)\, dA(z),
$$
and hence, by \eqref{def of eq measure}, $\frac{B_k}{2}=\frac{B_{k-1}}{2} + \sigma(\{z: a_k\leq |z|\leq b_k\})$ for $k=0,\ldots,N$, where $B_{-1}:=a_{0}q'(a_{0})=0$ (since $\check{Q}$ is constant on the disc $|z|\le a_0$). By iteration we get
$$
\frac{B}{2}=	\frac{B_j}{2} = \frac{B_{-1}}{2}+ \sum\limits_{k=0}^{j} \sigma(\{z: a_k\leq |z|\leq b_k \}) = \frac{B_{-1}}{2}+\int\limits_{|z|\leq b_j} d\sigma,
$$	
for all $j$, $0\le j\le N$. Since $B_{-1}=0$, the claim follows.
\end{proof}

Let us denote by $V(z)$ the harmonic continuation of $\check{Q}|_{G}$ across the boundary $\d G$, i.e.,
\begin{equation}\label{Vfun}V(z)=A+B\log|z|.\end{equation}
The function $Q-V$ vanishes to first order on $\d G$ and is strictly positive on $N_{G}\setminus (\d G)$ where $N_G$ is a neighbourhood of $\overline{G}$.
(See \cite[Lemma 3.4]{AC} for an estimate valid for more general potentials.)

Now fix an integer $j$ close to $nB/2$ and consider the weighted polynomial $p_j(z)=z^je^{-\frac n 2 Q(z)},$
which satisfies
\begin{equation}\label{wassa}|p_j(z)|^2=e^{-nA}e^{(2j-nB)\log|z|}e^{-n(Q-V)(z)},\qquad (z\in\C).\end{equation}

We claim that
there are two local peak-points $r_{1,j}$ and $r_{2,j}$, close to $r_1$ and $r_2$ respectively, where $|p_j|$ achieves local maxima.
 (See Figure \ref{pic2} for an illustration.)

In what follows, we write ``$A_n\lesssim B_n$'' to denote that $A_n\le CB_n$ for all large enough $n$,
where $C$ is some constant independent of $n$. The notation $A_n\asymp B_n$ indicates that $A_n\lesssim B_n$ and $A_n\gtrsim B_n$.

\begin{lem}\label{neglo}
Suppose that $|j-nB/2|\lesssim \sqrt{n}\log n$. Then (for $k=1,2$)
\begin{itemize}
\item There are constants $\epsilon>0$ and $c>0$ such that if $r=|z|$ satisfies $|r-r_{k,j}|\le \epsilon$, then
$$|p_j(z)|^2\le |p_j(r_{k,j})|^2e^{-cn(r-r_{k,j})^2}.$$
\item If $|r-r_{k,j}|>\epsilon$, then $|p_j(z)|$ is exponentially small compared with its peak value:
$$|p_j(z)|\lesssim e^{-cn}\max\{|p(r_{1,j})|,|p(r_{2,j})|\}.$$
\end{itemize}

\end{lem}

\begin{proof}
A detailed proof of the lemma is straightforward from \eqref{wassa} using (for example) general principles in \cite[Lemma 3.4]{AC} and \cite[Lemma 3.5]{AC}; we omit details here.
\end{proof}

\begin{figure}[h]
\begin{center}
\includegraphics[width=11cm]{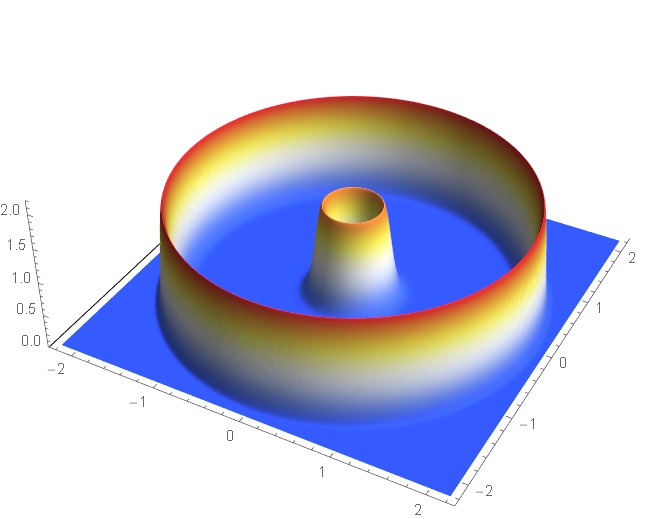}
\end{center}
\caption{Graph of the wavefunction $\frac{|p_j(z)|^2}{\|p_j\|^{2}}$ for $j=5$ and $n=30$ in potential $Q=a|z|^6-b|z|^4+c|z|^2$ with $a=0.1$, $b=0.8$ and $c=1.8$. Here $j=5 \approx \frac{Bn}{2}\approx 4.9735$.}
\label{pic2}
\end{figure}

\subsection{Fluctuations for real parts of analytic functions} Let $Q$ be a radially symmetric potential, satisfying the assumptions in \ref{kalops} above.

 Consider a smooth test function $f(z)$ which vanishes identically outside of a small neighbourhood of the closure $\overline{G}=\{r_1\le |z|\le r_2\}$ of the gap. Given a random sample $\{z_j\}_1^n$ from \eqref{bogi}, we consider the random variable (linear statistic)
\begin{align}\label{def of fluctn}
\fluct_n f=\sum_{j=1}^n f(z_j)-n\sigma(f),
\end{align}
where $\sigma(f):=\int_{\mathbb{C}}f(z)\, d\sigma(z)$.

Let us now, for simplicity, assume that $f$ is \emph{harmonic} in some neighbourhood of $\overline{G}$.  In terms of a smooth, compactly supported function $\lfun(z)$ which equals $\log|z|$ in a neighbourhood of $\overline{G}$ we can then uniquely decompose
\begin{equation}\label{decc}f(z)=\re g(z)+c\lfun(z)\end{equation}
where $g(z)$ is an analytic function in a neighbourhood of $\overline{G}$ and $c=\frac 1 {\pi i}\int_{|z|=r_1}\d f(z)\, dz$. (See for example the ``Logarithmic Conjugation Theorem'' in \cite{Ax}, cf.~also \cite{Bell,ST}.)

Given a function $h$, we consider the \emph{cumulant generating function} of $\fluct_n h$:
\begin{equation}\label{CGF1} F_{n,h}(t)=\log \mathbb{E}_n\, e^{t\fluct_n h}.
\end{equation}

In the following, the symbol $\dn$ designates differentiation in the outwards normal direction on $\d S$ (i.e., the normal direction pointing outwards from $S$).

We note the following result, a slight generalization of the main result from \cite{AM}, whose proof is in fact implicit there.

\begin{thm} \label{thereg} Suppose that $g$ is analytic in some annulus
$A_\epsilon=\{z\, ;\,r_1-\epsilon<|z|<r_2+\epsilon\}$. Modify $g$ outside the smaller annulus
$A_{\epsilon/2}$ to a $C^\infty$-smooth function with support in $A_\epsilon$. (Here $\epsilon$ small enough so that $A_\epsilon\cap(\C\setminus S)=G$.)

Then, as $n \to +\infty$, the real part $f=\re g$ satisfies, for each $t$,
\begin{equation}\label{cgf}
F_{n,f}(t)= te_f+\frac {t^2}2v_f+\bigO(n^{-\beta}),
\end{equation}
where $\beta>0$ is small enough and (writing ``$|dz|$'' for the arclength measure on $\d S$)
\begin{align}\label{expf}e_f&=\frac 1 2\int_S f\cdot \Delta\log\Delta Q\, dA+\frac
1 {8\pi}\int_{\d S} \partial_{\mathrm{n}} f\,|dz|-\frac 1 {8\pi}\int_{\d S} f\frac {\partial_{\mathrm{n}}\Delta Q}{\Delta Q}\, |dz|,\\
\label{varf}v_f&=-\int_Sf\Delta f\, dA.\end{align}
In particular, $\fluct_n f$ converges in distribution to the normal $N(e_f,v_f)$-distribution.
\end{thm}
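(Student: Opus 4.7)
The plan is to combine the classical change-of-potential trick with the first-moment expansion of \cite{AM}, integrated in a parameter. For $s$ in a fixed compact subset of $\R$, define the perturbed potential $Q_s:=Q-(s/n)f$. Since $(s/n)f$ is a $C^\infty$-perturbation of size $\bigO(1/n)$, for $n$ large $Q_s$ is admissible in the sense of \ref{kalops} and its droplet $S_s$ differs from $S$ only by a tubular collar of normal width $\bigO(1/n)$ (by strict subharmonicity of $Q$ on $\partial S$ and the implicit function theorem). A direct manipulation of \eqref{bogi} gives $\mathbb{E}_n^Q[e^{s\sum_j f(z_j)}]=Z_n(Q_s)/Z_n(Q)$, so that
$$F_{n,f}(t)=\log\frac{Z_n(Q_t)}{Z_n(Q)}-tn\sigma^Q(f)=\int_0^t\Bigl\{\mathbb{E}_n^{Q_s}\Bigl[\sum_j f(z_j)\Bigr]-n\sigma^Q(f)\Bigr\}\,ds.$$

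The central input is the first-moment expansion, uniform for $s$ in the above compact set,
$$\mathbb{E}_n^{Q_s}\Bigl[\sum_j f(z_j)\Bigr]=n\sigma^{Q_s}(f)+e_f^{Q_s}+\bigO(n^{-\beta}),$$
where $e_f^{Q_s}$ is defined by \eqref{expf} with $Q$ replaced by $Q_s$. This is the main result of \cite{AM}, proved there at the outer edge; its proof is local in a tubular neighbourhood of $\partial S$, and so extends to a gap boundary once the outward-normal orientation on each of the circles $|z|=r_1$ (in the $+r$-direction) and $|z|=r_2$ (in the $-r$-direction) is handled. Because $f$ is harmonic in $A_{\epsilon/2}$ and supported in $A_\epsilon$, the bulk integral $\tfrac12\int_S f\,\Delta\log\Delta Q\,dA$ in $e_f$ localises to a thin neighbourhood of $\partial G\cap S$ and the boundary integrals run only over $|z|=r_1$ and $|z|=r_2$. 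Since $Q_s-Q=\bigO(1/n)$ in $C^2$ near $\partial S$, we have $e_f^{Q_s}=e_f+\bigO(1/n)$. Combining with the first-order variation of the equilibrium measure (a standard Dirichlet-energy/balayage calculation for $\sigma^{Q_s}$ obtained from its variational characterisation, performed in \cite{AM} en route to the variance), one obtains
$$n\bigl[\sigma^{Q_s}(f)-\sigma^Q(f)\bigr]=s\,v_f+\bigO(1/n),\qquad v_f=-\int_S f\,\Delta f\,dA.$$

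Inserting these expansions into the integral representation above and integrating in $s\in[0,t]$ produces
$$F_{n,f}(t)=\int_0^t\bigl(s\,v_f+e_f+\bigO(n^{-\beta})\bigr)\,ds=t\,e_f+\tfrac{t^2}{2}v_f+\bigO(n^{-\beta}),$$
as claimed. Convergence of $\fluct_n f$ to $N(e_f,v_f)$ in distribution then follows from Curtiss' theorem on convergence of moment generating functions, applied on the neighbourhood of $t=0$ where the expansion holds.

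The principal difficulty is the first-moment asymptotic: in \cite{AM} it is established at the outer edge $\partial U$, so one must verify that the local Ward-identity analysis carried out there transports to a gap boundary, with the correct outward-normal conventions on each of the two circles of $\partial G$. Because the analysis is purely local in a tubular neighbourhood of $\partial S$, this is in essence a bookkeeping exercise, which is what the authors mean by stating that the proof is ``implicit'' in \cite{AM}. A secondary technical point is the uniformity of that expansion in the perturbation parameter $s$, which follows because the admissibility hypotheses of $Q$ are stable under $C^2$-small perturbations.
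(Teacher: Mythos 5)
Your overall skeleton is the same as the paper's: both begin from the cumulant-generating-function identity of Lemma \ref{lem11}, reduce to a first-moment expansion $\tilde{\mathbb{E}}_{n,sf}(\fluct_n f)=e_f+s\,v_f+\bigO(n^{-\beta})$ uniformly in $s$, and then integrate in $s$. Within that skeleton you take a genuinely different route. The paper keeps the equilibrium measure $\sigma=\sigma^Q$ fixed throughout, and extracts the variance term $s\,v_f$ from the $\sigma(v\cdot\partial h)$ contribution in the limit Ward identity (Proposition \ref{dentity}), computing $\nu_n(f)$ and $\tilde\nu_n(f)-\nu_n(f)$ separately. You instead recentre on $\sigma^{Q_s}$, invoke the \cite{AM} first-moment asymptotic for the potential $Q_s$, and recover $s\,v_f$ from the linearised response $n[\sigma^{Q_s}(f)-\sigma^Q(f)]$. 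That response formula is in fact correct, and can even be made exact: since $f=\re g$ is harmonic on $A_{\epsilon/2}\supset\overline{G}$ and supported in $A_\epsilon$, $\Delta f$ is supported in the two annuli $A_\epsilon\setminus A_{\epsilon/2}$ sitting strictly inside $\mathrm{int}\,S_s$ for $n$ large, so $\sigma^{Q_s}(f)=\int\Delta f\cdot\check Q_s\,dA=\int\Delta f\cdot Q_s\,dA$ exactly (no free-boundary term), giving $n[\sigma^{Q_s}(f)-\sigma^Q(f)]=-s\int\Delta f\cdot f\,dA=s\,v_f$. You assert this step without proof; it would strengthen the argument to include this short Green's-theorem computation rather than appealing to an unspecified balayage calculation.

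There is one genuine gap. You assert that the \cite{AM} first-moment formula carries over to a gap boundary ``once the outward-normal orientation on each of the circles is handled'', i.e.\ that the coefficient is literally the right-hand side of \eqref{expf} with $Q$ replaced by $Q_s$. That is not what a careful transplant of the Ward-identity argument produces. As the paper's Theorem \ref{gengap} shows, at a gap boundary the third term of the first-moment coefficient involves the Neumann jump
$$\calN(L^G)=-\partial_n L+\partial_n(\tilde L^G)$$
of the Poisson modification $L^G$ of $L=\log\Delta Q$ across $G$, not simply $-\partial_n L$. The reason is that $G$ is a bounded multiply connected component of $\C\setminus S$ in which $\check Q$ and $L^G$ are harmonic, and the Green's-formula manipulation that turns $\tfrac12\int_\C f\,\Delta L\,dA$ into boundary integrals produces the harmonic-continuation normal derivative $\partial_n\tilde L^G$ as an extra term. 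The passage from \eqref{expeco} to \eqref{expf} is precisely the observation that
$$\int_{\d G}f\,\partial_n(\tilde L^G)\,|dz|=0$$
whenever $f=\re g$ with $g$ analytic across the annulus, because $\tilde L^G=C_1+C_2\log|z|$ is radial and the circular averages of $\re g$ over $|z|=r_1$ and $|z|=r_2$ coincide ($g$ having no residue). This is the paper's ``Proof of Theorem \ref{thereg} from Theorem \ref{gengap}'' and it is a harmonicity argument, not an orientation bookkeeping point. Without it, your claimed first-moment coefficient $e_f^{Q_s}$ would not have the form \eqref{expf}, and you would need to insert this reduction before the final integration in $s$.
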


As indicated above, we shall deduce the above theorem as a consequence of a more general statement (for general potentials, not necessarily radially symmetric), using the limit Ward identity from \cite{AM}. See Theorem \ref{gengap} below.


\subsection{Fluctuations with oscillatory behaviour}
Having handled the ``simple'' case of the real part of an analytic function,
we now turn
to the function $\lfun$ in \eqref{decc}. In fact we shall allow more general $\lfun$ which are arbitrary,
smooth enough functions of $r=|z|$, and which (to focus on the essentials) vanishes identically
outside of a small neighbourhood of $\overline{G}$.

In order to formulate our next result, we recall the definition of the discrete Gaussian distribution from \cite{K,Sz}.

\begin{defn}\label{def:discrete Gaussian} We say that an integer-valued random variable $X$ has a discrete normal distribution with parameters $\alpha \in \mathbb{R}$ and $\dNpara\in (0,1)$, and we write $X\sim dN(\alpha,\dNpara)$, if
$$
\mathbb{P}(\{X=k\}) = \frac{\dNpara^{\frac 1 2(k-\alpha)^2}}{I(\alpha,\dNpara)}, \qquad k\in \mathbb{Z},
$$
where
\begin{align}\label{def of I in def}
I(\alpha,\dNpara)= \sum\limits_{j=-\infty}^{+\infty} \dNpara^{\frac 1 2(j-\alpha)^2}.
\end{align}
\end{defn}

The distribution $dN(\alpha,\dNpara)$ can be characterized as the one maximizing the Shannon entropy among probability distributions supported on $\mathbb{Z}$ with given expectation and variance, see \cite{K}.

\smallskip

In the following, when $t$ is a real number, we write $\lfloor t\rfloor$ for the integer part and $\{t\}$ for the fractional part. (Thus $0\le \{t\}<1$ and $t=\lfloor t\rfloor+\{t\}$.)

We have the following theorem. (As before the potential $Q$ is assumed to satisfy the conditions in \ref{kalops}.)

\begin{thm}\label{flucco}
Let $\lambda(z)=\lambda(|z|)$ be a radially symmetric and $C^6$-smooth function which vanishes outside of a small neighbourhood of $\overline{G}$. Write
\begin{equation}\label{params}x=x(n):=\left\{\frac {Bn} 2\right\},\qquad \dNpara:=(\frac {r_1}{r_2})^2,\qquad \alpha=\alpha(n):=x+\frac {\log\frac {\Delta Q(r_2)}{\Delta Q(r_1)}}{4\log\frac {r_2}{r_1}}.\end{equation}

Let $X_{n}\sim dN(\alpha,\dNpara)$ and define
$Y=Y_n:=(\lambda(r_1)-\lambda(r_2))(X_{n}-\alpha)$ and
\begin{equation}\label{oscterm}\calF_n(t)=\log\mathbb{E}\,e^{tY}, \qquad t \in \mathbb{R}.\end{equation}

Then, as $n \to + \infty$,
\begin{equation}\label{cgf0}
	\log \mathbb{E}_n\,e^{t \fluct_n \lambda} = t(e_\lambda + \hat{e}_{\lambda} ) +\frac{t^2}{2}(v_{\lambda} + \hat{v}_{\lambda}) + \calF_n(t) + \bigO( \frac
{\log^5 n}
{\sqrt{n}} ),
\end{equation}
	uniformly for $|t| \leq \log n$, $t \in \mathbb{R}$.

The coefficients $e_\lambda$ and $v_\lambda$ are given by \eqref{expf} and \eqref{varf} respectively, with $h=\lambda$, while
	\begin{align*}
	& \hat{e}_{\lambda} = (\lambda(r_1)-\lambda(r_2))\frac {\log \frac {\Delta Q(r_2)}{\Delta Q(r_1)}}{4\log\frac {r_2}{r_1}}, \\
	& \hat{v}_{\lambda} = \frac {r_1 \lambda(r_{1})\lfun'(r_1)-r_2 \lambda(r_{2})\lfun'(r_2)}{2}.
	\end{align*}
\end{thm}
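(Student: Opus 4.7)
\emph{Reduction to a one-dimensional sum.} Radial symmetry of both $Q$ and $\lambda$ makes the monomials $\{z^j\}$ simultaneously orthogonal in $L^2(e^{-nQ}\,dA)$ and $L^2(e^{-nQ+t\lambda}\,dA)$. The tilted partition function is therefore diagonal in this basis and
\begin{equation*}
F_{n,\lambda}(t) = -nt\sigma(\lambda) + \sum_{j=0}^{n-1}\log\frac{h_j^{(t)}}{h_j},\qquad h_j^{(t)} := 2\int_0^{\infty} r^{2j+1}\,e^{-nq(r)+t\lambda(r)}\,dr.
\end{equation*}
Using the factorisation \eqref{wassa} and the identity $(Q-V)''(r_k) = 4\Delta Q(r_k)$ from \eqref{obs2}, and invoking Lemma~\ref{neglo} to localise near $|z|=r_1,r_2$, one obtains a uniform two-peak Laplace expansion $h_j^{(t)} = I_1^{(t)}(j) + I_2^{(t)}(j) + \bigO(e^{-cn})$ with
\begin{equation*}
I_k^{(t)}(j) = \frac{e^{-nA}\,r_k^{2(j-nB/2)+1}\,e^{t\lambda(r_k)}}{\sqrt{2n\Delta Q(r_k)}}\Bigl(1 + \sum_{\ell\ge 1}n^{-\ell/2}\Psi^{(k,t)}_{\ell}(j)\Bigr),
\end{equation*}
where $\Psi^{(k,t)}_{\ell}$ is polynomial in $t$, in $(j-nB/2)/\sqrt n$, and in derivatives of $q,\lambda$ at $r_k$; the $C^6$-hypothesis is what allows one to push the expansion to $\ell=2$, the order needed for the subleading coefficients $\hat e_\lambda,\hat v_\lambda$.

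\emph{Splitting the sum; the discrete Gaussian.} Set $m := j - \lfloor nB/2\rfloor$, $M := C\log n$, and decompose $\{0,\ldots,n-1\} = J_-\cup J_{osc}\cup J_+$ according to $m<-M$, $|m|\le M$, $m>M$. For $j\in J_\pm$ the ``wrong-side'' integral is suppressed by $\dNpara^{|m|}$; extracting the dominant single-peak Laplace integral and applying Euler--Maclaurin, with the $\ell=1,2$ corrections retained, yields the smooth piece $te_\lambda+\tfrac{t^2}{2}v_\lambda$ of Theorem~\ref{thereg} together with boundary contributions at $r_1,r_2$ that assemble into $t\hat e_\lambda+\tfrac{t^2}{2}\hat v_\lambda$: $\hat e_\lambda$ originates from the Gaussian-normalisation factor $-\tfrac{1}{2}\log\Delta Q(r_k)$, while $\hat v_\lambda$ arises from the second-order Laplace coefficient combined with the radial derivative $\lambda'(r_k)$. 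For $j\in J_{osc}$ both $I_1^{(t)}$ and $I_2^{(t)}$ are of comparable size; using \eqref{params} one verifies the key identity $I_1^{(0)}(j)/I_2^{(0)}(j) = \kappa\,\dNpara^{m-\alpha}(1+\bigO(M^{2}/n))$ for an explicit $n$-independent constant $\kappa$. Substituting this into the log and applying the completion-of-the-square identity
\begin{equation*}
\dNpara^{(k-\alpha)^2/2}e^{tck} = \dNpara^{\alpha\beta-\beta^2/2}\,\dNpara^{(k-\alpha+\beta)^2/2},\qquad c=\lambda(r_1)-\lambda(r_2),\ \beta = tc/\log\dNpara,
\end{equation*}
identifies the sum $\sum_{j\in J_{osc}}\log(h_j^{(t)}/h_j)$, after cancellation of the boundary $J_\pm$ pieces, as $\calF_n(t) = \log\mathbb{E}[e^{tY_n}]$ plus a deterministic $t$-quadratic remainder that merges with the $J_\pm$ output to complete $\tfrac{t^2}{2}\hat v_\lambda$. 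The truncation $|m|\le M$ introduces only a super-Gaussian tail error $\bigO(\dNpara^{M^2/2})$.

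\emph{Main obstacle.} The central difficulty is the subleading Laplace bookkeeping: obtaining the \emph{exact} constants $\hat e_\lambda,\hat v_\lambda$ requires tracking the $\Psi^{(k,t)}_{1},\Psi^{(k,t)}_{2}$ terms through the Euler--Maclaurin summation and the $J_\pm/J_{osc}$ matching, accounting simultaneously for the $r\,dr$ Jacobian, the Taylor expansion of $\lambda$ at $r_k$, and the edge curvature $\Delta Q(r_k)$. A secondary challenge is propagating the uniform error $\bigO(\log^5 n/\sqrt n)$ over $|t|\le\log n$: the per-summand $\bigO(n^{-1/2})$ Laplace error must be aggregated across $\sim n$ summands by a summation-by-parts argument, while the transitional indices $|m|\approx M$ are controlled via the $\dNpara^{M^2/2}$ tail of the quasi-Gaussian weight.
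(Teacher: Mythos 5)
Your proposal takes a genuinely different route from the paper's. The paper first proves the refined density asymptotics of Theorems~\ref{sr1}--\ref{sr2} for the perturbed potential $\tilde Q = Q - s\lambda/n$, then uses Lemma~\ref{lem11} to write $F_{n,\lambda}(t)=\int_0^t\tilde{\mathbb{E}}_{n,s\lambda}[\fluct_n\lambda]\,ds$ and evaluates the expectation by integrating the density asymptotics over the bulk plus the two $\delta_n$-collars $\calA(r_k,\delta_n)$; the theta function and the discrete-Gaussian interpretation then emerge at the end of Section~\ref{sec4} by identifying the resulting $(\log\theta)'$ term with $\calF_n(t)$. You instead bypass the $1$-point function entirely and work with the exact formula $F_{n,\lambda}(t)=-nt\sigma(\lambda)+\sum_{j<n}\log(h_j^{(t)}/h_j)$, expanding each $h_j^{(t)}$ by a two-peak Laplace method and then doing Euler--Maclaurin over $j$. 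Both are legitimate; yours is more self-contained (it does not need the edge-density theorems as a black box), while the paper's factorizes the analytic work into density asymptotics, which are also of independent interest, plus an essentially elementary integration step.

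That said, as written your proposal is a blueprint rather than a proof, and the parts that are hardest are the ones you label the ``main obstacle'' and leave open. A few concrete points. (1) The Laplace corrections $\Psi_\ell^{(k,t)}$ must be written out explicitly for $\ell=1,2$; since $\dNpara^{(k-\alpha)^2/2}$-weights and the prefactor $1/\sqrt{\Delta Q(r_k)}$ cancel termwise in $h_j^{(t)}/h_j$ (the ratio is at fixed $j$), your claim that $\hat e_\lambda$ ``originates from the Gaussian-normalisation factor'' needs unpacking: the $\log\frac{\Delta Q(r_2)}{\Delta Q(r_1)}$ must instead come from the \emph{relative} weighting $I_1(j)/I_2(j)$ of the two peaks inside each summand, which shifts the mean $\alpha$ of the discrete Gaussian. (2) Your attribution of the deterministic $t^2$ remainder from $J_{\mathrm{osc}}$ to $\hat v_\lambda$ is not consistent with the theorem's split: the paper keeps the $\frac{(\lambda(r_1)-\lambda(r_2))^2}{2\log(r_2/r_1)}$ piece inside $\calF_n(t)$ (cf.~\eqref{brock}), so $\hat v_\lambda$ contains only the $\lambda\lambda'$ boundary terms; these must come from pairing the $(r-r_{k,j})$-linear Taylor coefficient of $\lambda$ with the moving saddle $r_{k,j}-r_k = \bigO((\tau-B/2))$, not from the $J_{\mathrm{osc}}$ quadratic remainder. (3) The uniform $\bigO(\log^5 n/\sqrt n)$ bound over $|t|\le\log n$ requires tracking how the Laplace constants depend on $t$; a bare per-summand $\bigO(n^{-1/2})$ does not sum to $\bigO(n^{-1/2})$ over $\sim n$ terms, so the cancellation mechanism (the $h_j^{(t)}/h_j$ ratio forcing the leading Laplace orders to coincide) must be made explicit. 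None of these gaps appear fatal, and the overall strategy should succeed with effort comparable to the paper's; but in the current form the proposal states rather than verifies the identities that carry the theorem.
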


For two-dimensional ensembles with soft edges, Theorems \ref{thereg} and \ref{flucco} provide first results on the asymptotic distribution of fluctuations for a large class of smooth linear statistics in a regime of disconnected droplets. After this work was completed, further and complementary results were obtained in \cite{ACC,AC:Heine}. A comparison is given in Section \ref{comparison} below.

\begin{rem} It is possible to improve the error term in \eqref{cgf0}. In fact, by keeping closer track, we can prove that the error term has the bound
$\bigO(\frac {(\log n)^{\frac 52}+|t|^{5}}{\sqrt{n} })$ when $|t|\le C\log n$. We believe that the exact order of magnitude of the correction term is $\bigO(1/\sqrt{n})$.
\end{rem}

\subsubsection{Interpretation in terms of the Jacobi theta function} We can rephrase the oscillatory term $\calF_n(t)$ using
a variant of Jacobi's theta function, which plays a central role below.

\begin{defn}[``Jacobi theta function''] The Jacobi theta function is defined by
\begin{align}\label{def of Jacobi theta}
\theta(z;\tau) := \sum_{\ell=-\infty}^{+\infty} e^{2\pi i \ell z}e^{\pi i \ell^{2}\tau}, \qquad z \in \mathbb{C}, \qquad \tau \in i(0,+\infty).
\end{align}
This function satisfies $\theta(-z)=\theta(z)$ and $\theta(z+1;\tau)=\theta(z;\tau)$, i.e. it is even and periodic of period $1$, see e.g.~\cite[Chapter 20]{NIST} or \cite[Chapter 10]{S2} for other properties of $\theta$.

\end{defn}

In terms of the Jacobi theta function, the term $\calF_n(t)$ in \eqref{oscterm} takes the form
\begin{align}\label{brock}\calF_n(t)=\frac{t^2}{2} \frac{(\lambda(r_{1})-\lambda(r_{2}))^{2}}{2 \log(r_{2}/r_{1})} +\log \frac{\theta(\frac{B}{2}n+ \frac{\log \frac{\Delta Q (r_{2})}{\Delta Q(r_{1})}}{4\log (r_{2}/r_{1})} + t\frac{\lambda(r_{1})-\lambda(r_{2})}{2\log (r_{2}/r_{1})}; \frac{\pi i}{\log(r_{2}/r_{1})})}{\theta(\frac{B}{2}n+ \frac{\log \frac{\Delta Q (r_{2})}{\Delta Q(r_{1})}}{4\log (r_{2}/r_{1})}; \frac{\pi i}{\log(r_{2}/r_{1})})}.
\end{align}

The fact that the above expression equals to $\log\mathbb{E}\, e^{tY}$, where $Y$ is the random variable in Theorem \ref{flucco}, is proved at the very end of Section \ref{sec4}.

\begin{rem}
In the multi-cut regime of Coulomb gas ensembles on the real line $\R$, the oscillations are, in general, only quasi-periodic in $n$, see e.g. \cite{Widom1995, DKMVZ1999, BDE2000, Shcherbina, BG2, CFWW}. However, in the particular case where the mass of the equilibrium measure is rational on each interval of its support, these oscillations become periodic, see also e.g. \cite{Marchal}. Interestingly, a similar phenomenon happens in \cite{C, ACCL} as well as in our two-dimensional setting. Indeed, by the periodicity $\theta(z+1;\tau)=\theta(z;\tau)$ we see that $\calF_n(t)$ is periodic in $n$ if and only if $\frac{B}{2}=\sigma(\{z\,;\,|z|\le r_1\})$ is rational.  Moreover, $\calF_n$ is independent of $n$ if and only if $B=2$ or $B=0$, i.e., oscillations are absent on the outer boundary component, and, if $\C\setminus S$ contains a disc about the origin, also the ``innermost one''. Oscillations are thus absent on boundary components of simply connected spectral gaps; on boundary components of ring-shaped gaps, $\calF_n$ oscillates in $n$.
\end{rem}

\begin{rem}
For one-dimensional log-correlated point processes, discrete Gaussians and theta functions are known to emerge in the multi-cut regime, see e.g.~\cite{DIZ1997, DKMVZ1999, G2006, BG2, ClaeysGravaMcLaughlin, BCL1, FK2020, BCL2, BCL3, KM2021, CFWW}. In dimension two, the emergence of theta functions was  conjectured in \cite[Section 1.5]{LSe} and recently
proved in \cite{C} in connection with gap probabilities. The theta function also appears in \cite{ACCL} for the moment generating function of disk counting statistics near a spectral gap with hard edges. Interestingly, in the present paper as well as in \cite{C, ACCL} one obtains the theta function $\theta(\cdot;\tau)$ with the same parameter $\tau =\frac{\pi i}{\log(r_{2}/r_{1})}$. In particular, in the case where the gap is an annulus, the parameter $\tau$ depends only on the conformal type (the ratio $r_2/r_1$) of the gap; $\tau$ does not depend on the potential or on the nature of the edges (i.e. soft or hard).
\end{rem}

\subsubsection{Main strategy} In order to prove Theorem \ref{flucco} and Theorem \ref{thereg}, it is convenient
besides $Q$ to consider perturbed potentials of the form
$$\tilde{Q}=\tilde{Q}_{n,sh}=Q-\frac {sh}n,\qquad (s\in\R).$$
Let us write $\tilde{R}_n=\tilde{R}_{n,sh}$ for the $1$-point function in potential $\tilde{Q}$.

We denote by $\tilde{\mathbb{E}}_{n}$ the expectation with respect to $\tilde{Q}$ and write $\mathbb{E}_n$ for the expectation with respect to $Q$. Thus for example,
$$\tilde{\mathbb{E}}_n\fluct_n f=\int f\tilde{R}_n\, dA-n\int f\, d\sigma.$$

The following lemma is well-known, see e.g. \cite{AM} as well as Subsection \ref{fluco} below.

\begin{lem}\label{lem11} If $h(z)$ is a bounded, measurable function on $\C$, then
$F_{n,h}(t)=\log \mathbb{E}_n(e^{t\fluct_n h})$
obeys
$$F_{n,h}(t)=\int_0^t \tilde{\mathbb{E}}_{n,sh}(\fluct_n h)\, ds=\int_0^t\, ds\int_\C h(z)(\tilde{R}_{n,sh}(z)-n\Delta Q(z)\1_S(z))\, dA(z).$$
\end{lem}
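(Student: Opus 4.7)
The plan is the standard Hubbard--Stratonovich / differentiation-in-coupling-constant argument. The key observation is that multiplying the Gibbs weight by $e^{t \sum_j h(z_j)}$ is the same as replacing the potential $Q$ by the perturbed potential $\tilde Q_{n,th}=Q-\frac{th}{n}$, so the CGF reduces to a ratio of partition functions that can be analysed by differentiating in $t$.

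First I would rewrite
\begin{equation*}
\mathbb{E}_n e^{t\fluct_n h}=e^{-tn\sigma(h)}\,\mathbb{E}_n e^{t\sum_j h(z_j)}=\frac{e^{-tn\sigma(h)}}{Z_n}\int_{\C^n}e^{-H_n+t\sum_{j}h(z_j)}\,dA_n=\frac{e^{-tn\sigma(h)}\,\tilde Z_{n,th}}{Z_n},
\end{equation*}
where $\tilde Z_{n,sh}=\int_{\C^n}\exp(-H_n+s\sum_j h(z_j))\,dA_n$ is the partition function associated with $\tilde Q_{n,sh}$. Taking logarithms gives $F_{n,h}(t)=-tn\sigma(h)+\log\tilde Z_{n,th}-\log Z_n$, which is smooth in $t$ since $h$ is bounded (so the integrand is dominated uniformly for $t$ in compact sets, justifying differentiation under the integral).

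Next I differentiate in $t$. From $-\tilde H_{n,th}=-H_n+t\sum_j h(z_j)$ we get $\tfrac{d}{dt}\log\tilde Z_{n,th}=\tilde{\mathbb{E}}_{n,th}\bigl[\sum_{j=1}^n h(z_j)\bigr]=\int_\C h(z)\,\tilde R_{n,th}(z)\,dA(z)$, using the defining property of the $1$-point function. Combining,
\begin{equation*}
\frac{d}{dt}F_{n,h}(t)=-n\sigma(h)+\int_\C h\,\tilde R_{n,th}\,dA=\tilde{\mathbb{E}}_{n,th}\fluct_n h,
\end{equation*}
and by \eqref{def of eq measure} the term $n\sigma(h)$ equals $\int_\C h(z)\,n\Delta Q(z)\,\1_S(z)\,dA(z)$, so the derivative matches the displayed integrand.

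Finally, since $F_{n,h}(0)=\log\mathbb{E}_n[1]=0$, integrating the above identity from $0$ to $t$ yields both equalities in the lemma. There is no real obstacle: the only point needing care is the legitimacy of differentiating $\log\tilde Z_{n,th}$ under the integral sign, which follows from the boundedness of $h$ (hence uniform integrable domination on compact $t$-intervals), together with the standard fact that $\tilde Z_{n,th}>0$ for all real $t$ because the singularity of $H_n$ is integrable and $Q$ grows faster than $2\log|z|$ at infinity.
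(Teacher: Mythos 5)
Your proposal is correct and follows essentially the same route as the paper: both identify the perturbed partition function $\tilde Z_{n,sh}$ with $Z_n\,\mathbb{E}_n(e^{s\trace_n h})$, differentiate the log in the coupling parameter to obtain $\tilde{\mathbb{E}}_{n,sh}(\fluct_n h)=\frac{d}{ds}F_{n,h}(s)$, and then integrate from $0$ to $t$ using $F_{n,h}(0)=0$. The paper reaches the derivative formula by first computing $\tilde{\mathbb{E}}_{n,sh}(\fluct_n f)$ for general $f$ and then specializing $f=h$, whereas you differentiate $\log\tilde Z_{n,th}$ directly and invoke the definition of the $1$-point function; these are the same computation in a slightly different order, and your added remark on justifying differentiation under the integral sign is a sound (if routine) supplement.
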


In our proof of Theorem \ref{thereg} we will set $h=\re g$ where $g$ is analytic we also rely on the limit Ward identity from \cite{AM}, as explained in Section \ref{flureg}. In the case of Theorem \ref{flucco} we will set $h=\lfun$ radially symmetric and use fine asymptotics for the density $\tilde{R}_{n,s\lfun}$ near the edge of the gap $G$. We now turn to a detailed description of the latter results.

\subsection{Fine asymptotics for the edge density} \label{perl}
In the following we fix a radially symmetric and $C^6$-smooth function $\lambda(z)=\lambda(r)$, $r=|z|$. We will assume that $\lambda$ vanishes
identically outside a small neighbourhood of $\overline{G}$, and in particular near all other components of $\C\setminus S$ (except for $G$).

We also fix an arbitrary real number $s$ and put
\begin{equation}\label{pertpot}\tilde{Q}(z)=\tilde{Q}_{n,s\lfun}(z):=Q(z)-\frac {s\lfun(z)} n .\end{equation}

We now state three results which substantially improve the error-function asymptotic in Proposition \ref{crude}, for the class of ensembles under study.

\begin{thm}[``$r_1$-case'']\label{sr1} Let $G=\{z\in\C: r_1<|z|< r_2\}$ be a gap in the droplet $S$, consider $\alpha \in [0,2\pi)$  fixed and put
$$z=e^{i\alpha}\bigg(r_1+\frac t {\sqrt{2n\Delta Q(r_1)}}\bigg),\qquad (t\in\R,\,|t|\le \log n).$$
As $n \to + \infty$, the 1-point function in potential \eqref{pertpot} satisfies
\begin{multline}\label{finalr1}
\tilde{R}_{n,s\lambda}(z) = n\Delta Q(r_1) \frac{\erfc(t)}{2} +
 \frac{\sqrt{n \Delta Q(r_1)}}{\sqrt{2\pi} \, r_1} e^{-t^2}
 \Bigg[\frac s 2 \bigg(r_1\lfun'(r_1)+\frac {\lfun (r_1) -\lfun (r_2)} {\log(r_{2}/r_{1})}\bigg)\\
 + \frac{1}{6} (t^2-2) +r_1\frac{\partial_{\mathrm{n}}\Delta Q(r_1)}{\Delta Q(r_1)}\bigg(\frac{1}{2} \sqrt{\pi} t\erfc(t)e^{t^2} -\frac{1}{12}(2t^2+5)\bigg)+\frac{\log\frac{\Delta Q(r_2)}{\Delta Q(r_1)}}{4\log (r_{2}/r_{1})}\\
  + \frac{1}{2 \log(r_{2}/r_{1})} (\log \theta)' \bigg( \frac{B}{2}n + \frac{\log \frac{\Delta Q(r_{2})}{\Delta Q(r_{1})}}{4 \log(r_{2}/r_{1})} + \frac{s(\lambda(r_{1})-\lambda(r_{2}))}{2\log (r_{2}/r_{1})} ; \frac{\pi i}{\log(r_{2}/r_{1})} \bigg)
 \Bigg] + \bigO(\log^4 n).
\end{multline}

\end{thm}

(In \eqref{finalr1} and below, a prime denotes differentiation with respect to the first argument: $\theta'(z;\tau)=\frac {\d\theta}{\d z}(z;\tau)$.)

Figure \ref{fig7} depicts the error term in \eqref{finalr1} at $z=r_{1}$ with $s=0$ for a particular potential.

\begin{figure}[h!]
\includegraphics[width=11cm]{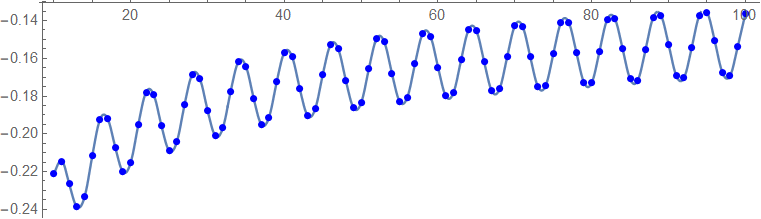}\\[0.2cm]
\includegraphics[width=11cm]{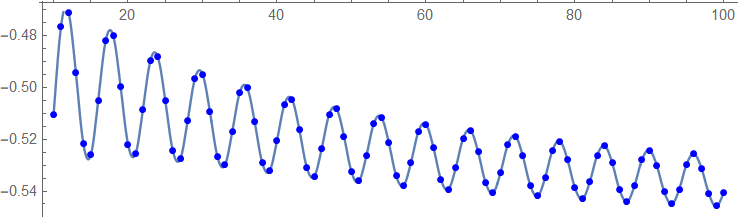}
\caption{Both pictures are made for the potential $Q=a|z|^6-b|z|^4+c|z|^2$ with $a=0.1$, $b=0.8$ and $c=1.8$. In the top picture, the dots represents $(R_{n}(r_{1})-\frac{n}{2}\Delta Q(r_{1}))/\sqrt{n \Delta Q(r_{1})}$ for integer values of $n$. In the bottom picture, the dots represents $R_{n}(r_{1})-\mathcal{E}_{n}$ for integer values of $n$, where $\mathcal{E}_{n}$ is the right-hand side of \eqref{finalr1} with $z=r_{1}$, $s=0$ and without the error term. (Between the dots, we have used spline interpolation merely to add some structure to the illustrations.) The bottom picture suggests that $R_{n}(r_{1})-\mathcal{E}_{n}=\mathcal{O}(1)$ as $n \to + \infty$, which is consistent with Theorem \ref{sr1} (the picture also suggests that the $\mathcal{O}(\log^{4} n)$-term in \eqref{finalr1} is in \\ fact $\mathcal{O}(1)$).}
\label{fig7}
\end{figure}

\begin{rem} For $t$ large negative we recover the known bulk asymptotics $\tilde{R}_{n,s\lambda}=n\Delta Q+o(n)$ from \cite{A2} (see also \eqref{bulkr} below). In fact already for $t = -C\sqrt{\log n}$ we have
\begin{align*}
\tilde{R}_{n,s\lambda}(z) = n\Delta Q(r_1) +
\frac{\sqrt{n}}{\sqrt{2}} \frac{\nlap Q(r_1)}{\sqrt{\Delta Q(r_1)}} t  + \bigO(\log^4 n), \qquad \mbox{as } n \to + \infty.
\end{align*}
Using now $\Delta Q(r_1)=\Delta Q(z)-\frac{t}{\sqrt{2n\Delta Q(r_1)}}\nlap Q(r_1)+\bigO(\log n)$, we get
\begin{align*}
\tilde{R}_{n,s\lambda}(z) = n\Delta Q(z) + \bigO(\log^4 n), \qquad \mbox{as } n \to + \infty.
\end{align*}

\end{rem}

\begin{thm}[``$r_2$-case'']\label{sr2} Let $G=\{z\in\C: r_1<|z|< r_2\}$ be a gap in the droplet $S$, consider $\alpha \in [0,2\pi)$ fixed and
$$z=e^{i\alpha}\bigg(r_2+\frac t {\sqrt{2n\Delta Q(r_2)}}\bigg),\qquad (t\in\R,\,|t|\le \log n).$$
As $n \to + \infty$, the 1-point function in potential \eqref{pertpot} satisfies
\begin{multline}\label{finalr2}
\tilde{R}_{n,s\lambda}(z) = n \Delta Q(r_2) \frac{\erfc(-t)}{2} +
\frac{\sqrt{n \Delta Q(r_2)}}{\sqrt{2\pi} \, r_{2}} e^{-t^2}\Bigg[ -\frac s 2\bigg(r_2\lambda'(r_2)+\frac {\lfun(r_1)-\lfun(r_2)}{\log(r_{2}/r_{1})}\bigg) \\
+ \frac{1}{6} (2-t^2) +  r_{2}\frac{\partial_{\mathrm{n}} \Delta Q(r_2)}{\Delta Q(r_2)}\bigg(\frac{1}{2} \sqrt{\pi} t\erfc(-t)e^{t^2} +\frac{1}{12}(2t^2+5)\bigg) -\frac{\log\frac{\Delta Q(r_2)}{\Delta Q(r_1)}}{4\log (r_{2}/r_{1})}  \\
- \frac{1}{2\log(r_{2}/r_{1})} (\log \theta)' \bigg( \frac{B}{2}n + \frac{\log \frac{\Delta Q(r_{2})}{\Delta Q(r_{1})}}{4 \log(r_{2}/r_{1})} + \frac{s(\lambda(r_{1})-\lambda(r_{2}))}{2\log (r_{2}/r_{1})} ; \frac{\pi i}{\log(r_{2}/r_{1})} \bigg) \bigg]
+ \bigO(\log^4 n).
\end{multline}

\end{thm}

Near the outer boundary $\d U=\{|z|=b_N\}$, we obtain the following simpler but yet noteworthy result. Since the parameter $s$ is of lesser interest in this case,
we state it just with respect to the 1-point function $R_n$ in (unperturbed) potential $Q$.

\begin{thm}[``Outer boundary case''] \label{ur3} Let $\{|z|=b_N\}$ be the boundary of the unbounded component $U$ of $\hat{\C}\setminus S$, and put
$$z=b_N+\frac t {\sqrt{2n\Delta Q(b_N)}},\qquad (t\in\R,\,|t|\le\log n).$$
Then as $n\to+\infty$, the $1$-point function $R_n$ in potential $Q$ satisfies
\begin{multline} \label{bolv}
R_n(z)=n\Delta Q(b_N)\frac {\erfc t} 2+\frac {\sqrt{n\Delta Q(b_N)}}{\sqrt{2\pi}\,b_N}e^{-t^2} \bigg[\frac 1 6(t^2-2)
 \\ +b_N\frac {\partial_{\mathrm{n}}\Delta Q(b_N)}{\Delta Q(b_N)}
\Big(\frac {\sqrt{\pi}} 2\,
t\, e^{t^2}\erfc t-\frac 1 {12}(2t^2+5)\Big)\bigg]+\bigO(\log^4 n).
\end{multline}
\end{thm}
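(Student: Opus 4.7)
The proof is a simplified instance of the analyses underlying Theorems~\ref{sr1} and~\ref{sr2}: near the outer boundary $\{|z|=b_N\}$, each weighted monomial $p_j(z)=z^j e^{-\frac n2 q(r)}$ has a \emph{single} dominant peak (approached as $j\to n-1$) rather than the twin peaks of Lemma~\ref{neglo}, so no theta-function resummation is needed and the perturbation parameter $s$ is absent. By rotational symmetry we set $r=|z|=b_N+t/\sqrt{2n\Delta Q(b_N)}$ on the positive real axis and write
\[
R_n(r)=\sum_{j=0}^{n-1}\frac{r^{2j}e^{-nq(r)}}{\|p_j\|^2},\qquad \|p_j\|^2=2\int_0^\infty\rho^{2j+1}e^{-nq(\rho)}\,d\rho.
\]
Let $r_j$ denote the Laplace peak, defined by $r_jq'(r_j)=(2j+1)/n$. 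Since $\sigma(\C)=1$ forces $b_N q'(b_N)=2$ (by the outer-boundary analogue of Proposition~\ref{paralem}), one finds $b_N-r_j=\tfrac{2n-1-2j}{4nb_N\Delta Q(b_N)}+O(n^{-2})$, so the rescaled peak positions $\tau_k:=\sqrt{2n\Delta Q(b_N)}(b_N-r_{n-1-k})$ form, up to $O(n^{-3/2})$, the midpoint grid on $(0,\infty)$ with step $\Delta\tau=(b_N\sqrt{2n\Delta Q(b_N)})^{-1}$ and $\tau_0=\Delta\tau/2$.

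After localizing the sum to $0\le k\le C\log n$ via the single-peak version of Lemma~\ref{neglo} (which discards an $O(e^{-c\log^2 n})$ remainder), each summand is expanded by applying Laplace's method to $\|p_j\|^2$ to sub-leading order, and Taylor-expanding $\phi(r):=2j\log r-nq(r)$ about $r_j$ up to the quartic term (using $r-r_j=O(\log n/\sqrt n)$). This yields, uniformly in $k$,
\[
\frac{r^{2j}e^{-nq(r)}}{\|p_j\|^2}=\sqrt{\tfrac{n\Delta Q(b_N)}{2\pi b_N^2}}\,e^{-(t+\tau_k)^2}\Bigl[1+\tfrac{1}{\sqrt n}\psi_k(t)+O\bigl(\tfrac{\log^4 n}{n}\bigr)\Bigr],
\]
where $\psi_k(t)$ is an explicit polynomial in $(t,\tau_k)$ whose coefficients are built from $1/b_N$, $\partial_{\mathrm{n}}\Delta Q(b_N)/\Delta Q(b_N)$, and other $q$-derivatives at $b_N$ (entering via the identities $4r\Delta Q=rq''+q'$ and $4\partial_{\mathrm{n}}\Delta Q=q'''+q''/r-q'/r^2$, plus the standard sub-leading Laplace constant). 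The $\sqrt n$-correction is driven by the cubic Taylor term $\tfrac16\phi'''(r_j)(r-r_j)^3$, which is precisely of order $1/\sqrt n$ in the relevant range.

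Finally, summation in $k$ is carried out by the midpoint rule,
\[
\sum_{k\ge 0}f(\tau_k)=\tfrac{1}{\Delta\tau}\int_0^\infty f(u)\,du+O(\Delta\tau),
\]
applied to $f(u)=e^{-(t+u)^2}$ (recovering the leading $n\Delta Q(b_N)\tfrac{\erfc t}{2}$) and to $\psi_k(t)e^{-(t+\tau_k)^2}$ (producing the $\sqrt n$-correction). The basic integrals
$\int_0^\infty e^{-(t+u)^2}du=\tfrac{\sqrt\pi}{2}\erfc(t)$,
$\int_0^\infty u\,e^{-(t+u)^2}du=\tfrac12 e^{-t^2}-\tfrac{\sqrt\pi}{2}t\erfc(t)$, and
$\int_0^\infty u^2 e^{-(t+u)^2}du=\tfrac{\sqrt\pi}{4}(1+2t^2)\erfc(t)-\tfrac{t}{2}e^{-t^2}$
then combine to give, after collecting coefficients, exactly the bracket displayed in~\eqref{bolv}---the polynomial-in-$t$ terms $\tfrac16(t^2-2)$ and $-\tfrac1{12}(2t^2+5)$ arising from the $e^{-t^2}$-pieces of these integrals, and the $\tfrac{\sqrt\pi}{2}t\,e^{t^2}\erfc(t)$ term arising from the $\erfc$-pieces (which survive after the $e^{-t^2}$ prefactor is pulled out). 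The hard part is the book-keeping: identifying the correct coefficients of $\psi_k$ from the cubic and quartic Taylor expansion together with the sub-leading Laplace constant, and verifying that the various $\erfc$-type and $e^{-t^2}$-type contributions assemble into the precise combination in~\eqref{bolv}. This computation is strictly simpler than the corresponding one for Theorem~\ref{sr2}, since the second ($r_1$) peak, the perturbation $s\lambda$, and the theta-function contribution are all absent.
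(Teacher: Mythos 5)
Your proposal reproduces the strategy the paper sketches for Theorems \ref{sr1} and \ref{sr2} and applies it to the outer boundary; since the paper explicitly omits a detailed proof of Theorem \ref{ur3} (stating only that it is ``similar but simpler''), this is exactly the right route, and your observation that the midpoint placement of the rescaled peaks $\tau_k=(k+\tfrac12)\Delta\tau$ follows from the $(2j+1)/n$ normalization is a clean alternative to the paper's $2j/n$-peak, which in the $r_1$/$r_2$-cases instead forces an Euler--Maclaurin boundary term (the $-\tfrac12 e^{-t^2}$ correction in \eqref{comp0}). Both bookkeepings give the same answer.

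There is, however, a genuine error in the localization range: you truncate to $0\le k\le C\log n$, but with $\Delta\tau=\bigl(b_N\sqrt{2n\Delta Q(b_N)}\bigr)^{-1}\asymp n^{-1/2}$ this only covers $\tau_k\lesssim \log n/\sqrt n$, which cannot represent the Gaussian integral $\int_0^\infty e^{-(t+u)^2}\,du$. The correct truncation, as in the paper's \eqref{Kgap} and \eqref{taub}, is $|\tau-1|\le C\delta_n$ with $\delta_n=\log n/\sqrt n$, i.e.~$0\le k\le Cn\delta_n\asymp\sqrt n\log n$; only then does the discarded tail carry the stated $\bigO(e^{-c\log^2 n})$ error, and only then does the midpoint sum reconstruct $\tfrac{1}{\Delta\tau}\int_0^\infty e^{-(t+u)^2}\,du$. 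Relatedly, you define $\phi(r)=2j\log r-nq(r)$ but take $r_j$ as the critical point of $(2j+1)\log r-nq(r)$; then $\phi'(r_j)=-1/r_j\ne 0$, so either replace $\phi$ by $(2j+1)\log r-nq(r)$ (consistent with absorbing the Jacobian factor $r$ into the exponent, which is what makes the midpoint grid emerge) or explicitly keep the linear term in the Taylor expansion. Neither issue is structural, but both need fixing before the computation of $\psi_k$ — which you defer — could be carried out consistently.

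Two small checks in your favour: the coefficient $\sqrt{n\Delta Q(b_N)/(2\pi b_N^2)}$ is correct, since $g_j''(r_j)\to 4\Delta Q(b_N)$ and the Laplace normalization contributes $\sqrt{g_j''/(2\pi/n)}\,/(2h)$; and the identity $b_Nq'(b_N)=2$ follows from Proposition \ref{paralem} applied to $\sigma(\C)=1$, exactly as you say. Your three auxiliary integrals are all correct.
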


Our proof of Theorem \ref{ur3} is similar to our proofs of Theorem \ref{sr1} and Theorem \ref{sr2} but is simpler, since no theta functions enter the picture. We shall therefore omit a detailed proof of Theorem \ref{ur3}.

\begin{rem}
Theorem \ref{ur3} is related to a question due to Lee and Riser \cite{LR}, who considered the elliptic Ginibre ensemble, associated with the potential $Q=c(|z|^2-\alpha\re(z^2))$, where $c=\Delta Q>0$
and
$\alpha$ is fixed with $0\le \alpha<1$. The droplet is the elliptic disc
\begin{align*}
S_\alpha=\{z=x+iy\,;\,\frac {1-\alpha}{1+\alpha}x^2+\frac {1+\alpha}{1-\alpha}y^2\le \frac 1 c\}.
\end{align*}

In \cite[Theorem 1.1]{LR} ($R_{n}$ here corresponds to $\pi c n \rho_{n}$ in \cite{LR}) it is shown that if $p$ is a point on $\d S_\alpha$, then for $|t|\le \log n$,
\begin{equation}\label{ellipse}
R_n\bigg(p+\frac t {\sqrt{2nc}}\textrm{n}_1(p)\bigg)=nc\frac {\erfc t} 2+\sqrt{\frac {nc} {2\pi}}\,\kappa(p)\,e^{-t^2}\tfrac 1 6 (t^2-2)+e^{-t^2}P(t;p)+\bigO(n^{-\beta}),
\end{equation}
where $\kappa(p)$ is the curvature of $\d S_\alpha$ at $p$, $\textrm{n}_1(p)$ is the outwards unit normal, and
where $P(t;p)$ is an explicit fifth-degree polynomial in $t$, which depends on the curvature $\kappa$ and its tangential derivative $\d_{\textrm{s}}\kappa$ at the point $p$. The number $\beta$ is arbitrary with $0<\beta<\frac 1 2$.

Since the curvature of the circle $|z|=b_N$ is just $\kappa=1/b_N$, and since $\nlap Q(b_{N})=0$ if $Q=c(|z|^2-\alpha\re(z^2))$, the $\sqrt{n}$-term in \eqref{ellipse}
matches our formula
\eqref{bolv} in the radially symmetric case $\alpha=0$.

In \cite{LR}, the authors conjecture that the $\sqrt{n}$-term in \eqref{ellipse} is universal for a ``general droplet with real analytic boundary''. At an outer boundary, this may very well be the case for the class of \textit{Hele-Shaw potentials}, for which $\Delta Q$ is constant in a neighbourhood of the droplet, but in view of \eqref{bolv} the conjecture is false at an outer boundary, for potentials such that the normal derivative $\dn\Delta Q$ is non-vanishing there.
\end{rem}

\begin{rem} In \cite{HW22,HW}, an asymptotic expansion in powers of $1/n$, for the $n$:th orthonormal polynomial near the outer boundary, is introduced. By contrast, in our derivations of fine edge asymptotics
up to $\bigO(\log^4 n)$, we do not require any knowledge of such an expansion beyond the leading term. The success of our method rather depends on a sufficiently good localization of the peak-points of the weighted polynomials. However, we expect that subleading terms from the expansion in \cite{HW}
will enter the picture in a finer analysis of the $O(1)$-term in the expansion of $R_n(z)$, such as given in
\eqref{ellipse} for the case of the elliptic Ginibre potential. In fact, a similar phenomenon has been observed for one-dimensional point-processes. For instance in \cite{GFF} the leading-order asymptotics of the Hermite polynomials is used to derive subleading correction terms for the one-point density.
\end{rem}

\subsection{Two-point correlations and the weighted Szeg\H{o} kernel}
Asymptotics for $K_n(z,w)$ with $z$ and $w$ near the boundary of the unbounded component $U$ have been studied in \cite{FJ,AC,ADM}; cf.~also \cite{ES}, for example. In \cite{AC}, some of the authors found a universal asymptotic formula in terms of the (unweighted) Szeg\H{o} kernel $S^U(z,w)$, which is the reproducing kernel for the Hardy space $H^2_0(U)$ of all holomorphic functions on $U$ which vanish at infinity, equipped with the arc-length norm $\int_{\d U}|f|^2\,|dz|$. (In the setting \eqref{union}, $S^U(z,w)$ takes the form $S^U(z,w) = \smash{\frac{1}{2\pi}}\smash{\sum_{\ell=-\infty}^{\smash{-1}}} z^{\ell} \overline{w}^{\ell}b_{N}^{-2\ell-1}$, see also \cite[eq (1.20)]{AC}.)

It is shown in \cite{AC} that
\begin{equation}\label{forex}K_n(z,w)\sim \sqrt{2\pi n}\Delta Q(z)^{\frac 1 4}\Delta Q(w)^{\frac 1 4}S^U(z,w)
\qquad \text{if}\qquad z,w\in \d U,\, z\ne w,\end{equation}
where the symbol $\sim$ indicates the relation ``asymptotically as $n\to\infty$ and up to $n$-dependent cocycles''.

We now reconsider the question of boundary asymptotics of $K_n(z,w)$, but for $z$ and $w$ in a microscopic neighbourhood of the boundary of the gap $G$. Again $K_n(z,w)$ is of order $\bigO(\sqrt{n})$, but in this case we do not obtain convergence to a limiting kernel, but rather an oscillatory behaviour in $n$. It turns out that the phenomenon can be conveniently described using a certain weighted Szeg\H{o}-kernel, such that the oscillations are incorporated in the weight.

\begin{defn} By the \emph{$n$-weighted Szeg\H{o} kernel} $\calS^G(z,w;n)$ pertaining to the gap $G=\{r_1<|z|<r_2\}$ and the parameter (``number of particles'') $n$, we mean the analytic function of $z$ and $\bar{w}$, defined for $z,w\in G$ by
\begin{equation}\label{wsg}\calS^G(z,w;n):=\frac 1 {2\pi}\sum_{\ell=-\infty}^{+\infty}\frac {(z\bar{w})^\ell}{\frac{{r_1}^{2\ell+1-2x}}{\sqrt{\Delta Q(r_1)}}+\frac {r_2^{2\ell+1-2x}}{\sqrt{\Delta Q(r_2)}}},
\qquad x=x(n):=\left\{\frac {B n} 2\right\}.
\end{equation}
\end{defn}

\begin{rem} The kernel \eqref{wsg} can be recognized as the reproducing kernel for the weighted Hardy space $H^2(G;n)$ consisting of all analytic functions on $G$ such that
	\begin{equation}\label{wnorm}\|f\|_{H^2(G;n)}^2:=\int_{\d G}|f(z)|^2|z|^{-2x(n)}(\Delta Q(z))^{-\frac 1 2}\,|dz|<\infty,\end{equation}
	where we have used the usual convention of identifying a Hardy-space function with its (a.e.) nontangential boundary function.

In this connection, we remark that the natural weighted counterpart
for the unbounded component $U$ of $\C\setminus S$ is
	$$\calS^U(z,w):=e^{\frac 1 2(\calH(z)+\overline{\calH(w)})}\,S^U(z,w),$$ where $S^U$ is the reproducing kernel for $H^2_0(U)$
and $\calH$ is the holomorphic function on $U$ which satisfies $\re\calH=\log\sqrt{\Delta Q}$ on $\d U$ and $\im\calH(\infty)=0$. It is easy to check that $\calS^U(z,w)$ is the reproducing kernel for the weighted Hardy space $H^2(U;n)$, obtained by replacing ``$G$'' by ``$U$'' in \eqref{wnorm}.

Since $B=2$ at the outer boundary we have $x(n)=0$ there, so $H^2(U;n)$ is independent of $n$. The kernel $\calS^U(z,w)$ is in many ways more natural than $S^U$; for example the asymptotic formula \eqref{forex} simplifies to
$$K_n(z,w)\sim \sqrt{2\pi n}\cdot \calS^U(z,w)$$ when $z,w\in\d U$ and $z\ne w$.
\end{rem}

Let us denote $N(G,\delta)=\{z\,:\, \dist(z,G)<\delta\}$. We have the following theorem.

\begin{thm} \label{thmr1r2} Let
$z,w\in N(G,\frac{\log n}{\sqrt{n}})$ be such that $||z|-|w||>c$ for some constant $c>0$.
	Then, as $n \to \infty$, we have
	\begin{multline}K_n(z,w)=\sqrt{2\pi n} \cdot \calS^G(z,w;n)\cdot (z\bar{w})^m \cdot (r_1r_2)^{-\tfrac{Bn}{2}}
	\\ \times
	e^{\frac n 2(Q(r_1)-Q(z))+\frac n 2(Q(r_2)-Q(w))}+
	\bigO(\log^3 n),
	\end{multline}
where $m=\lfloor Bn/2\rfloor$ and $x=\{Bn/2\}$.
\end{thm}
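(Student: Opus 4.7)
The plan is to analyze the orthogonal polynomial expansion
$$K_n(z,w) = \sum_{j=0}^{n-1} \frac{p_j(z)\overline{p_j(w)}}{\|p_j\|^2}$$
by computing each norm $\|p_j\|^2$ via Laplace's method at the two peak points $r_{1,j}, r_{2,j}$ from Lemma \ref{neglo}, and then evaluating the sum. First I would restrict to indices $j$ with $|j - Bn/2| \le C\sqrt{n}\log n$: outside this window Lemma \ref{neglo} shows that $p_j(z)$ is exponentially small whenever $z \in N(G, \log n/\sqrt n)$ (both peak radii lie far from $|z|$), so those contributions are absorbed in the $\bigO(\log^3 n)$ error.

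For the relevant $j$, I write $j = m + \ell$ with $m = \lfloor Bn/2\rfloor$ and $x = \{Bn/2\}$, so that $2j - nB + 1 = 2\ell + 1 - 2x$. Using \eqref{wassa} and passing to polar coordinates,
$$\|p_j\|^2 = 2 e^{-nA} \int_0^\infty r^{2\ell+1-2x} e^{-n(Q-V)(r)}\, dr.$$
The integrand has sharp Gaussian-type maxima at $r_1$ and $r_2$. Expanding radially near $r_k$ and using $q'(r_k) = B/r_k$ from \eqref{obs2} together with the identity $\Delta Q = \tfrac14(q''(r) + q'(r)/r)$ for radial $Q$, one obtains $(Q-V)(z) = 2\Delta Q(r_k)(|z|-r_k)^2 + \bigO((|z|-r_k)^3)$. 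A Gaussian approximation at both peaks, uniform for $|\ell - x| \le C\sqrt n \log n$, then yields
$$\|p_j\|^2 = e^{-nA}\sqrt{\frac{2\pi}{n}}\left(\frac{r_1^{2\ell+1-2x}}{\sqrt{\Delta Q(r_1)}} + \frac{r_2^{2\ell+1-2x}}{\sqrt{\Delta Q(r_2)}}\right)\bigl(1 + \bigO(\log^3 n/\sqrt n)\bigr).$$
Substituting into the sum, using $p_j(z)\overline{p_j(w)} = (z\bar w)^{m+\ell} e^{-\frac n2(Q(z)+Q(w))}$ and the relation $2A = Q(r_1) + Q(r_2) - B\log(r_1 r_2)$ from \eqref{obs1}, produces exactly the claimed prefactor multiplied by the series
$$\frac{1}{2\pi}\sum_{\ell} \frac{(z\bar w)^\ell}{\frac{r_1^{2\ell+1-2x}}{\sqrt{\Delta Q(r_1)}} + \frac{r_2^{2\ell+1-2x}}{\sqrt{\Delta Q(r_2)}}} = \calS^G(z,w;n).$$
The hypothesis $||z|-|w|| > c$ enters precisely here: it guarantees absolute convergence of the Szegő series (the ratio of consecutive terms behaves like $|z\bar w|/r_2^2 < 1$ as $\ell \to +\infty$, and symmetrically as $\ell \to -\infty$) and makes the extension of the truncated $\ell$-range to all of $\Z$ an exponentially small error.

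The main obstacle is making the Laplace error uniformly $\bigO(\log^3 n/\sqrt n)$ across the full window of indices. This requires careful bookkeeping of (i) the drift of the peak positions $r_{k,j}$ as $\ell$ varies over a range of length $\sqrt n \log n$, (ii) the cubic Taylor corrections to $Q-V$ at each $r_k$, and (iii) the evaluation of $p_j(z)\overline{p_j(w)}$ for $z,w$ at distance up to $\log n/\sqrt n$ from $\partial G$. Because only the leading $\sqrt{2\pi n}$-order contribution is needed here — not the subleading $\bigO(1)$ correction that produces the theta-function terms in Theorems \ref{sr1}--\ref{sr2} — the required estimates are strictly simpler than, and a byproduct of, the analysis already carried out for those earlier theorems.
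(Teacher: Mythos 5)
Your plan follows the paper's proof in Section \ref{sec5} almost exactly: truncate to $|\tau - B/2|\le C\delta_n$, Laplace-approximate $\|p_j\|^2$ at the two peaks, substitute, assemble the prefactor via $2j-nB=2\ell-2x$ and $2A=Q(r_1)+Q(r_2)-B\log(r_1r_2)$, and extend the $\ell$-sum to $\Z$ using the exponential decay ensured by $r_1^2<|z||w|<r_2^2$ (which your hypothesis $||z|-|w||>c$ guarantees with a uniform margin). One point of care, though: as written, your norm formula
\begin{equation*}
\|p_j\|^2 = e^{-nA}\sqrt{\tfrac{2\pi}{n}}\left(\tfrac{r_1^{2\ell+1-2x}}{\sqrt{\Delta Q(r_1)}}+\tfrac{r_2^{2\ell+1-2x}}{\sqrt{\Delta Q(r_2)}}\right)\bigl(1+\bigO(\log^3 n/\sqrt n)\bigr)
\end{equation*}
is \emph{not} valid uniformly over the whole window $|\ell|\le C\sqrt n\log n$, and no amount of sharpening the Laplace estimate will make it so. The true peaks $r_{k,j}$ drift by $\bigO(\ell/n)$, and replacing $g_j(r_{k,j})$ by $g_j(r_k)$ in $e^{-n g_j(\cdot)}$ produces factors of the form $e^{\pm\mathcal{C}\ell^2/(2n)}$; once $\ell$ is of order $\sqrt n$ these are $\bigO(1)$, not $1+\bigO(\log^3 n/\sqrt n)$.

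This is precisely why the paper's Lemma \ref{c1c2lem} (eq.\ \eqref{iitau}) records the $\ell$-dependent correction factors $e^{\pm\mathcal{C}\ell^2/(2n)}$ and the exact prefactor $e^{-\frac n2(g_j(r_{1,j})+g_j(r_{2,j}))}$ explicitly, and only simplifies them to your cleaner form \emph{after} inserting the norm into the kernel sum, where the geometric decay of the summand in $|\ell|$ effectively localizes to $|\ell|\lesssim\log n$ and renders the corrections negligible (this is the same three-step argument used below \eqref{s2prel}). So the ``careful bookkeeping'' you anticipate in your last paragraph should be understood not as tightening a pointwise error bound on $\|p_j\|^2$ but as deferring the replacement $r_{k,j}\mapsto r_k$ until the sum's decay can be exploited; with that adjustment, your argument coincides with the paper's.
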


We are particularly interested in the case when both $z$ and $w$ are in a microscopic neighbourhood of the boundary $\d G$. The following corollary
is immediate from Theorem \ref{thmr1r2} (see Section \ref{section coro}); it explains the situation when $z$ and $w$ are near different boundary components.

\begin{cor}[``$r_1$-$r_2$ case''] \label{corr1r2}
Let $0\le \theta_1,\theta_2<2\pi$ be fixed and let $s,t \in \mathbb{R}$.
Suppose that
\begin{align*}
z=\bigg(r_1 + \frac{t}{\sqrt{2n\Delta Q(r_1)}}\bigg)e^{i\theta_1},\qquad
w=\bigg(r_2 + \frac{s}{\sqrt{2n\Delta Q(r_2)}}\bigg)e^{i\theta_2}.
\end{align*}
Then, as $n \to \infty$, we have
\begin{align}\label{balko}
K_n(z,w)=\sqrt{2\pi n} \cdot \calS^G(z,w;n)\cdot e^{im(\theta_1-\theta_2)} (r_1r_2)^{-x} e^{-\tfrac{1}{2}(t^2+s^2)}
+\bigO(\log^3 n),
\end{align}
uniformly for $s,t$ in any given compact subset of $\mathbb{R}$, where $m=\lfloor Bn/2\rfloor$ and $x=\{Bn/2\}$.
\end{cor}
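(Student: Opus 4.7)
The plan is to deduce the corollary directly from Theorem \ref{thmr1r2} by Taylor-expanding the non-holomorphic factors under the substitution
$z = (r_1 + \tfrac{t}{\sqrt{2n\Delta Q(r_1)}})e^{i\theta_1}$,
$w = (r_2 + \tfrac{s}{\sqrt{2n\Delta Q(r_2)}})e^{i\theta_2}$. The prefactor $\sqrt{2\pi n}\,\calS^G(z,w;n)$ requires essentially no adjustment, since $\calS^G$ is smooth in $(z,\bar w)$ on compact subsets of $\overline{G}\times\overline{G}$ with $||z|-|w||$ bounded below, so perturbing the arguments by $\bigO(n^{-1/2})$ changes this factor only multiplicatively by $1+\bigO(n^{-1/2})$. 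The heart of the matter is to show that
\begin{equation*}
(z\bar w)^m \cdot (r_1 r_2)^{-Bn/2} \cdot e^{\frac n 2(Q(r_1)-Q(z))+\frac n 2(Q(r_2)-Q(w))} = e^{im(\theta_1-\theta_2)}(r_1 r_2)^{-x} e^{-(t^2+s^2)/2}(1+\bigO(n^{-1/2})).
\end{equation*}

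First split $z\bar w = |z||w|\,e^{i(\theta_1-\theta_2)}$ to extract the phase $e^{im(\theta_1-\theta_2)}$. Writing $|z| = r_1(1+u)$, $|w| = r_2(1+v)$ with $u = \tfrac{t}{r_1\sqrt{2n\Delta Q(r_1)}}$ and $v = \tfrac{s}{r_2\sqrt{2n\Delta Q(r_2)}}$, and using $m = \tfrac{Bn}{2}-x$, the modulus piece combines with $(r_1 r_2)^{-Bn/2}$ into $(r_1 r_2)^{-x}\exp\bigl(m\log(1+u)+m\log(1+v)\bigr)$. Expanding $m\log(1+u) = mu - \tfrac12 mu^2 + \bigO(n^{-1/2})$ (since $m = \bigO(n)$ and $u = \bigO(n^{-1/2})$), and similarly for $v$, one obtains contributions of order $\sqrt n$ and of order $1$. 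On the other hand, a radial Taylor expansion of $q$ at $r_1$ gives
\begin{equation*}
\tfrac{n}{2}\bigl(Q(r_1)-Q(z)\bigr) = -\tfrac{\sqrt n\,q'(r_1)\,t}{2\sqrt{2\Delta Q(r_1)}} - \tfrac{q''(r_1)\,t^2}{8\Delta Q(r_1)} + \bigO(n^{-1/2}),
\end{equation*}
and analogously at $r_2$. The key algebraic point is the cancellation of the two $\bigO(\sqrt n)$ contributions: $mu = \tfrac{Bt\sqrt n}{2 r_1\sqrt{2\Delta Q(r_1)}}$ exactly matches the linear term above because $B = r_1 q'(r_1)$ by \eqref{obs2}, and similarly at $r_2$.

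The residual order-one contribution from the two sides equals
\begin{equation*}
-\tfrac{t^2}{8\Delta Q(r_1)}\Bigl(q''(r_1)+\tfrac{q'(r_1)}{r_1}\Bigr) -\tfrac{s^2}{8\Delta Q(r_2)}\Bigl(q''(r_2)+\tfrac{q'(r_2)}{r_2}\Bigr),
\end{equation*}
and the radial Laplacian identity $4\Delta Q(r) = q''(r)+q'(r)/r$ collapses each bracketed factor to $4\Delta Q(r_k)$, producing exactly $-(t^2+s^2)/2$. Multiplying through, the $\sqrt{2\pi n}\,\calS^G = \bigO(\sqrt n)$ prefactor turns the residual multiplicative error $1+\bigO(n^{-1/2})$ into an additive $\bigO(1)$, which is absorbed in the $\bigO(\log^3 n)$ already present in Theorem \ref{thmr1r2}. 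The only mildly delicate aspect is the bookkeeping: one must retain terms in both Taylor expansions through the order at which cubic remainders contribute $\bigO(n^{-1/2})$, so that the $\sqrt n$ cancellation is rigorous before any further estimation. The $C^6$-smoothness of $Q$ in a neighbourhood of $\partial G$ provides uniform control on all Taylor remainders for $(s,t)$ in a fixed compact set, completing the proof of \eqref{balko}.
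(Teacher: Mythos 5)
Your proof is correct and takes essentially the same approach as the paper: apply Theorem \ref{thmr1r2} and Taylor-expand the non-oscillatory prefactor $(z\bar w)^m(r_1r_2)^{-Bn/2}e^{\frac n2(Q(r_1)-Q(z))+\frac n2(Q(r_2)-Q(w))}$ to show it equals $e^{im(\theta_1-\theta_2)}(r_1r_2)^{-x}e^{-(t^2+s^2)/2}$ up to a $1+\bigO(n^{-1/2})$ factor. The only cosmetic difference is that you rederive the $\sqrt n$-cancellation (via $B=r_kq'(r_k)$) and the $\bigO(1)$-identity (via $4\Delta Q=q''+q'/r$) from scratch, whereas the paper packages both into the pre-established Taylor expansion \eqref{harm_ext2} of $V_{B/2}^k-q$, so the underlying computation is identical.
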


Next we consider the case when both $z$ and $w$ are near the same boundary component, say the inner circle of radius $r_1$.
This situation is a bit more subtle, since the series in formula \eqref{wsg} diverges for $|z|=|w|=r_j, j=1,2$. However we can (for instance) sum it in the sense of Abel, by taking a limit as $|z|=|w|=r \in (r_{1},r_{2})$ and letting $r\to r_j$.
For $z=r_1 e^{i\theta_1}, w= r_1 e^{i\theta_2}$, $\theta_{1} \neq \theta_{2}$, this limit equals to
\begin{equation}\label{r1r1form}
\frac{1}{2\pi} \frac{\sqrt{\Delta Q(r_1)}}{r_1} r_1^{2x} \bigg(  \frac{1}{e^{i( \theta_1-\theta_2)}-1} +         \sum\limits_{\ell=0}^{\infty} \frac{e^{i(\theta_1-\theta_2)\ell }}{1 + a^{-1} \rho^{-(2\ell-2x+1) }}    -\sum\limits_{\ell=-\infty}^{-1} \frac{ e^{i(\theta_1-\theta_2)\ell}  }{1+a \rho^{2\ell -2x+1}}\bigg),
\end{equation}
where $x=\{Bn/2\}$, $\rho =r_1/r_2$ and  $a=\sqrt{\frac{\Delta Q(r_2)}{\Delta Q(r_1)}}$.
\begin{defn} \label{deffo} When $z=r_1 e^{i\theta_1}$, $w= r_1 e^{i\theta_2}$, $\theta_1\ne \theta_2$, we define the function $\calS^G(z,w;n)$ by the expression \eqref{r1r1form}. 
\end{defn}

\begin{thm}[``$r_1$-$r_1$ case''] \label{thmr1r1} Fix $0\le \theta_1,\theta_2<2\pi$ such that $\theta_1\ne \theta_2$. Also let $s,t \in \mathbb{R}$ and suppose that
	\begin{align*}
	z=\bigg(r_1 + \frac{t}{\sqrt{2n\Delta Q(r_1)}}\bigg)e^{i\theta_1},\qquad
	w=\bigg(r_1 + \frac{s}{\sqrt{2n\Delta Q(r_1)}}\bigg)e^{i\theta_2}.
	\end{align*}
	Then, as $n \to \infty$, we have,
	\begin{equation}\label{r1r1}
	K_n(z,w) = \sqrt{2\pi n} \, e^{-\tfrac{1}{2}(t^2 +s^2)}\, e^{i(\theta_1-\theta_2)m} r_1^{-2x} \calS^G(z,w;n)        +       \bigO(\log^5 n),
	\end{equation}
	uniformly for $s,t$ in any given compact subset of $\mathbb{R}$, where $\calS^G$ is given by Definition \ref{deffo}.
\end{thm}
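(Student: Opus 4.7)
The approach is based on the orthogonal-polynomial representation $K_n(z,w)=\sum_{j=0}^{n-1}p_j(z)\overline{p_j(w)}/\|p_j\|^2$ with $p_j(z)=z^je^{-nq(|z|)/2}$, combined with the twin-peaks structure of Lemma \ref{neglo}. For $z,w$ in the microscopic neighbourhood of the circle $|\zeta|=r_1$ at fixed angles $\theta_1\ne\theta_2$, only the indices $j=m+\ell$ with $|\ell|\le C\log n$ give non-negligible contributions (where $m=\lfloor Bn/2\rfloor$); for all other $j$, Lemma \ref{neglo} places the peaks of $|p_j|$ at a macroscopic distance from $|\zeta|=r_1$ and the corresponding terms are exponentially small. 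The first step is to run the Laplace method on $\|p_j\|^2=2\int_0^\infty r^{2j+1}e^{-nq(r)}dr$ at the two saddles $r_{k,j}\approx r_k$. Using $r_kq'(r_k)=B$ from \eqref{obs2} and $e^{-nq(r_k)}=e^{-nA}r_k^{-nB}$ from \eqref{obs1}, together with $nB=2(m+x)$, one obtains uniformly in $|\ell|\le C\log n$,
\[
\|p_{m+\ell}\|^2=\sqrt{\tfrac{2\pi}{n}}\,e^{-nA}\Big(\tfrac{r_1^{2\ell+1-2x}}{\sqrt{\Delta Q(r_1)}}+\tfrac{r_2^{2\ell+1-2x}}{\sqrt{\Delta Q(r_2)}}\Big)\bigl(1+\bigO(\tfrac{\log^2 n}{\sqrt{n}})\bigr).
\]

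In parallel, Taylor-expanding $p_{m+\ell}(z)\overline{p_{m+\ell}(w)}$ around $r_1$ gives, in the same range of $\ell$,
\[
p_{m+\ell}(z)\overline{p_{m+\ell}(w)}=e^{-nA}r_1^{2\ell-2x}e^{im(\theta_1-\theta_2)}e^{i\ell(\theta_1-\theta_2)}e^{-(t^2+s^2)/2}\bigl(1+\bigO(\tfrac{\log^2 n}{\sqrt{n}})\bigr).
\]
Inserting these expansions into the kernel sum and factoring out the prefactor $\sqrt{2\pi n}\,r_1^{-2x}e^{im(\theta_1-\theta_2)}e^{-(t^2+s^2)/2}$ appearing in \eqref{r1r1}, the kernel reduces to this prefactor times the truncated sum
\[
\Sigma_n:=\frac{1}{2\pi}\sum_{|\ell|\le C\log n}\frac{e^{i(\theta_1-\theta_2)\ell}}{\frac{r_1^{1-2x}}{\sqrt{\Delta Q(r_1)}}+\frac{r_2^{1-2x}}{\sqrt{\Delta Q(r_2)}}\,\rho^{-2\ell}},\qquad\rho=r_1/r_2.
\]
It then remains to identify $\Sigma_n$ with the Abel-regularized expression \eqref{r1r1form}. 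The $\ell\ge 0$ half converges absolutely: the denominator is dominated by $\rho^{-2\ell}$, and the partial sum agrees with the full infinite series up to a tail of size $\rho^{2C\log n}=\bigO(n^{-D})$, matching the term $\sum_{\ell=0}^\infty e^{i(\theta_1-\theta_2)\ell}/(1+a^{-1}\rho^{-(2\ell-2x+1)})$ in \eqref{r1r1form}. For the $\ell<0$ half one uses the algebraic identity $\frac{1}{A+B\rho^{-2\ell}}=\frac{1}{A}-\frac{B\rho^{-2\ell}}{A(A+B\rho^{-2\ell})}$; the second piece is absolutely convergent and, after a short rearrangement, matches $-\sum_{\ell=-\infty}^{-1}e^{i(\theta_1-\theta_2)\ell}/(1+a\rho^{2\ell-2x+1})$ in \eqref{r1r1form} up to $\bigO(n^{-D})$.

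The main obstacle is the leftover ``constant piece'' $\frac{\sqrt{\Delta Q(r_1)}}{2\pi r_1^{1-2x}}\sum_{\ell=-C\log n}^{-1}e^{i(\theta_1-\theta_2)\ell}$: this Dirichlet-type partial sum equals $(1-e^{-i(\theta_1-\theta_2)C\log n})/(e^{i(\theta_1-\theta_2)}-1)$, which is cutoff-dependent and does \emph{not} tend to the Abel value $1/(e^{i(\theta_1-\theta_2)}-1)$ appearing in \eqref{r1r1form}. The strategy for closing this gap is to replace the sharp cutoff at $|\ell|=C\log n$ by the smooth decay of $1/\|p_{m+\ell}\|^2$ for $|\ell|>C\log n$ (which acts as an effective Abel-type regularization induced by the drift of the peaks $r_{k,m+\ell}$ away from $r_k$), and then perform discrete summation by parts on the remainder sum. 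Here the hypothesis $\theta_1\ne\theta_2$ is essential: it ensures that the partial sums of $e^{i(\theta_1-\theta_2)\ell}$ are uniformly bounded by $2/|e^{i(\theta_1-\theta_2)}-1|$, so summation by parts produces a boundary term equal to the Abel value $1/(e^{i(\theta_1-\theta_2)}-1)$ plus a small, controllable remainder. Tracking the error contributions --- $\bigO(\log^2 n/\sqrt n)$ per summand, $C\log n$ summands in the window, and the additional $\log n$ factors absorbed in the summation-by-parts manipulations --- yields the claimed $\bigO(\log^5 n)$ error in \eqref{r1r1}, as opposed to the $\bigO(\log^3 n)$ in the $r_1$-$r_2$ case of Theorem \ref{thmr1r2}, where the series \eqref{wsg} converges absolutely and no Abel-regularization step is needed.
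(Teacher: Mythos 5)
Your high-level strategy matches the paper's: split off the absolutely convergent pieces (the sums with an extra $\rho^{2|\ell|}$ factor, coming from the ratio $c_2/c_1$ near $r_1$), isolate a conditionally convergent Dirichlet-type piece, and tame that piece with partial summation, using $\theta_1\ne\theta_2$ to get boundedness of the partial sums of the geometric phase. You also correctly identify that this is why the error $\bigO(\log^5 n)$ is worse than the $\bigO(\log^3 n)$ of Theorem~\ref{thmr1r2}. However, the intermediate steps as you have written them contain a genuine gap that would need to be repaired before the argument closes.

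The problem is the initial claim that only $|\ell|\le C\log n$ contributes and all other $j$ give exponentially small terms. This is not what Lemma~\ref{neglo} gives you: the peaks $r_{1,j},r_{2,j}$ drift away from $r_1,r_2$ at rate $\bigO(\tau-\tfrac B2)=\bigO(\ell/n)$ (see~\eqref{droplet_tau}), so for all $|\ell|\lesssim \sqrt{n}\log n$ the peak is within $\bigO(\delta_n)$ of $r_1$ and the corresponding term $|p_{m+\ell}(z)|^2/\|p_{m+\ell}\|^2 \asymp e^{-y_{1,\ell}^2}$ with $y_{1,\ell}=t-\ell/(r_1\sqrt{2n\Delta Q(r_1)})$ is \emph{not} exponentially small — it is $\bigO(1)$ for $|\ell|\lesssim\sqrt n$. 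The relevant truncation window is $|\ell|\le Cn\delta_n=C\sqrt n\log n$, as in~\eqref{Kgap}, and only outside that does Lemma~\ref{bock} (via~\eqref{pjn}) give an exponentially small error. Consequently, the Taylor expansion you state — $p_{m+\ell}(z)\overline{p_{m+\ell}(w)}=e^{-nA}r_1^{2\ell-2x}e^{i(\theta_1-\theta_2)(m+\ell)}e^{-(t^2+s^2)/2}\bigl(1+\bigO(\log^2n/\sqrt n)\bigr)$ — holds only for $|\ell|\lesssim\log n$; for $\log n\lesssim|\ell|\lesssim\sqrt n\log n$ the Gaussian factor $e^{-\frac12(y_{1,\ell}^2+\tilde y_{1,\ell}^2)}$ differs from $e^{-(t^2+s^2)/2}$ by an $\bigO(1)$ amount and cannot be dropped. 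So your truncated sum $\Sigma_n$ is not within $\bigO(\log^2 n/\sqrt n)$ of the kernel, and there is more than just a ``leftover Dirichlet piece'' to fix.

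Your closing idea — ``replace the sharp cutoff by the smooth decay... and perform discrete summation by parts'' — is indeed the right move and is what the paper actually does, but the smooth decay is supplied by the Gaussian weight $e^{-y_{1,\ell}^2}$ coming from $|p_{m+\ell}(z)p_{m+\ell}(w)|/\|p_{m+\ell}\|^2$ (the $c_\ell$ of~\eqref{billy}), not from the growth of $\|p_{m+\ell}\|^2$ alone. The paper keeps the full $\ell$-dependent Gaussian weights $c_\ell$ for $1\le\ell\le Cn\delta_n$ in the partial sum $S_1$, establishes the smoothness bound~\eqref{bucklo} $c_{\ell+1}-c_\ell=c_\ell\bigl(\bigO(\ell/n)+\bigO(1/\sqrt n)+\bigO(n\delta_n^4)\bigr)$, and then runs summation by parts. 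The boundary term gives $c_1\cdot\epsilon/(1-\epsilon)\approx e^{-(t^2+s^2)/2}/(e^{i(\theta_1-\theta_2)}-1)$, i.e.~exactly the Abel value, and the error sums (involving the Gaussian bound~\eqref{butt}) produce the $\bigO(\log^5 n)$. So the way to repair your argument is: do not Taylor the Gaussian weight to a constant first; instead keep $c_\ell$ intact across the full window $|\ell|\le Cn\delta_n$, and apply summation by parts to $\sum_\ell c_\ell\,\epsilon^\ell$ directly, extracting the Abel value from the boundary term $c_1\,\epsilon/(1-\epsilon)$ and controlling $\sum_\ell(c_{\ell+1}-c_\ell)\epsilon^{\ell+1}/(1-\epsilon)$ via the smoothness of $c_\ell$.
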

\begin{rem}
The series
\begin{align*}
\Xi(x,\phi;\rho,a) := \sum\limits_{\ell=0}^{\infty} \frac{e^{i \phi \ell }}{1 + a^{-1} \rho^{-(2\ell-2x+1}) }    -\sum\limits_{\ell=-\infty}^{-1} \frac{ e^{i \phi \ell}  }{1+a \rho^{2\ell -2x+1}}
\end{align*}
appears in \eqref{r1r1form} with $\phi=\theta_{1}-\theta_{2}$. For $\phi=0$, one can relate this function with the $\log$-derivative of the Jacobi $\theta$ function. Indeed, using \cite[Lemma 3.28]{C} (see also Lemma \ref{lemma: Theta is expressible in terms of Jacobi} below), we obtain
\begin{align*}
\Xi(x,0;\rho,a) = \frac{(\log \theta)'\big( 2x+\frac{1}{2}+\frac{\log a}{2 \log(\rho^{-1})} ;  \frac{\pi i}{\log(\rho^{-1})} \big) + (4x-1) \log (\rho^{-1}) + \log a}{2\log(\rho^{-1})}.
\end{align*}
(We have not able to express $\Xi$ in terms of $\theta$ for general values of $\phi$.)
\end{rem}

\subsection{Comparison with follow-up work} \label{comparison} Our present effort is the first in a series of notes on disconnected droplets with soft-edge ring-shaped spectral gaps, including also two more recent works \cite{ACC,AC:Heine}. 
The common theme is fluctuations of smooth linear statistics, but this aside, each note contributes other aspects, of independent interest. A brief comparison is in order.

 Our present note derives and uses detailed information on asymptotics for correlation functions near the edge of the droplet. A corresponding analysis of correlation functions for a class of restricted Mittag-Leffler ensembles (with hard edges)
is carried out in yet another follow-up paper \cite{ACC2}.

In \cite{ACC} we prove large $n$ expansions of the free energy $\log Z_n$ with respect to perturbed potentials $\tilde{Q}$ as in \eqref{pertpot}. As a corollary, we obtain results for the cumulant generating function of fluctuations. We stress that in this approach, it is necessary to obtain a very precise expansion, up to and including the constant term, in order to rigorously study the asymptotic distribution of fluctuations. However, 
 this term is of great intrinsic interest, finding applications in field theories and so on; it 
 is conjectured to be related to the Polyakov-Alvarez formula in \cite{ZW}. In \cite{ACC} we explicitly obtain the large $n$ expansion for radially symmetric potentials and show that new
``displacement terms'' must be added to the terms expected from the Polyakov-Alvarez formula; cf.~also \cite{BKS,AC:Heine}.

The paper \cite{AC:Heine} considers some non-radially symmetric ensembles with ring-shaped spectral gaps and gives the natural counterpart of the fluctuation result for connected droplets from \cite{AM}; we turn to a brief comparison.


Consider a general smooth function $h$ supported in a small neighbourhood of $\overline{G}$. We can then decompose (cf. \cite{AC:Heine})
\begin{align}\label{ali}h(z)=\re g(z)+f_0(z)+c_1\omega(z)\end{align}
where $c_1\in \mathbb{R}$, all functions are smooth, $f_0$ is supported in a small neighbourhood of $\overline{G}$, and
\begin{itemize}
\item $g$ is analytic in $\mathbb{C}\setminus S$,
\item $f_0=0$ along the boundary $\d G$,
\item $\omega=0$ in a neighbourhood of $\{|z|\leq r_1\}$ and $\omega=1$ in a neighbourhood of $\{|z|\geq r_2\}$.
\end{itemize}
The idea in \cite{AC:Heine} is to study fluctuations of $f:=\re g+f_0$ and $c_1\omega$ separately. (Note that $g$ is constant on each of the two components of $\mathbb{C}\setminus \mathcal{N}_{G}$, where $\mathcal{N}_{G}$ is a small simply connected neighbourhood of $\overline{G}$ containing the support of $f_{0}$.)

It is shown in \cite[Theorem 1.6]{AC:Heine} that $\fluct_n f$ and $\fluct_n (c_1\omega)$ are asymptotically independent;  $\fluct_n f$ has an asymptotic Gaussian distribution while $\fluct_n(c_1\omega)$ has an oscillatory discrete normal distribution.

More precisely, it is shown in \cite[Theorem 1.5]{AC:Heine} that $\fluct_n f$ is asymptotically $N(\tilde{e}_f,\tilde{v}_f)$, where
$$\tilde{v}_f=\frac 1 4\int_\C|\nabla(f^S)|^2\, dA. $$

Here $f^S=\re g+f_0\cdot\1_S$ is the Poisson modification (well-known when $S$ is connected \cite{AM,ZW}). The fluctuations $c_1\fluct_n\omega$ have an asymptotic oscillatory discrete normal distribution by \cite[Theorem 1.3]{AC:Heine} (it corresponds to the
random variable $Y_n$ in Theorem \ref{flucco}).

\begin{ex}
Consider a radially symmetric function $\lfun$ as in Theorem \ref{flucco}. If $\lfun(r_1)=\lfun(r_2)$, we have $Y_n=0$, so by \eqref{cgf0},
 $\fluct_n\lambda$ is asymptotically normal as $n\to\infty$. Thus $\lambda^S=\lambda\cdot \1_S$ and a straightforward computation shows that $\tilde{v}_\lambda=v_\lambda+\hat{v}_\lambda$ (notation of Theorem \ref{flucco}). Observe that the ``jump term'' $\hat{v}_\lambda$ vanishes if $\lambda'(r_1)=\lambda'(r_2)=0$.
 \end{ex}

\subsection{Further comments} In the following paragraphs, we assume for simplicity that the potential $Q$ is $C^\infty$-smooth in a neighbourhood of the droplet.

The one- and two-point correlations studied above are not independent but are connected by the loop equation, see \cite{AM} or \cite[Section 6]{AC}. This relationship and our above results indicate that for $z$ near the boundary, there should be an asymptotic expansion in powers of $n^{\frac 1 2}$, of the form
$$R_n(z)=nb_1(z)+n^{\frac 1 2}b_{\frac 12}(z;n)+b_0(z;n)+n^{-\frac 1 2}b_{-\frac 12}(z;n)+\cdots,$$
where the most significant coefficients $b_1(z)$ and $b_{\frac 12}(z;n)$ (the ones that matter for our fluctuation computations) are computed in Theorems \ref{sr1}, \ref{sr2} and \ref{ur3} above. (The notation $b_{\frac 1 2}(z;n)$ indicates that $b_{\frac 1 2}$ oscillates in $n$ near a multi-connected gap. For higher correction terms $b_0,\ldots$ we do not know whether there is an oscillation.)
Somewhat related questions with respect to $\beta$-ensembles (at an outer boundary) are discussed in the references \cite{CFTW,CSA}.

 Interestingly, if $z$ is away from the boundary, the corresponding expansion does not involve ``half-integer coefficients'' $b_{\frac 12},b_{-\frac 12},\ldots$. In the bulk one has an expansion of the form $R_n=nb_1+b_0+n^{-1}b_{-1}+\cdots$ where the $b_j(z)$ obey a certain recursion, see \cite{A2} and the references therein. A similar expansion without half-integer coefficients is obtained in the exterior, unbounded component, see \cite[Theorem 6.4]{AC}. Thus the emergence of half-integer correction terms in the asymptotic expansion of the 1-point function seems to be a pure boundary phenomenon.

While much of the existing literature in dimension two has centered around connected droplets, there are exceptions. Strong asymptotics for orthogonal polynomials is worked out in \cite{BEG} for ``lemniscate potentials'' such as $Q(z)=|z|^{2d}-2\re(z^d)$ where $d\ge 2$ is an integer. This is of interest because the corresponding droplet $S=\{|z^d-1|\le d^{-\frac 1 2}\}$ consists of $d$ identical connected ``ponds'', which are organized in a non radially symmetric manner. Some results on asymptotics of the one- and two-point functions near the edge, for related ensembles, are found in the recent work \cite{BY2022}, which also contains further references on the topic.

The situation when $\Delta Q$ vanishes on some subset of $\d S$ requires a different analysis, and is likely to lead to new and interesting universality classes. (See \cite{BEG,BLY} for some works where $\Delta Q$ vanishes at a singular point on the boundary.)


The sources \cite{ACCL1,ACCL,C,Seo} study a very different kind of radially symmetric ensembles, where one starts with a connected droplet, for example a disc, and imposes a ``hard edge'' by redefining the potential to $+\infty$ in a preassigned ``forbidden region''.

We finally remark that in the setting of Laughlin's wave-function, our above setting is known as the ``plasma analogy''. Our results imply that, in the presence of spectral gaps, the plasma behaves quite differently from an electron gas in a magnetic field, as studied in e.g.~the recent work \cite{OLMSE}. (We are grateful to Jean-Marie St\'{e}phan and Benoit Estienne for this remark.)


\subsection{Plan of this paper} In Section \ref{sec2}, we prove two lemmas which are used throughout the paper:
Lemma \ref{lem11} on cumulant generating functions, and Lemma \ref{bock} on
an approximation formula for the 1-point function near $\overline{G}$.

In Section \ref{sec3} we prove Theorems \ref{sr1}--\ref{sr2} on the edge density in perturbed potential
$\tilde{Q}=Q-\frac {s\lfun}n$.

In Section \ref{sec4} we prove Theorem \ref{flucco} on the cumulant generating function of $\fluct_n\lfun$.

In Section \ref{sec5} we prove Theorems \ref{thmr1r2}-\ref{thmr1r1} on the asymptotics of $K_n(z,w)$ when the points $z,w$ are different and near the edge of the gap $G$.

In Section \ref{flureg} we formulate and prove Theorem \ref{gengap} on the asymptotic distribution
of $\fluct_n(\re g)$ where $g(z)$ is analytic in a ``gap'' $G$ in a droplet corresponding to a
fairly arbitrary potential $Q$. Our proof is modeled after the method of Ward identities in the paper \cite{AM}. We deduce Theorem \ref{thereg} as a corollary.

\section{Preparations} \label{sec2}

In this section, we supply a proof for Lemma \ref{lem11} on cumulant generating functions. We also provide suitable local approximations of the 1-point function near  $\overline{G}=\{r_1\le |z|\le r_2\}$, which are frequently used in the sequel.

\subsection{Perturbed potentials and the cumulant generating function} \label{fluco}
Recall that, given a smooth, bounded testing function $h$ we denote
$\tilde{Q}=\tilde{Q}_{n,h}=Q-\frac hn$
the perturbed potential.

We associate with $h$ two linear statistics, denoted $\trace_n h$ and $\fluct_n h$, by
$$\trace_n h=\sum_{j=1}^n h(z_j),\qquad \fluct_n h=\sum_{j=1}^n h(z_j)-n\int h\, d\sigma,$$
where $\{z_j\}_1^n$ is a random sample from the ensemble associated with the potential $\tilde{Q}$.

Below we will write $\tilde{\mathbb{E}}_{n,h}$ for the expectation with respect to $\tilde{Q}$ and $\mathbb{E}_n$ for the expectation with respect to $Q$.

We are interested in the cumulant generating function for $\fluct_n h$, which we denote
$F_{n,h}(t)=\log \mathbb{E}_n\, e^{t\fluct_n h}.$

\begin{proof}[Proof of Lemma \ref{lem11}] Recall that the Hamiltonians in external potentials $Q$ and $\tilde{Q}_{n,s h}$ are
$$H_n=\sum_{j\ne k}\log \frac 1 {|z_j-z_k|}+n\sum_{j=1}^n Q(z_j),\qquad \tilde{H}_{n,s h}=H_n-s\trace_n h.$$

The corresponding partition functions thus obey
\begin{align*}
& Z_{n} = \int e^{-H_n}\, dA_n,  & & \tilde{Z}_{n,s h}=\int e^{s\trace_n h}\, e^{-H_n}\, dA_n,
\end{align*}
and hence
$\frac {\tilde{Z}_{n,s h}}{Z_n}=\mathbb{E}_n(e^{s\trace_n h}).$ Since
\begin{align*}
d\tilde{\mathbb{P}}_{n,s h}&=\frac 1 {\tilde{Z}_n}e^{-\tilde{H}_n}dA_n=\frac {Z_n}{\tilde{Z}_n}e^{s\trace_n h}d\mathbb{P}_n,
\end{align*}
we now see that
\begin{align*}
\tilde{\mathbb{E}}_{n,s h}(\fluct_n f)=\frac {\mathbb{E}_n(e^{s\trace_n h}\fluct_n f)}{\mathbb{E}_n(e^{s\trace_n h})}=\frac {\mathbb{E}_n(e^{s\fluct_n h}\fluct_n f)}{\mathbb{E}_n(e^{s\fluct_n h})}.
\end{align*}
Taking $f=h$ we deduce that
$$\tilde{\mathbb{E}}_{n,s h}(\fluct_n h)=\frac d {ds}\log \mathbb{E}_n (e^{s\fluct_n h})= \frac{d}{ds}F_{n,h}(s).$$
To finish the proof it suffices to note that $F_n(0)=0$ and integrate in $s$ from $0$ to $t$.
\end{proof}

\subsection{Approximate 1-point function for the gap}
Consider a perturbed potential
\begin{equation}\label{pertur}\tilde{Q}=Q-\frac {s\lfun}n\end{equation} where $\lfun(z)=\lfun(|z|)$ is a smooth, radially symmetric
real-valued function on $\C$, supported in a small neighbourhood of $\overline{G}$.

We write $\tilde{\calW}_n$ for the subspace of $L^2(\mathbb{C},dA)$ consisting of all weighted polynomials
$$p(z)=P(z)e^{-n\tilde{Q}(z)/2}$$ where $P$ is a holomorphic polynomial of degree at most $n-1$; we
write
\begin{equation}\label{fullkernel}
\tilde{p}_j(z)=z^je^{-n\tilde{Q}(z)/2}\qquad \text{and}\qquad \tilde{K}_n(z,w)=\sum_{j=0}^{n-1}\frac {\tilde{p}_j(z)\overline{\tilde{p}_j(w)}}{\|\tilde{p}_j\|^2}.
\end{equation}

Evidently $\tilde{K}_n$ is the reproducing kernel of $\tilde{\calW}_n$, and the $1$-point function in potential $\tilde{Q}$ is $\tilde{R}_n(z):=\tilde{K}_n(z,z).$

\begin{defn} Given an integer $j$, $0\le j\le n-1$, we set
\begin{equation}\label{baq}\tau=\tau(j)=\frac j n,\qquad \delta_n = \frac{\log n}{\sqrt{n}},\end{equation}
and define the \emph{local 1-point approximation} for the gap $G$ by
\begin{equation}\label{Kgap}
\tilde{R}_n^G(z):=\sum_{|\tau-\frac B 2|\le C \delta_n}\frac {|\tilde{p}_j(z)|^2}{\|\tilde{p}_j\|^2},
\end{equation}
where $C=C(Q)$ is a large enough, positive constant, depending only on the potential $Q$.
\end{defn}

We also introduce the following notation for the $\delta_n$-neighbourhood of the gap:
\begin{equation}\label{dnbh}N(G,\delta_n)=\{z\,:\, \dist(z,G)<\delta_n\}.
\end{equation}

\begin{lem} \label{bock}
If the constant $C=C(Q)$ in \eqref{Kgap} is chosen large enough, we have as $n\to\infty$
\begin{equation}\label{kung}\tilde{R}_n(z)=\tilde{R}_n^G(z)\cdot (1+\bigO(e^{-c n \delta_{n}^{2}})),
\end{equation}
where $c$ is a positive constant, uniformly for $z\in N(G,\delta_n)$.
\end{lem}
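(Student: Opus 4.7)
The strategy is classical: for $z\in N(G,\delta_n)$, the kernel sum $\tilde R_n(z)=\sum_{j=0}^{n-1}|\tilde p_j(z)|^2/\|\tilde p_j\|^2$ should be dominated by those indices $j$ whose weighted polynomial has a peak near $\partial G$, which are precisely the ones retained in $\tilde R_n^G(z)$. I will show that each excluded term is bounded by $e^{-cn\delta_n^2}\,\tilde R_n^G(z)$, so that summing the $\bigO(n)$ excluded contributions yields the stated bound. The radial perturbation $s\lambda/n$ produces only $\bigO(1)$ multiplicative corrections in each ratio $|\tilde p_j(z)|^2/\|\tilde p_j\|^2$ relative to its $\lambda=0$ counterpart, and so can be absorbed into the constant $c$; I therefore argue with the unperturbed weighted polynomials $p_j$ below.

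\textbf{Peak localization.} Writing $|p_j(r)|^2=e^{nU_\tau(r)}$ with $\tau=j/n$ and $U_\tau(r)=2\tau\log r-q(r)$, the critical point equation is $f(r):=rq'(r)=2\tau$. Since $f'(r)=4r\Delta Q(r)>0$ on $S$ and $f(r_1)=f(r_2)=B$ by \eqref{obs2}, Taylor expansion at $r_1,r_2$ gives: for $\tau$ slightly below $B/2$, the dominant peak $r_j^*$ lies in the inner component of $S$ with $r_1-r_j^*\sim(B/2-\tau)/(2r_1\Delta Q(r_1))$; for $\tau$ slightly above $B/2$, $r_j^*$ lies in the outer component with $r_j^*-r_2\sim(\tau-B/2)/(2r_2\Delta Q(r_2))$; and for $\tau$ in any closed subinterval of $[0,1]$ avoiding the ``jump set'' $\{B_k/2\}_k$ (one point per gap of $S$), $r_j^*$ remains at $\bigO(1)$ distance from $\partial G$. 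In particular, when $|\tau-B/2|>C\delta_n$ with $C$ chosen sufficiently large, the nearest peak of $|p_j|^2$ is at distance at least $c\,C\delta_n$ from $\partial G$.

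\textbf{Gaussian comparison.} The identity $U_\tau''(r_j^*)=-4\Delta Q(r_j^*)$ yields simultaneously the Laplace expansion $\|p_j\|^2\asymp r_j^*\,|p_j(r_j^*)|^2/\sqrt{n}$ and the local bound $|p_j(z)|^2\le |p_j(r_j^*)|^2\,e^{-cn(r-r_j^*)^2}$ with $r=|z|$, which combines with Lemma \ref{neglo} to furnish exponential decay also in the far field. Pairing an excluded index $j$ with the anchor $j_0=\lfloor Bn/2\rfloor$ (which lies in the included range and whose dominant peaks sit at $r_1$ and $r_2$), one verifies from the previous step that $(r-r_j^*)^2-(r-r_{j_0}^*)^2\ge c(C\delta_n)^2$ for every $z\in N(G,\delta_n)$, where $r_{j_0}^*$ is the nearer of the two twin peaks of $p_{j_0}$. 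Consequently
\[\frac{|p_j(z)|^2}{\|p_j\|^2}\;\le\;e^{-c'C^2 n\delta_n^2}\,\frac{|p_{j_0}(z)|^2}{\|p_{j_0}\|^2}\;\le\;e^{-c'C^2 n\delta_n^2}\,\tilde R_n^G(z),\]
where the last bound uses that $|p_{j_0}(z)|^2/\|p_{j_0}\|^2$ is one of the non-negative summands of $\tilde R_n^G(z)$.

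\textbf{Conclusion.} Taking $C$ large enough that $c'C^2$ exceeds $c+1$ and summing over the $\bigO(n)$ excluded indices, one obtains $\tilde R_n(z)-\tilde R_n^G(z)\le n\,e^{-(c+1)n\delta_n^2}\,\tilde R_n^G(z)=\bigO(e^{-cn\delta_n^2})\,\tilde R_n^G(z)$, which is \eqref{kung}. The only genuinely delicate step is the Laplace expansion of $\|p_j\|^2$ in the transition strip $|\tau-B/2|\le C\delta_n$, where one must sum the contributions of the two competing near-peaks at $r_1,r_2$ --- exactly the twin-peak scenario of Lemma \ref{neglo} --- but this concerns only the included terms and does not enter into the bound on the tail.
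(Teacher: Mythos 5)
Your overall architecture (truncate, bound each discarded term against the anchor index $j_0=\lfloor Bn/2\rfloor$, sum the $\bigO(n)$ tail, absorb the perturbation $s\lambda/n$ into constants) coincides with the paper's, but the comparison device you use has a genuine gap: it breaks down in the interior of the gap, which is part of $N(G,\delta_n)$ since $N(G,\delta_n)\supset G$. Concretely, your key inequality $(r-r_j^*)^2-(r-r_{j_0}^*)^2\ge c(C\delta_n)^2$ ``for every $z\in N(G,\delta_n)$'' is false there: an excluded index $j$ with $\tau>B/2$ (resp.\ $\tau<B/2$) still possesses a secondary critical point pushed into the gap beyond $r_1$ (resp.\ $r_2$), and as $\tau$ runs over the excluded range these secondary peaks sweep through all of $(r_1,r_2)$; so $z$ may sit exactly at $r_j^*$ while lying far from both twin peaks of $p_{j_0}$, making your left-hand side negative. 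At such points the excluded term is indeed negligible, but for a reason your position-based Gaussian comparison cannot see: its norm $\|p_j\|^2$ is dominated by the peak on the \emph{other} side of the gap, which exceeds the nearby peak value by a factor of order $e^{n|2\tau-B|\log (r_2/r_1)}$ (this is the content of Lemma \ref{gjcomp}). A second, related problem is your lower bound $|p_{j_0}(z)|^2\gtrsim |p_{j_0}(r_{j_0}^*)|^2e^{-cn(r-r_{j_0}^*)^2}$: it holds only within a fixed small distance of the peak. In the middle of the gap the anchor term is of size $\sqrt{n}\,e^{-n(Q-\check{Q})(z)}$, possibly much smaller than the Gaussian extrapolation, and the crude far-field bound of Lemma \ref{neglo} for excluded terms ($\lesssim e^{-cn}$ relative to the peak) does not automatically dominate $e^{-cn\delta_n^2}$ times this exponentially small anchor, since nothing guarantees $c>\max_{\overline{G}}(Q-\check{Q})$.

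The paper avoids both issues by never comparing peak positions: via the explicit formula \eqref{wassa}, every term carries the common factor $e^{-n(Q-\check{Q})(z)}$. One then shows (as in \cite[Lemma 3.10]{AC}) that each excluded term satisfies $\frac{|\tilde{p}_j(z)|^2}{\|\tilde{p}_j\|^2}\lesssim e^{-cn\delta_n^2}e^{-n(Q-\check{Q})(z)}$ uniformly on $N(G,\delta_n)$, while the single anchor term, hence $\tilde{R}_n^G(z)$, is $\gtrsim e^{-n(Q-\check{Q})(z)}$; the tail sum is then absorbed since $n e^{-cn\delta_n^2}$ still decays superpolynomially. Your argument as written is valid only for $z$ within a fixed small distance of $\partial G$; to cover all of $N(G,\delta_n)$ you should replace the Gaussian-in-position comparison by this factorization through $e^{-n(Q-\check{Q})(z)}$.
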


\begin{rem} For $z$ in the $\delta_n$-neighbourhood $N(U,\delta_n)$ of the unbounded component of $\C\setminus S$,
the approximation should be replaced by a ``one-sided'' sum:
$\tilde{R}_n^U(z)=\sum_{1-C \delta_n\le \tau<1}\frac {|\tilde{p}_j(z)|^2}{\|\tilde{p}_j\|^2}.$
The approximation $\tilde{R}_n(z)=\tilde{R}_n^U(z)\cdot (1+\bigO(e^{-c n \delta_{n}^{2}}))$ is implicit in several works, e.g.~ \cite[Section 3]{AC}.
\end{rem}

\begin{proof}[Proof of Lemma \ref{bock}]

Fix a point $z$ with $\dist(z,G)\le \delta_n$ and write
$$\tilde{R}_n(z)=\tilde{R}_n^G(z)+\eps_n(z),\qquad
\eps_n(z)=\sum_{|\tau-\frac B2|> C\delta_n}\frac {|\tilde{p}_j(z)|^2}{\|\tilde{p}_j\|^2}.
$$

Using \eqref{wassa} (with $p_{j}$ and $Q$ replaced by $\tilde{p}_{j}$ and $\tilde{Q}$) we may infer that if $|\tau(j)-\tfrac B2|> C\delta_n$, and if the constant $C$ is large enough,
then
\begin{equation}\label{pjn}
\frac {|\tilde{p}_j(z)|}{\|\tilde{p}_j\|}\lesssim e^{-c n \delta_{n}^{2}}e^{-\frac 1 2 n(Q-\check{Q})(z)},
\end{equation}
where $\check{Q}$ is the obstacle function with respect to the (unperturbed) potential $Q$. The proof of \eqref{pjn} proceeds essentially in the same way as in \cite[Lemma 3.10]{AC} (we omit details).

Hence also
$$\eps_n(z)\lesssim e^{-c n \delta_{n}^{2}} e^{-n(Q-\check{Q})(z)}.$$

We next fix the integer $j$ closest to $\frac{Bn}{2}$ and observe using
\begin{align*}
\check{Q}-V = \begin{cases}
0, & \mbox{in } G, \\
Q-V>0, & \mbox{in } N_G\setminus \overline{G},
\end{cases}
\end{align*}
 that \eqref{wassa} implies that $|\tilde{p}_j(z)|^2 \gtrsim e^{-nA}e^{-n(Q-\check{Q})(z)}$ for all $z \in N(G,\delta_{n})$.
Moreover, Lemma \ref{neglo} together with \eqref{wassa} implies that
$\|\tilde{p}_j\|^2 \lesssim e^{-nA}.$
Hence
$\frac {|\tilde{p}_j(z)|}{\|\tilde{p}_j\|}\gtrsim e^{-\frac n2(Q-\check{Q})(z)}.$

As a consequence, $$\tilde{R}_n^G(z)\gtrsim e^{-n(Q-\check{Q})(z)}.$$

All in all, choosing $c>0$ somewhat smaller, we obtain
$\eps_n(z)\lesssim e^{-c n \delta_{n}^{2}} \tilde{R}_n^G(z)$ when
$\dist(z,G)\le \delta_n$, as desired.
\end{proof}

\begin{rem} Our choice of $\delta_n$ in \eqref{baq} is somewhat arbitrary. Modulo some minor changes, it is possible to replace $\delta_n$ by the slightly smaller quantity $M\sqrt{\frac {\log n} n}$, for large enough $M$. (This comment applies to virtually all our arguments below.)
\end{rem}

\section{Asymptotics for the edge density} \label{sec3}
In this section, we prove Theorems \ref{sr1} and \ref{sr2}.

\subsection{Preliminary computations} \label{prelcom}
To set things up, fix a radially symmetric potential $Q(z)=q(|z|)$, giving rise to a gap $G=\{r_1<|z|<r_2\}$. Also fix a $C^2$-smooth radially symmetric function $\lfun(z)$, supported in a small neighbourhood of $\overline{G}$.

We finally fix a real parameter $s$ and consider the perturbed potential \eqref{pertur}.

Our goal is to estimate the $1$-point function $\tilde{R}_n(z)=\tilde{R}_{n,s\lfun}(z)$ for $z$ in a $\delta_n$-neighbourhood of the boundary $\d G$.

For such $z$, we have (by Lemma \ref{bock}) the approximation $\tilde{R}_n(z)=\tilde{R}_n^G(z)\cdot (1+\bigO(e^{-c n \delta_{n}^{2}}))$ where
\begin{align}\label{RtildeG def}
\tilde{R}_n^G(z):=\sum_{|\tau-\frac B2|\le C \delta_n}\frac {|\tilde{p}_j(z)|^2}{\|\tilde{p}_j\|^2},
\end{align}
where we remind that
$$\tau=\tau(j)=\frac j n,\qquad \tilde{p}_j(z)=z^je^{-\frac 1 2 n\tilde{Q}(z)},\qquad \delta_n=\frac{\log n}{\sqrt{n}}.$$

We will throughout the rest of this section assume that $j$ is an integer such that
\begin{equation}\label{taub}
|\tau-\tfrac B 2|\le C \delta_n.
\end{equation}

We will deduce detailed information about $\tilde{p}_j$, including an analysis of peak-points as well as of the squared $L^2$-norm
\begin{equation}\label{sqndef}I(\tau):=\|\tilde{p}_j\|^2.\end{equation}

It is convenient to fix some notation.

\begin{defn} In the following we will write interchangeably
\begin{align}\label{def of gj}
g_j(r)=g_\tau(r):=q(r)-2\tau\log r,
\qquad (\tau=\tau(j)),
\end{align}
and we set
\begin{align}\label{def of h}
h(r)&:=r e^{s\lfun(r)}.
\end{align}
\end{defn}

It follows that
\begin{align}\label{sadel}
|\tilde{p}_j(z)|^2=e^{-n  g_j(r)}e^{s \lfun(r)}, \; (r:=|z|), \qquad \quad
I(\tau)=  2\int_{0}^{\infty} h(r)\, e^{-n g_{j} (r) }\, dr.
\end{align}

We will use a saddle point analysis to estimate $I(\tau)$.
The saddle point equation is $g_j'(r)=0$, i.e.,
\begin{align}\label{lol1}
q'(r) = \frac{2\tau}{r}.
\end{align}

For $\tau = \frac{B}{2}$, we know from \eqref{obs2} that $r=r_{1}$ and $r=r_{2}$ are two solutions to this equation.

For $\tau$ close to $\frac{B}{2}$ (more precisely: satisfying \eqref{taub}),
equation \eqref{lol1} can potentially have many positive solutions. The two relevant solutions for us are the ones that are close to $r_{1}$ and $r_{2}$ respectively. We denote these solutions by $r_{1,j}$ and $r_{2,j}$. These are the two ``peak-points'' of $|\tilde{p}_j|$ near the gap, indicated in Figure \ref{pic2}.

In view of Lemma \ref{neglo} the contribution to the integral \eqref{sadel} from other possible peaks is negligible, and we can safely focus on $r_{1,j},r_{2,j}$.

A straightforward computation using \eqref{obs2} (left for the careful reader) gives
\begin{align}
\label{droplet_tau}
 r_{k,j} & = r_{k} + \frac{2(\tau-\frac{B}{2})}{(rq')'(r_{k})} - \frac{2(rq')''(r_{k})}{(rq')'(r_{k})^{3}} (\tau-\tfrac{B}{2})^{2} + \bigO((\tau-\tfrac{B}{2})^3), \qquad \mbox{as } \tau \to \tfrac{B}{2}, \; k=1,2.
\end{align}

Evaluating the integral by Laplace's method (see e.g. \cite[Theorem 15.2.5]{S})
we now find the following asymptotic for \eqref{sadel}, as $n\to\infty$,
\begin{equation} \label{lol19}
I(\tau) = \sqrt{\frac{2\pi}{n}} \cdot ( c_1(j,n) + c_2(j,n) )\cdot (1+ \bigO(n^{-1})),
\end{equation}
where the $\bigO$-constant is uniform for the range of $j$ in \eqref{taub} and
\begin{equation}\label{vupp}
c_k(j,n) =   \frac{2h(r_{k,j}) }{ \sqrt{g_{j}''(r_{k,j})}}e^{-n g_{j}(r_{k,j})}, \qquad k=1,2.
\end{equation}

\smallskip

We next fix $k$ (either $1$ or $2$) as well as
a real parameter $t$ with $|t|\le \delta_{n}\sqrt{n}$ and put
\begin{equation}\label{upp}|z|=r =: r_k +\frac{t}{\sqrt{2n\Delta Q(r_k)}}.\end{equation}

It is convenient to introduce two functions $V_{\tau}^1$ and $V_\tau^2$ of $r$ by
\begin{equation}\label{vkdef}
V^k_{\tau}(r)=q(r_{k,j}) + 2\tau \log \bigg( \frac{r}{r_{k,j}} \bigg)= 2 \tau \log r + g_{j}(r_{k,j}),\qquad k=1,2.
\end{equation}
(We also extend the above definitions to all $z \in \mathbb{C}$ by $V_{\tau}^{k}(z):=V_{\tau}^{k}(|z|)$, $k=1,2$.)

We aim to use the identity
\begin{equation}\label{mpoint}\frac {|\tilde{p}_j(z)|^2}{c_k(j,n)}
=\frac {\sqrt{g_j''(r_{k,j})}}{2h(r_{k,j})}e^{n(V_\tau^k(r)-q(r))}e^{s\lfun(r)}.
\end{equation}

We shall require rather detailed asymptotics of $V_\tau(r)$ as $n\to\infty$, at the point \eqref{upp}. Here the parameter $\tau=\tau(j)$ is assumed to satisfy \eqref{taub}.

In the notation of \eqref{upp}, our assumption $|t| \leq \delta_{n}\sqrt{n}$ gives that $|r-r_k| = \bigO(\delta_{n})$. Also, from $\eqref{droplet_tau}$ and \eqref{taub} we get $r_{k,j}-r_k = \bigO(\delta_n)$, where the $\bigO$-constant is uniform in $j$. Hence $|r-r_{k,j}| = \bigO(\delta_{n})$ uniformly for $j$ in the range \eqref{taub}.

By construction, $V_{\tau}^k(r)$ agrees with $q(r)$ up to first order derivatives at $r=r_{k,j}$, and a Taylor expansion  gives
\begin{multline}
\label{harmonic_ext}
V_{\tau}^k(r)-q(r)= -\bigg(\frac{q'(r_{k,j})}{r_{k,j}} +q''(r_{k,j}) \bigg)\cdot \frac{ ( r-r_{k,j})^2}{2}\\
+ \bigg(  2\frac{q'(r_{k,j})}{r_{k,j}^2} -q^{(3)}(r_{k,j}) \bigg) \cdot \frac{( r-r_{k,j})^3}{6} + \bigO(\delta_n^{4}).
\end{multline}

Using the identities
\begin{align*}
 4\Delta Q(r) =\frac{q'(r)}{r} + q''(r),\qquad
 4\partial_r \Delta Q(r) &= \frac{q''(r)}{r} - \frac{q'(r)}{r^2} + q^{(3)}(r),
 \end{align*} we now rewrite \eqref{harmonic_ext} in the following way
 \begin{multline}\label{harm_ext2}
 V_{\tau}^k(r)-q(r) = -2 \Delta Q(r_{k,j})(r-r_{k,j})^2  + 2\bigg(  \frac{\Delta Q(r_{k,j})}{r_{k,j}}
 -\radlap Q(r_{k,j}) \bigg) \cdot \frac{(r-r_{k,j})^3 }{3} + \bigO(\delta_n^{4}).
\end{multline}

Next we Taylor expand the coefficients in \eqref{harm_ext2} about $r_k$ and obtain the approximation
\begin{multline}\label{rkgen}
 V_{\tau}^k(r)-q(r) = -2 \Delta Q(r_k)(r-r_{k,j})^2 +2\radlap Q(r_k) (r_{k}-r_{k,j})(r-r_{k,j})^2  \\
 + 2\bigg(  \frac{\Delta Q(r_{k})}{r_{k}} -\radlap Q(r_{k}) \bigg) \cdot \frac{(r-r_{k,j})^3 }{3} + \bigO(\delta_n^{4}).
 \end{multline}

In addition we will use the following three approximations (each, for $k=1,2$)
\begin{align}
\tfrac{1}{2}\sqrt{g''_j(r_{k,j})} &=\sqrt{\Delta Q(r_k)}\cdot \bigg(1+\frac{\radlap Q(r_k)}{2\Delta Q(r_k)} (r_{k,j}-r_k) + \bigO(\delta_n^{2}) \bigg),\\
\frac{1}{r_{k,j}} &= \frac{1}{r_k} \cdot \bigg(    1- \frac{r_{k,j}-r_k}{r_k}   +  \bigO(\delta_n^{2})\bigg),\\
 e^{s(\lfun(r)-\lfun(r_{k,j}))}
 &= 1 - s\lfun'(r_k)(r_{k,j}-r) + \bigO(\delta_n^{2}).
\end{align}

Recall the notation
\begin{equation}\label{nots}m=\lfloor Bn/2\rfloor,\qquad x=\{Bn/2\}\end{equation}
for the integer and fractional parts of $Bn/2$, where $B$ is given in \eqref{obs3}.
We now write
$$\ell:=j-m, \qquad(|\ell|\le Cn\delta_n),$$ and
\begin{equation}\label{ykl}y=y_{k,\ell}: = t- \frac{\ell/\sqrt{n}}{r_k \sqrt{2\Delta Q(r_k)}},\qquad (k=1,2).\end{equation}

As a preliminary remark, we may note that $y_{k,\ell}$ is related to $r_{k,j}-r$ via
\begin{equation}\label{Jak}
r_{k,j}-r=\frac {-y_{k,\ell}}{\sqrt{2n\Delta Q(r_k)}} + \bigO(\delta_{n}^{2}),\qquad (k=1,2).
\end{equation}

We will have frequent use of the following lemma.

\begin{lem}\label{Lemma:asymp pj over ck} Let $y=y_{k,\ell}$ be given by \eqref{ykl} and $r$ near $r_k$ by \eqref{upp}. Then
\begin{equation}\frac {|\tilde{p}_j(r)|^2}{c_k(j,n)}=
 \frac{\sqrt{\Delta Q(r_k})}{r_k} e^{-y^2}  \cdot \bigg(1 + \frac {f_k(y)}{\sqrt{n}} + \bigO(n\delta_n^{4}) \bigg)
\end{equation}
where the function $f_k(y)$ is given by the equation
\begin{multline}\label{simplef}
\sqrt{2\Delta Q(r_k)}\, f_k(y)= sy\lambda'(r_k)+ \frac{1}{r_k} (- t+y -2xy-yt^2+2ty^2- \tfrac{2}{3}y^3) \\
+ \frac{\radlap Q(r_k)}{\Delta Q(r_k)} (  ty^2-t^2y-\tfrac{1}{3}y^3 + \tfrac{1}{2}t -\tfrac{1}{2}y  ).
\end{multline}
\end{lem}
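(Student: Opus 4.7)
From \eqref{mpoint} together with $h(r_{k,j})=r_{k,j}e^{s\lambda(r_{k,j})}$, we write
$$\frac{|\tilde p_j(z)|^2}{c_k(j,n)} \;=\; \underbrace{\tfrac{\sqrt{g_j''(r_{k,j})}}{2}}_{A_1}\cdot\underbrace{\tfrac{1}{r_{k,j}}}_{A_2}\cdot\underbrace{e^{s(\lambda(r)-\lambda(r_{k,j}))}}_{A_3}\cdot\underbrace{e^{n(V_\tau^k(r)-q(r))}}_{A_4}.$$
The plan is to Taylor-expand each factor to relative accuracy $O(n^{-1/2})$ and multiply. The three displayed approximations immediately before the lemma handle $A_1,A_2,A_3$, while \eqref{rkgen} handles $A_4$. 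The subtle point is that we must carry the expansion of $u:=r-r_{k,j}$ to one more order than is immediately apparent.

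\textbf{Step 1: Expansion of $u$.} Using \eqref{upp}, \eqref{droplet_tau}, the identities $(rq')'(r_k)=4r_k\Delta Q(r_k)$ and $(rq')''(r_k)=4[\Delta Q(r_k)+r_k\radlap Q(r_k)]$, and $\tau-B/2=(\ell-x)/n$ together with $\ell/\sqrt n=r_k\sqrt{2\Delta Q(r_k)}\,(t-y)$, one gets
$$u=u_0+u_1+u_2^\star+O(\delta_n^3),\qquad u_0=\tfrac{y}{\sqrt{2n\Delta Q(r_k)}},\quad u_1=\tfrac{x}{2n\,r_k\Delta Q(r_k)},\quad u_2^\star=\tfrac{[\Delta Q(r_k)+r_k\radlap Q(r_k)]\,(t-y)^2}{4r_k\Delta Q(r_k)^2\,n}+O(n^{-3/2}).$$
The key observation is that although $u_1,u_2^\star=O(1/n)$, the cross-terms $n\,u_0u_1$ and $n\,u_0 u_2^\star$ are of order $n^{-1/2}$ and therefore contribute to $f_k(y)$.

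\textbf{Step 2: Linear factors $A_1,A_2,A_3$.} Substituting $r_{k,j}-r_k=\frac{t-y}{\sqrt{2n\Delta Q(r_k)}}+O(n^{-1})$ and $r_{k,j}-r=-u_0+O(n^{-1})$ in the three displayed approximations produces, at order $n^{-1/2}$, the corrections $\frac{\radlap Q(r_k)(t-y)}{2\Delta Q(r_k)}$, $\frac{y-t}{r_k}$, and $s\lambda'(r_k)\,y$, each divided by $\sqrt{2n\Delta Q(r_k)}$, all on top of the common prefactor $\frac{\sqrt{\Delta Q(r_k)}}{r_k}$.

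\textbf{Step 3: The exponential $A_4$.} Multiplying \eqref{rkgen} by $n$ and inserting $u=u_0+u_1+u_2^\star+O(\delta_n^3)$ one finds
$$n(V_\tau^k(r)-q(r))=-y^2+\frac{E(y)}{\sqrt{2n\Delta Q(r_k)}}+O(n\delta_n^4),$$
with $E(y)$ assembled from four pieces: (i) $-2n\Delta Q(r_k)\,2u_0u_1=-\frac{2xy}{r_k}$; (ii) $-2n\Delta Q(r_k)\,2u_0u_2^\star=-y(t-y)^2\bigl[\frac1{r_k}+\frac{\radlap Q(r_k)}{\Delta Q(r_k)}\bigr]$; (iii) $2n\radlap Q(r_k)(r_k-r_{k,j})u_0^2=\frac{\radlap Q(r_k)}{\Delta Q(r_k)}(y^3-ty^2)$; and (iv) $\frac{2n}{3}\bigl[\frac{\Delta Q(r_k)}{r_k}-\radlap Q(r_k)\bigr]u_0^3=\frac{y^3}{3r_k}-\frac{y^3\radlap Q(r_k)}{3\Delta Q(r_k)}$. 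Exponentiating gives $A_4=e^{-y^2}\bigl[1+E(y)/\sqrt{2n\Delta Q(r_k)}+O(n\delta_n^4)\bigr]$.

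\textbf{Step 4: Collecting.} Multiplying $A_1A_2A_3A_4$ produces $\frac{\sqrt{\Delta Q(r_k)}}{r_k}e^{-y^2}\bigl[1+\frac{f_k(y)}{\sqrt n}+O(n\delta_n^4)\bigr]$, and setting $\sqrt{2\Delta Q(r_k)}\,f_k(y)$ equal to the sum of all corrections from Steps 2 and 3, the terms regroup: collecting coefficients of $\tfrac1{r_k}$ yields $-t+y-2xy-yt^2+2ty^2-\tfrac23 y^3$ (the $\tfrac{y^3}{3}$ from (iv) combines with $-y^3$ from (ii)), while collecting coefficients of $\tfrac{\radlap Q(r_k)}{\Delta Q(r_k)}$ gives $\tfrac{t-y}{2}+(y^3-ty^2)-\tfrac{y^3}{3}+(-yt^2+2ty^2-y^3)=ty^2-t^2y-\tfrac{y^3}{3}+\tfrac{t}{2}-\tfrac{y}{2}$, matching \eqref{simplef} exactly. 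The main obstacle is Step 1: without the second-order term $u_2^\star$ coming from $(\tau-B/2)^2$ in \eqref{droplet_tau}, the cubic-in-$t,y$ contributions would be missing, and the formula would be wrong.
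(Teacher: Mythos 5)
Your proof is correct and follows exactly the computation the paper describes but omits: factor $|\tilde p_j|^2/c_k(j,n)$ via \eqref{mpoint}, expand each factor using the three displayed approximations together with \eqref{rkgen}, and multiply. Your key observation — that the quadratic-in-$(\tau-B/2)$ term of \eqref{droplet_tau} must be kept because the cross-term $n\,u_0u_2^\star$ with $u_0$ contributes at order $n^{-1/2}$ — is precisely the ``somewhat involved'' part, and your bookkeeping of the $1/r_k$ and $\radlap Q(r_k)/\Delta Q(r_k)$ coefficients reproduces \eqref{simplef} correctly.
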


The proof is based on combining \eqref{mpoint} with the above approximations, via a somewhat involved but straightforward computation; we omit details.

\smallskip

For ease of reference, we finally recall the following fact about second order Riemann sums, which is taken from \cite[Lemma 3.4]{C}. (We refer to \cite{T} for further details about the related Euler-Maclaurin formula.)

\begin{lem}\label{LemmaRS} Let $A,a_0,B,b_0$ be bounded functions of $n$ such that
$$a_n:=An+a_0,\qquad b_n=Bn+b_0$$
are integers. Assume that $B-A$ is positive and bounded away from zero. Let $f$ be a function independent of $n$ which is, for all $n$, defined and $C^4$-smooth in the interval from $\min\{\frac {a_n}n,A\}$ to $\max\{\frac {b_n} n,B\}.$ Then, as $n\to+\infty$,
\begin{align*}\sum_{j=a_n}^{b_n}f(\frac j n)&=n\int_A^Bf(x)\, dx+\frac {(1-2a_0)f(A)+(1+2b_0)f(B)}2+\frac {(-1+6a_0-6a_0^2)f'(A)+(1+6b_0+6b_0^2)f'(B)}{12n}\\
&+\frac {(-a_0+3a_0^2-2a_0^3)f''(A)+(b_0+3b_0^2+2b_0^3)f''(B)}{12n^2}+\bigO\left(\frac {\mathfrak{m}_{A,n}(f''')+\mathfrak{m}_{B,n}(f''')}{n^3}+\sum_{j=a_n}^{b_n-1}\frac {\mathfrak{m}_{j,n}(f^{\mathrm{iv}})}{n^4}\right).
\end{align*}
Here, $\mathfrak{m}_{A,n}(g)=\max\{|g(x)|\,;\, \min\{\frac {a_n} n,A\}\le x\le \max\{\frac {a_n} n,A\}\}$, $\mathfrak{m}_{B,n}(g)=\max\{|g(x)|\,;\, \min\{\frac {b_n} n,B\}\le x\le \max\{\frac {b_n} n,B\}\}$, and for $a_n\le j<b_{n}-1$
$\mathfrak{m}_{j,n}(g)=\max\{|g(x)|\,;\, \frac {j} n\le x\le \frac {j+1} n\}$.
\end{lem}

\smallskip

We are now equipped with the tools to study the one-point function $\tilde{R}_n(z)$ near the edge of the gap $G$.

\subsection{The $r_1$-case} Let us now assume that $r=|z|$ is close to $r_1$, and specifically
\begin{equation}\label{r1case}
r=r_1+\frac t {\sqrt{2n\Delta Q(r_1)}},\qquad (t\in\R,\,|t|\le \delta_{n}\sqrt{n}).
\end{equation}
Using \eqref{RtildeG def}, \eqref{sadel} and the estimate \eqref{lol19}, we
decompose the approximate one-point function $\tilde{R}_n^G(z)$ as
\begin{align}\label{splitting of RtG}
\tilde{R}_n^G(z) = \sqrt{\frac{n}{2\pi}}(S_1 + S_2 + S_3)\cdot (1+\bigO(n^{-1})),
\end{align}
where
\begin{align}
&S_1 = \sum\limits_{j= m-Cn\delta_n}^{m-1} \frac{|z|^{2j}}{c_1(j,n)} e^{-nQ(z)+s\lfun(z)},
\\
&S_2 = \sum\limits_{j=m}^{m+Cn\delta_n} \frac{|z|^{2j}}{c_1(j,n)+c_2(j,n)} e^{-nQ(z)+s\lfun(z)},
\\
&S_3=-\sum\limits_{j=m-Cn\delta_n}^{m-1} \frac{c_2(j,n)}{c_1(j,n)}\frac{|z|^{2j}}{c_1(j,n)+c_2(j,n)}e^{-nQ(z)+s\lfun(z)} .
\end{align}

We begin by analyzing $S_1$. Write

\begin{multline*}
S_1 = \sum\limits_{j= m-Cn\delta_n}^{m-1} \frac{|z|^{2j}}{c_1(j,n)} e^{-nQ(z)+s\lfun(z)} =\sum\limits_{\ell=-Cn\delta_n}^{-1} \frac{\sqrt{\Delta Q(r_1)}}{r_1} \bigg(1 + f_1(y_{1,\ell}) \frac{1}{\sqrt{n}}+\bigO(n\delta_n^{4}) \bigg)\, e^{-y_{1,\ell}^2},
\end{multline*}
with $f_1(y)$ given in \eqref{simplef} and $y_{1,\ell}$ in \eqref{ykl}.

Using a second order Riemann sum approximation (Lemma \ref{LemmaRS}
with $n$ replaced by $\sqrt{n}$ and with $A=-C\delta_{n}\sqrt{n}$, $a_{0}=\lceil -C n \delta_{n} \rceil + C n \delta_{n}$, $B=0$, $b_{0}=-1$), we have
\begin{multline}\label{comp0}
\sum\limits_{\ell=-Cn\delta_n}^{-1} \frac{\sqrt{\Delta Q(r_1)}}{r_1}e^{-y_{1,\ell}^2}= \sqrt{n} \int\limits_{t}^{C\delta_{n}\sqrt{n}} \sqrt{2}\Delta Q(r_1) e^{-y^2} dy - \tfrac{1}{2}e^{-t^2} \frac{\sqrt{\Delta Q(r_1)}}{r_1} + \bigO(n^{-\frac 12}) \\
= \sqrt{n} \int\limits_{t}^{\infty} \sqrt{2}\Delta Q(r_1) e^{-y^2} dy  - \tfrac{1}{2}e^{-t^2} \frac{\sqrt{\Delta Q(r_1)}}{r_1} + \bigO(n^{-\frac 12}), \qquad \mbox{as } n \to \infty,
\end{multline}
and similarly
\begin{multline}\label{comp1}
\frac 1 {\sqrt{n}}\sum\limits_{\ell=-Cn\delta_n}^{-1} \frac{\sqrt{\Delta Q(r_1)}}{r_1}f_1(y_{1,\ell})e^{-y_{1,\ell}^2}=  \int\limits_{t}^{C\delta_{n}\sqrt{n}} \sqrt{2}\Delta Q (r_1) f_1(y)  e^{-y^2} dy + \bigO(n^{-\frac 12})\\
=\int\limits_{t}^{\infty} \sqrt{2}\Delta Q (r_1) f_1(y)  e^{-y^2} dy +\bigO(n^{-\frac 12}) .
\end{multline}

Inserting the expression for $f_1(y)$ in \eqref{simplef}, we compute
\begin{multline}
\int\limits_{t}^{\infty} \sqrt{2}\Delta Q (r_1) f_1(y) e^{-y^2} dy = \frac{\sqrt{\Delta Q(r_1)}}{r_1} e^{-t^2} \Big(  \frac s {2}r_1\lfun'(r_1) +  \frac{t^2+1}{6}
-x  \Big)\\
+ \frac{\nlap Q(r_1)}{\sqrt{\Delta Q(r_1)}}\Big(\frac{1}{2} \sqrt{\pi} t\erfc(t) -\frac{1}{12} e^{-t^2}(2t^2+5)\Big).
\end{multline}

In conclusion we have shown that
\begin{multline}\label{S1asy}
S_1 =  \sqrt{n} \int\limits_{t}^{\infty} \sqrt{2}\Delta Q(r_1) e^{-y^2} dy +   \frac{\sqrt{\Delta Q(r_1)}}{r_1}e^{-t^2}\Big( \frac{s}{2}r_1\lfun'(r_1)  + \tfrac{1}{6} (t^2-2)
- x\Big)\\
+ \frac{\nlap Q(r_1)}{\sqrt{\Delta Q(r_1)}}\Big(\frac{1}{2} \sqrt{\pi} t\erfc(t) -\frac{1}{12} e^{-t^2}(2t^2+5)\Big) + \bigg(\frac {(\delta_{n}\sqrt{n})^{4}}{\sqrt{n}}\bigg), \qquad \mbox{as } n \to \infty.
\end{multline}

This finishes our analysis of $S_1$, and we may turn to the sums $S_2$ and $S_3$.

\smallskip

We begin by rewriting $S_2$ as
\begin{align}\label{S2f}
& S_2 = \sum\limits_{j=m}^{m+Cn\delta_n} \frac{c_1(j,n)^{-1}|z|^{2j}}{1+\tfrac{c_2(j,n)}{c_1(j,n)}} e^{-nQ(z)+s\lfun(z)} =  \sum\limits_{j=m}^{m+Cn\delta_n}  \frac{e^{n(V_{\tau}^1(r) -q(r)) }}{1+\tfrac{c_2(j,n)}{c_1(j,n)}} \frac{\sqrt{g_j''(r_{1,j})}}{2h(r_{1,j})} e^{s\lfun(r)}.
\end{align}

Inserting the expression \eqref{vupp} for $c_k(j,n)$ we find
\begin{equation}\label{qu1}
\frac{c_2(j,n)}{c_1(j,n)} = \sqrt{\frac {g_j''(r_{1,j})}{g_j''(r_{2,j})}}\frac {h(r_{2,j})}
{h(r_{1,j})}e^{n(g_j(r_{1,j})-g_j(r_{2,j}))}.
\end{equation}

This formula suggests that we should estimate the difference $g_j(r_{2,j})-g_j(r_{1,j})$. We shall deduce a little more than we need at the present stage.

\begin{lem} \label{gjcomp} As $n\to\infty$, with $\tau=j/n$, we have
$$g_j(r_{2,j})-g_j(r_{1,j})=-(2\tau-B)\log\frac {r_2}{r_1}+\mathcal{C}
(\tau-\tfrac B 2)^2+\bigO((\tau-\tfrac B 2)^3),$$
where
$$\mathcal{C}=\frac 1 2\bigg(\frac 1 {r_1^2\Delta Q(r_1)}-\frac 1 {r_2^2\Delta Q(r_2)}\bigg).$$
\end{lem}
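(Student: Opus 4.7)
The plan is to regard the left-hand side as the function
\[
F(\tau) := g_{\tau}(r_{2}(\tau)) - g_{\tau}(r_{1}(\tau))
\]
of the continuous parameter $\tau = j/n$ and to Taylor expand it around $\tau = B/2$. Here $r_k(\tau)$ denotes the smooth branch of solutions of the saddle-point equation $q'(r) = 2\tau/r$ from \eqref{lol1} that passes through $r_k$ at $\tau = B/2$. Its existence and $C^{5}$-smoothness follow from the implicit function theorem, whose non-degeneracy condition amounts to $q''(r_k) + q'(r_k)/r_k = 4\Delta Q(r_k) > 0$; this strict positivity is part of the standing assumptions on $Q$.

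The key simplification is the envelope theorem: since $\partial_r g_\tau$ vanishes at $r = r_k(\tau)$, the total derivative of $g_\tau(r_k(\tau))$ in $\tau$ reduces to the partial $\tau$-derivative, yielding
\[
F'(\tau) = -2\log \frac{r_2(\tau)}{r_1(\tau)}.
\]
In particular $F'(B/2) = -2\log(r_2/r_1)$, which produces the linear term $-(2\tau - B)\log(r_2/r_1)$. The constant term $F(B/2) = q(r_2) - q(r_1) - B\log(r_2/r_1)$ vanishes by the identity \eqref{obs3} defining $B$.

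For the quadratic coefficient I would differentiate the saddle-point equation implicitly, obtaining
\[
r_k'(\tau)\bigl(q''(r_k(\tau)) + 2\tau/r_k(\tau)^2\bigr) = 2/r_k(\tau),
\]
which at $\tau = B/2$ collapses via $2\tau/r_k = q'(r_k)$ and the radial Laplacian identity to $r_k'(B/2) = 1/\bigl(2 r_k \Delta Q(r_k)\bigr)$. Substituting into $F''(\tau) = -2r_2'(\tau)/r_2(\tau) + 2r_1'(\tau)/r_1(\tau)$ at $\tau = B/2$ reproduces exactly $2\mathcal{C}$, so the quadratic term is $\mathcal{C}(\tau - B/2)^2$. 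The remainder $\bigO((\tau - B/2)^3)$ then follows from Taylor's theorem, using the $C^6$-smoothness of $Q$ near $\partial G$ to bound $F'''$ locally. There is no serious obstacle in this argument; the envelope-theorem step is what prevents it from becoming a tedious computation involving higher $\tau$-derivatives of the peak-points, and the only point to watch is that the implicit function theorem applies, which is ensured precisely by $\Delta Q(r_k) > 0$.
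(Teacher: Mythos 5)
Your argument is correct, and it takes a genuinely different route from the paper's. The paper proves the lemma by pure algebraic rearrangement: it first notes $g_j(r_2)-g_j(r_1)=-(2\tau-B)\log(r_2/r_1)$ exactly (from \eqref{def of gj} and \eqref{obs3}), then writes $g_j(r_{2,j})-g_j(r_{1,j})$ as this quantity plus the corrections $g_j(r_{k,j})-g_j(r_k)=V_\tau^k(r_k)-q(r_k)$, and finally reads off the quadratic coefficient from the previously derived Taylor expansions \eqref{harmonic_ext} (for $V_\tau^k - q$ about $r_{k,j}$) and \eqref{droplet_tau} (for $r_{k,j}-r_k$). Your approach instead works directly with $F(\tau)=g_\tau(r_2(\tau))-g_\tau(r_1(\tau))$ as a function of $\tau$ and exploits the envelope (stationarity) identity $\frac{d}{d\tau}g_\tau(r_k(\tau))=-2\log r_k(\tau)$, so that the only input needed beyond the definition of $B$ is the first implicit derivative $r_k'(B/2)=1/(2r_k\Delta Q(r_k))$, obtained from differentiating $q'(r)=2\tau/r$. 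What this buys you is that you never need the full second-order expansion of the peak points or the explicit quadratic form of $V_\tau^k-q$; the computation collapses to evaluating $F''(B/2)=2\mathcal{C}$ in one line. The paper's approach, by contrast, reuses machinery (\eqref{droplet_tau}, \eqref{harmonic_ext}) that it has already set up and needs again elsewhere, so in the broader context the two routes are comparable in total effort; but your envelope argument is undeniably the more conceptual and self-contained proof of this particular lemma, and all the smoothness/non-degeneracy checks you flag ($\Delta Q(r_k)>0$ for the implicit function theorem, $C^6$-regularity for the remainder bound) are indeed covered by the paper's standing assumptions.
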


\begin{proof} Using \eqref{def of gj} and \eqref{obs3}, we find
$$g_j(r_2)-g_j(r_1)=-(2\tau-B)\log \frac {r_2}{r_1},$$
and therefore
$$g_j(r_{2,j})-g_j(r_{1,j})=(g_j(r_{2,j})-g_j(r_2))+(g_j(r_1)-g_j(r_{1,j}))-(2\tau-B)\log\frac {r_2}{r_1}.$$
Recalling that $g_j(r)=q(r)-2\tau\log r$ and $V^k_\tau(r)=q(r_{k,j})+2\tau\log\frac r {r_{k,j}}$, we now write
$$g_j(r_{2,j})-g_j(r_{1,j})=(V_\tau^2(r_2)-q(r_2))+(q(r_1)-V_\tau^1(r_1))
-(2\tau-B)\log\frac {r_2}{r_1}.$$
Recalling \eqref{harmonic_ext} and \eqref{droplet_tau}, we conclude that
$$q(r_k)-V^k_\tau(r_k)=\frac {(\tau-\tfrac B 2)^2}{2r_k^2\Delta Q(r_k)}+\bigO((\tau-\tfrac{B}{2})^3).$$

The proof of the lemma is complete.
\end{proof}

As before, write $\ell=j-m$ where $m=\lfloor Bn/2\rfloor$ and $x=Bn/2-m$. By \eqref{taub}, $|\ell|\le Cn\delta_{n} +1$.

The following lemma will come in handy.

\begin{lem}\label{c1c2lem} As $n \to \infty$, we have
\begin{equation}\label{c21}
\frac {c_2(j,n)}{c_1(j,n)}=e^{s(\lfun(r_{2})-\lfun(r_{1}))}\sqrt{\frac
{\Delta Q(r_1)}{\Delta Q(r_2)}}\Big(\frac {r_2}{r_1}\Big)^{2\ell-2x+1}\cdot e^{-\mathcal{C}\ell^2/n}\cdot
\bigg(1+\bigO\Big(
\frac {(\delta_{n}\sqrt{n})^{3}}{\sqrt{n}}\Big) \bigg).
\end{equation}
Moreover, the squared norm $I(\tau)$ in \eqref{sqndef} obeys
\begin{multline}\label{iitau}
I(\tau)=\sqrt{\frac {2\pi} n}(r_1r_2)^{x-\ell}\bigg(\frac {r_1^{2\ell-2x+1}}{\sqrt{\Delta Q(r_1)}}e^{s\lfun(r_1)}e^{\frac{\mathcal{C}}{2}\ell^2/n}+\frac {r_2^{2\ell-2x+1}}{\sqrt{\Delta Q(r_2)}}e^{s\lfun(r_2)}e^{-\frac{\mathcal{C}}{2}\ell^2/n}\bigg)\\
\times e^{-\frac n 2(g_j(r_{1,j})+g_j(r_{2,j}))}\cdot \bigg(1+\bigO\Big(
\frac {(\delta_{n}\sqrt{n})^{3}}{\sqrt{n}}\Big)\bigg), \qquad \mbox{as } n \to \infty.
\end{multline}
Both \eqref{c21} and \eqref{iitau} are valid uniformly for $j$ in the range \eqref{taub}.
\end{lem}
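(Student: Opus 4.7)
The plan is to take \eqref{qu1} as the starting point for \eqref{c21} and the formula $I(\tau)=\sqrt{2\pi/n}(c_1(j,n)+c_2(j,n))(1+\bigO(n^{-1}))$ from \eqref{lol19} as the starting point for \eqref{iitau}, and in both cases just substitute the approximations developed in Subsection \ref{prelcom} together with Lemma \ref{gjcomp}. The key translation between the parametrizations is that, by the definitions in \eqref{nots}, $n\tau - Bn/2=\ell-x$, so that $n(2\tau-B)\log(r_2/r_1)=(2\ell-2x)\log(r_2/r_1)$ and $n(\tau-\tfrac B2)^2=(\ell-x)^2/n$. For $|\ell|\le Cn\delta_n$ this gives $n(\tau-B/2)^3=\bigO((\log n)^3/\sqrt n)$, which will absorb all remainder terms.

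First I would handle \eqref{c21}. In \eqref{qu1} the exponential factor $e^{n(g_j(r_{1,j})-g_j(r_{2,j}))}$ becomes, by Lemma \ref{gjcomp} and the above translation,
\[
(r_2/r_1)^{2\ell-2x}\cdot e^{-\mathcal{C}(\ell-x)^2/n}\cdot\bigl(1+\bigO((\log n)^3/\sqrt n)\bigr).
\]
Replacing $(\ell-x)^2$ by $\ell^2$ introduces a factor $e^{\mathcal{C}(2\ell x-x^2)/n}=1+\bigO(\delta_n)$, which is absorbed. Next, the ratios $h(r_{2,j})/h(r_{1,j})$ and $\sqrt{g_j''(r_{1,j})/g_j''(r_{2,j})}$ are evaluated using $r_{k,j}=r_k+\bigO(\delta_n)$ from \eqref{droplet_tau} and the expansions already recorded in Subsection \ref{prelcom}; they contribute $(r_2/r_1)\,e^{s(\lambda(r_2)-\lambda(r_1))}(1+\bigO(\delta_n))$ and $\sqrt{\Delta Q(r_1)/\Delta Q(r_2)}(1+\bigO(\delta_n))$ respectively. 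The factor $r_2/r_1$ combines with the exponent $2\ell-2x$ to give $(r_2/r_1)^{2\ell-2x+1}$, and \eqref{c21} follows.

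For \eqref{iitau} I would factor $e^{-\frac n2(g_j(r_{1,j})+g_j(r_{2,j}))}$ symmetrically out of both $c_k(j,n)$. Writing
\[
e^{-ng_j(r_{k,j})}=e^{-\frac n2(g_j(r_{1,j})+g_j(r_{2,j}))}\cdot e^{\frac n2\bigl(g_j(r_{3-k,j})-g_j(r_{k,j})\bigr)},
\]
Lemma \ref{gjcomp} gives $e^{\frac n2(g_j(r_{2,j})-g_j(r_{1,j}))}=(r_1/r_2)^{\ell-x}e^{\mathcal{C}(\ell-x)^2/(2n)}(1+\bigO((\log n)^3/\sqrt n))$ and the analogous formula with signs reversed. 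Then $h(r_{k,j})/\sqrt{g_j''(r_{k,j})}=r_k e^{s\lambda(r_k)}/(2\sqrt{\Delta Q(r_k)})\cdot(1+\bigO(\delta_n))$. Combining, $c_k(j,n)$ acquires the prefactor $r_k^{\ell-x+1}r_{3-k}^{x-\ell}/\sqrt{\Delta Q(r_k)}\cdot e^{s\lambda(r_k)}\cdot e^{\pm\mathcal{C}(\ell-x)^2/(2n)}$; pulling out the common $(r_1r_2)^{x-\ell}$ converts each prefactor into $r_k^{2\ell-2x+1}/\sqrt{\Delta Q(r_k)}$, which is exactly the shape appearing in \eqref{iitau}. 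Replacing $(\ell-x)^2$ by $\ell^2$ costs another $(1+\bigO(\delta_n))$, and multiplying by $\sqrt{2\pi/n}$ yields the claimed formula.

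The main obstacle is purely organizational: keeping the various $\bigO$-errors aligned and checking that $\bigO(\delta_n)$ together with $\bigO((\ell-x)^3/n^2)$ for $|\ell|\le Cn\delta_n$ is dominated by $\bigO((\log n)^3/\sqrt n)=\bigO((\delta_n\sqrt n)^3/\sqrt n)$, which accounts for the form of the error stated in the lemma. Uniformity in $j$ in the range \eqref{taub} is automatic because every expansion used (the one for $r_{k,j}$, for $\sqrt{g_j''(r_{k,j})}$, for $h(r_{k,j})$, and Lemma \ref{gjcomp}) is polynomial in $\tau-B/2$, with coefficients depending only on $Q$ and $\lambda$.
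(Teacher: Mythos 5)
Your proof is correct and follows essentially the same route as the paper's: the paper likewise obtains \eqref{c21} directly by inserting Lemma \ref{gjcomp} into \eqref{qu1}, and obtains \eqref{iitau} from \eqref{lol19} and \eqref{vupp} via the symmetric factorization $e^{-ng_j(r_{k,j})}=e^{-\frac n2(g_j(r_{1,j})+g_j(r_{2,j}))}e^{\frac n2(g_j(r_{3-k,j})-g_j(r_{k,j}))}$, then invokes Lemma \ref{gjcomp} and ``simplifies''. You have simply spelled out the bookkeeping (the translation $n(\tau-B/2)=\ell-x$, the absorption of the $\bigO(\delta_n)$ factors from $(\ell-x)^2\mapsto\ell^2$ and from the Taylor expansions of $h(r_{k,j})$ and $\sqrt{g_j''(r_{k,j})}$, and the verification that $n(\tau-B/2)^3=\bigO((\delta_n\sqrt n)^3/\sqrt n)$) that the paper leaves implicit.
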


\begin{proof} The first estimate \eqref{c21} is immediate by using Lemma \ref{gjcomp}
 in \eqref{qu1}.

To prove \eqref{iitau}, we recall from \eqref{lol19}, \eqref{def of h} and \eqref{vupp} that
\begin{multline*}
I(\tau)=\sqrt{\frac {2\pi} n}\bigg(\frac {2r_{1,j}e^{s\lfun(r_{1,j})}e^{\frac n 2(g_j(r_{2,j})-g_j(r_{1,j}))}} {\sqrt{g_j''(r_{1,j})}}+
\frac {2r_{2,j}e^{s\lfun(r_{2,j})}e^{\frac n 2(g_j(r_{1,j})-g_j(r_{2,j}))}} {\sqrt{g_j''(r_{2,j})}}\bigg) \\
\times e^{-\frac n 2(g_j(r_{1,j})+g_j(r_{2,j}))} \cdot (1+ \bigO(n^{-1})), \qquad \mbox{as } n \to \infty.
\end{multline*}
The relation now follows by inserting the asymptotic in Lemma \ref{gjcomp} and simplifying.
\end{proof}

Next, in view of \eqref{rkgen} and \eqref{r1case}, we have
\begin{align}\label{lol2}
n(V_{\tau}^1(r)-q(r))= -\bigg(t-\frac{\ell/\sqrt{n}}{r_{1}\sqrt{2 \Delta Q(r_{1})}}\bigg)^2  +\bigO\bigg(\frac{(\delta_{n}\sqrt{n})^{3}}{\sqrt{n}}\bigg),\qquad (n\to\infty).
\end{align}

Inserting \eqref{c21} and \eqref{lol2} in \eqref{S2f} and denoting
\begin{equation}\label{trecal}\rho:=r_1/r_2,\qquad  a=a(s):=e^{s(\lfun(r_1)-\lfun(r_2))}\sqrt{\frac{\Delta Q(r_2)}{\Delta Q(r_1)}},\qquad  x:=\{\frac {nB}2\},\end{equation}
(note that $a(0)$ in \eqref{trecal} corresponds to $a$ in \eqref{r1r1form}) we get after some simplification that

\begin{equation}\label{s2prel}
S_2 = \frac{\sqrt{\Delta Q(r_1)}}{r_1}\sum\limits_{\ell=0}^{Cn\delta_n} e^{-(t-\frac{\ell/\sqrt{n}}{r_{1}\sqrt{2\Delta Q(r_{1})}})^2} \frac{a\rho^{2(\ell+\frac{1}{2}-x) }e^{\mathcal{C}\ell^2/n}}{1+a\rho^{2(\ell+\frac{1}{2}-x)}e^{\mathcal{C}\ell^2/n}}\cdot \bigg(1+\bigO\Big(\frac {(\delta_{n}\sqrt{n})^{3}}{\sqrt{n}}\Big)\bigg).
\end{equation}
We can analyze the above sums in three steps, in the same way as in \cite[Lemma 3.6]{C}. Since $\rho<1$, the summand becomes exponentially small as $\ell$ gets large. Hence, as a first step, we can replace $\sum_{\ell=0}^{Cn\delta_n}$ by $\sum_{\ell=0}^{C\log n}$ at the cost of an error of order $\bigO(n^{-100})$ (we can ensure that by choosing $C$ sufficiently large). Second, note that $e^{-(t-\frac{y}{r_{1}\sqrt{2\Delta Q(r_{1})}})^2}$ and $e^{\mathcal{C}y^{2}}$ are analytic functions of $y$ in a neighborhood of $0$, and therefore
\begin{align*}
e^{-(t-\frac{\ell/\sqrt{n}}{r_{1}\sqrt{2\Delta Q(r_{1})}})^2} = e^{-t^{2}}\bigg(1+\bigO\bigg( \frac{\log n}{\sqrt{n}} \bigg)\bigg), \qquad e^{\mathcal{C}\ell^2/n} = 1 + \bigO\bigg( \frac{\log n}{\sqrt{n}} \bigg),
\end{align*}
as $n \to \infty$ uniformly for $0 \leq \ell \leq C \log n$. Substituting these asymptotics yields
\begin{align*}
S_2 = e^{-t^{2}}\frac{\sqrt{\Delta Q(r_1)}}{r_1}\sum\limits_{\ell=0}^{C\log n} \frac{a\rho^{2(\ell+\frac{1}{2}-x) }}{1+a\rho^{2(\ell+\frac{1}{2}-x)}}\cdot \bigg(1+\bigO\Big(\frac {(\delta_{n}\sqrt{n})^{3}}{\sqrt{n}}\Big)\bigg).
\end{align*}
Third, since the above summand is exponentially small as $\ell \to \infty$, we can replace $\sum\limits_{\ell=0}^{C\log n}$ by $\sum\limits_{\ell=0}^{\infty}$ at the cost of an error $\bigO(n^{-100})$, and we finally obtain

$$S_2=e^{-t^2} \frac{\sqrt{\Delta Q(r_1)}}{r_1}\sum\limits_{\ell=0}^{\infty} \frac{a\rho^{2(\ell+\frac{1}{2}-x) }}{1+a\rho^{2(\ell+\frac{1}{2}-x)}}\cdot \bigg(1+\bigO\Big(\frac {(\delta_{n}\sqrt{n})^{3}}{\sqrt{n}}\Big)\bigg),\qquad (n\to\infty).$$

A similar analysis (left for the reader) of the sum $S_3$ shows that
$$
S_3 = -e^{-t^2}\frac{\sqrt{\Delta Q(r_1)}}{r_1} \sum\limits_{\ell=0}^{\infty} \frac{a^{-1} \rho^{2(\ell+\frac{1}{2}+x)}}{1+ a^{-1} \rho^{2(\ell+\frac{1}{2}+x)}}\cdot \bigg(1+\bigO\Big(\frac {(\delta_{n}\sqrt{n})^{3}}{\sqrt{n}}\Big)\bigg).
$$

\smallskip

It will be convenient to consider the following special function.

\begin{defn}[``Modified theta function'']
Given three real parameters $x,\rho,a$ with $x \in \mathbb{R}$, $0<\rho<1$, and $a>0$, we define
\begin{align}
\Theta(x;\rho,a) & = x(x-1)\log \rho + x \log a \nonumber \\
& + \sum_{j=0}^{\infty} \log ( 1+a \rho^{2(j+x)} ) + \sum_{j=0}^{\infty} \log ( 1+a^{-1} \rho^{2(j+1-x)} ). \label{def of Theta}
\end{align}
\end{defn}

As shown in \cite{C}, the function $\Theta$ is related to the Jacobi theta function $\theta$ as follows.
\begin{lem}(\cite[Lemma 3.28]{C})\label{lemma: Theta is expressible in terms of Jacobi}
We have
\begin{align*}
\Theta(x;\rho,a) = \frac{1}{2}\log \bigg( \frac{\pi a \rho^{-\frac{1}{2}}}{\log(\rho^{-1})} \bigg) + \frac{(\log a)^{2}}{4\log(\rho^{-1})} - \sum_{j=1}^{+\infty} \log(1-\rho^{2j}) + \log \theta \bigg( x + \frac{\log(a \rho)}{2\log(\rho)} ; \frac{\pi i}{\log(\rho^{-1})} \bigg),
\end{align*}
where $\theta$ is given by \eqref{def of Jacobi theta}.
\end{lem}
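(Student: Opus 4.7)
The plan is to reduce $\Theta(x;\rho,a)$ to a Gaussian lattice sum via the Jacobi triple product identity, and then apply Poisson summation (equivalently, the modular $S$-transformation of $\theta$) to land on the target theta function $\theta(\,\cdot\,;\tfrac{\pi i}{\log(\rho^{-1})})$. First, exponentiating the last two sums in \eqref{def of Theta} and reindexing with $n=j+1$, one gets
\begin{equation*}
\prod_{j=0}^\infty (1+a\rho^{2(j+x)})(1+a^{-1}\rho^{2(j+1-x)})=\prod_{n=1}^\infty (1+w\rho^{2n-1})(1+w^{-1}\rho^{2n-1}),\qquad w:=a\rho^{2x-1}.
\end{equation*}
This fits exactly the shape of Jacobi's triple product identity $\prod_{n\ge 1}(1-\rho^{2n})(1+w\rho^{2n-1})(1+w^{-1}\rho^{2n-1})=\sum_{\ell\in\mathbb{Z}}\rho^{\ell^2}w^\ell$. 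Taking logarithms therefore replaces the two infinite sums in \eqref{def of Theta} by $-\sum_{n\ge 1}\log(1-\rho^{2n})+\log\sum_{\ell}\rho^{\ell^2+(2x-1)\ell}a^\ell$.

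Next, I would convert this lattice sum into the claimed theta function. Completing the square gives $\sum_\ell \rho^{\ell^2+(2x-1)\ell}a^\ell=\rho^{-(2x-1)^2/4}\sum_\ell \rho^{(\ell+\beta)^2}a^\ell$ with $\beta=x-\tfrac12$. Setting $\rho=e^{-\pi t}$ with $t=\log(\rho^{-1})/\pi>0$ and completing the square once more in the combined exponent $-\pi t(\ell+\beta)^2+\ell\log a$ yields
\begin{equation*}
\sum_{\ell}\rho^{(\ell+\beta)^2}a^\ell=\exp\!\Big(-\beta\log a+\tfrac{(\log a)^2}{4\pi t}\Big)\sum_{\ell}e^{-\pi t(\ell+\alpha)^2},\qquad \alpha:=\beta-\tfrac{\log a}{2\pi t}.
\end{equation*}
Applying Poisson summation to the Gaussian on the right swaps $t\leftrightarrow 1/t$ and gives
\begin{equation*}
\sum_{\ell}e^{-\pi t(\ell+\alpha)^2}=\tfrac{1}{\sqrt{t}}\sum_{k}e^{-\pi k^2/t}e^{2\pi i k\alpha}=\tfrac{1}{\sqrt{t}}\,\theta\!\left(\alpha;\tfrac{\pi i}{\log(\rho^{-1})}\right),
\end{equation*}
since $\pi i\cdot\tfrac{\pi i}{\log(\rho^{-1})}=-\pi/t$. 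Using $\tfrac{\log a}{2\pi t}=-\tfrac{\log a}{2\log\rho}$ and the periodicity $\theta(z+1;\tau)=\theta(z;\tau)$, the argument $\alpha=x-\tfrac12+\tfrac{\log a}{2\log\rho}$ is equivalent to $x+\tfrac{\log(a\rho)}{2\log\rho}$, matching the theta argument in the claim.

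Finally, it is a matter of substituting back into \eqref{def of Theta} and bookkeeping. Using the identities $x(x-1)-\tfrac{(2x-1)^2}{4}=-\tfrac14$ (which merges the $\beta^2$ prefactor $\rho^{-(2x-1)^2/4}$ with the $x(x-1)\log\rho$ term) and $x-\beta=\tfrac12$ (which merges $x\log a$ with $-\beta\log a$), the remaining constants collapse into $\tfrac12\log(\pi a\rho^{-1/2}/\log(\rho^{-1}))+\tfrac{(\log a)^2}{4\log(\rho^{-1})}$, producing the stated formula. The only genuinely delicate point is the Poisson/modular step, where one must track the $\sqrt{t}$ factor and the correct branch of $\tau$; the rest is a direct computation.
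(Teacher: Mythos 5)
Your proof is correct. The paper itself does not prove this lemma; it is quoted from \cite[Lemma 3.28]{C}, so there is no internal proof to compare against, but your derivation is a valid self-contained argument and follows the natural (and essentially standard) route: Jacobi's triple product converts the two logarithmic series in \eqref{def of Theta} into $-\sum_{n\ge 1}\log(1-\rho^{2n})+\log\sum_{\ell}\rho^{\ell^{2}+(2x-1)\ell}a^{\ell}$, and Poisson summation (the imaginary modular transformation of $\theta$) turns the resulting Gaussian sum into $\theta(\alpha;\tfrac{\pi i}{\log(\rho^{-1})})$ with the correct factor $t^{-1/2}$, $t=\log(\rho^{-1})/\pi$. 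I checked the delicate points: the substitution $w=a\rho^{2x-1}$ and reindexing are right; the completed-square constants give $e^{-\beta\log a+(\log a)^{2}/(4\pi t)}$ with $\beta=x-\tfrac12$; the argument $\alpha=x-\tfrac12+\tfrac{\log a}{2\log\rho}$ differs from $x+\tfrac{\log(a\rho)}{2\log\rho}$ by exactly $1$, so periodicity applies; and the constants indeed collapse via $x(x-1)-\tfrac{(2x-1)^{2}}{4}=-\tfrac14$ and $x-\beta=\tfrac12$ to $\tfrac12\log\bigl(\pi a\rho^{-1/2}/\log(\rho^{-1})\bigr)+\tfrac{(\log a)^{2}}{4\log(\rho^{-1})}$, matching the statement.
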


At this point, we note that our definition \eqref{def of Theta} of the function  $\Theta(u;\rho,a)$ gives
\begin{multline*}
\partial_{u}\Theta(u;\rho,a) = (2u-1)\log \rho + \log (a)
+ 2 \log (\rho) \bigg\{\sum_{j=0}^{+\infty} \frac{a \rho^{2(j+u)}}{1+a \rho^{2(j+u)}} - \sum_{j=0}^{+\infty} \frac{a^{-1} \rho^{2(j+1-u)}}{1+a^{-1} \rho^{2(j+1-u)}} \bigg\}.
\end{multline*}

Using Lemma \ref{lemma: Theta is expressible in terms of Jacobi}, \eqref{trecal} and $\theta(z+1;\tau)=\theta(z;\tau)$, we infer that
\begin{align*}
\partial_1 \Theta(\tfrac{1}{2}-x; \rho, a(s)) = - (\log \theta)' \bigg( \frac{B}{2}n + \frac{\log a(s)}{2\log(r_2/r_1)} ; \frac{\pi i}{\log(r_{2}/r_{1})} \bigg).
\end{align*}
Therefore,
\begin{align}\label{S2+S3}
S_2 + S_3 = -e^{-t^2} \frac{\sqrt{\Delta Q(r_1)}}{r_1} \Big(  \frac{(\log \theta)' ( \frac{B}{2}n + \frac{\log a(s)}{2\log(r_2/r_1)} ; \frac{\pi i}{\log(r_{2}/r_{1})} ) +\log(a(s))} {2\log \rho} -  x   \Big) \cdot \bigg(1+\bigO\Big(\frac {(\delta_{n}\sqrt{n})^{3}}{\sqrt{n}}\Big)\bigg).
\end{align}

Summing up, using \eqref{splitting of RtG}, \eqref{S1asy} and \eqref{S2+S3}, we have shown that
\begin{multline}\label{sumup}
\tilde{R}_n^G(z)= \frac{n}{2}\Delta Q(r_1) \erfc(t) +
\sqrt{\frac{n}{2\pi}} \frac{\sqrt{\Delta Q(r_1)}}{r_1} e^{-t^2}\Bigg[\frac{s}{2}r_1\lfun'(r_1)
+  \frac{1}{6} (t^2-2) - \frac{\log(a(s))}{2\log(\rho)} \\
+r_1\frac{\nlap Q(r_1)}{\Delta Q(r_1)}\Big(\frac{1}{2} \sqrt{\pi} t\erfc(t)e^{t^2} -\frac{1}{12}(2t^2+5)\Big)
 -\frac{(\log \theta)' ( \frac{B}{2}n + \frac{\log a(s)}{2\log(r_2/r_1)} ; \frac{\pi i}{\log(r_{2}/r_{1})} )}{2\log \rho}
\Bigg] + \bigO((\delta_{n}\sqrt{n})^{4}),
\end{multline}
as $n \to \infty$ uniformly for $|t| \leq \delta_{n}\sqrt{n}$.

\smallskip

In view of \eqref{trecal} and Lemma \ref{bock}, we obtain the statement in Theorem \ref{sr1}. $\qed$

\subsection{ The $r_2$-case } Now we consider $r=|z|$ close to the circle $r=r_2$, and more precisely
$$r=|z|=r_2+\frac t {\sqrt{2n\Delta Q(r_2)}},\qquad (t\in\R,\,|t|\le \delta_{n}\sqrt{n}).$$

This case
is similar to the $r_1$-case, but there are some subtle differences, so another careful analysis is in order. We start by writing
$$
\tilde{R}_n^G(z) = \sqrt{\frac{n}{2\pi}}(S_1 + S_2 + S_3),
$$ where
\begin{align}
&S_1 = \sum\limits_{j= m}^{m+Cn\delta_n} \frac{|z|^{2j}}{c_2(j,n)} e^{-nQ(z)+s\lfun(z)},
\\
&S_2 = \sum\limits_{j=m-Cn\delta_n}^{m-1} \frac{|z|^{2j}}{c_1(j,n)+c_2(j,n)} e^{-nQ(z)+s\lfun(z)},
\\
&S_3=-\sum\limits_{j=m}^{m+Cn\delta_n} \frac{c_1(j,n)}{c_2(j,n)}\frac{|z|^{2j}}{c_1(j,n)+c_2(j,n)}e^{-nQ(z)+s\lfun(z)} .
\end{align}

We begin by looking at $S_1$,
\begin{multline}\label{S1new}
S_1 = \sum\limits_{j= m}^{m+Cn\delta_n} \frac{|z|^{2j}}{c_2(j,n)} e^{-nQ(z)+s\lfun(z)} = \sum\limits_{\ell=0}^{Cn\delta_n} \frac{\sqrt{\Delta Q(r_2)}}{r_2}  \bigg(1 + f_2(y_{2,\ell}) \frac{1}{\sqrt{n}}+\bigO(n\delta_n^4) \bigg) e^{-y_{2,\ell}^2},
\end{multline}
where again $\ell=j-m$ while $f_2(y)$ is given by \eqref{simplef} and $y_{2,\ell}$ is given in \eqref{ykl}.

Using a second order Riemann sum approximation (Lemma \ref{LemmaRS}), we find
\begin{multline}
\sum\limits_{\ell=0}^{Cn\delta_n}  \frac{\sqrt{\Delta Q(r_2)}}{r_2} e^{-y_{2,\ell}^2} = \sqrt{n} \int\limits_{-C\log n}^{t} \sqrt{2}\Delta Q(r_{2}) e^{-y^2} dy + \tfrac{1}{2}e^{-t^2} \frac{\sqrt{\Delta Q(r_2)}}{r_2} +  \bigO(n^{-\frac 12})\\
\label{LT1}
= \sqrt{n} \int\limits_{-t}^{\infty} \sqrt{2}\Delta Q(r_2) e^{-y^2} dy  + \tfrac{1}{2}e^{-t^2} \frac{\sqrt{\Delta Q(r_2)}}{r_2} + \bigO(n^{-\frac 12}), \qquad \mbox{as } n \to \infty.
\end{multline}

(The sign of the constant term has changed with respect to \eqref{comp0}, since we now include the index $\ell=0$ in the sum.)

Next we use Riemann sum approximation on the subleading term of $S_1$ and find
\begin{multline}\label{ST1}
\frac{1}{\sqrt{n}}\sum\limits_{\ell=0}^{Cn\delta_n}  \frac{\sqrt{\Delta Q(r_2)}}{r_2} f_2(y_{2,\ell})  e^{-y_{2,\ell}^2 }=  \int\limits_{-C\log n }^{t} \sqrt{2}\Delta Q (r_2) f_2(y)  e^{-y^2} dy + \bigO(n^{-\frac 12})\\
=\int\limits_{-\infty}^{t} \sqrt{2}\Delta Q (r_2) f_2(y)  e^{-y^2} dy + \bigO(n^{-\frac 12}).
\end{multline}

Inserting the formula \eqref{simplef} for $f_2(y)$ we find
\begin{multline}\label{MT1}
\int\limits_{-\infty}^{t} \sqrt{2}\Delta Q(r_2)f_2(y) e^{-y^2} dy =  \frac{\sqrt{\Delta Q(r_2)} }{r_2} e^{-t^2} \Big(   -\frac{s}{2}r_2\lfun'(r_2) + x   - \frac{t^2+1}{6} \Big) \\
+ \frac{\radlap Q(r_2)}{\sqrt{\Delta Q(r_2)} } \Big(   \frac{1}{2} \sqrt{\pi}t \erfc(-t)   +        \frac{1}{12} e^{-t^2}(2t^2+5)   \Big)+\bigO(n^{-\frac 12}).
\end{multline}

Assembling the above information gives the approximation
\begin{multline}\label{S1NF}
S_1 =  \sqrt{n} \int\limits_{-t}^{\infty} \sqrt{2}\Delta Q(r_2) e^{-y^2} dy +   \frac{\sqrt{\Delta Q(r_2)}}{r_2}e^{-t^2}\Big( -\frac{s}{2}r_2\lfun'(r_2)  + \frac{1}{6} (2-t^2)
+ x\Big)\\
+ \frac{\radlap Q(r_2)}{\sqrt{\Delta Q(r_2)}}\Big(\frac{1}{2} \sqrt{\pi} t\erfc(-t) +\frac{1}{12} e^{-t^2}(2t^2+5)\Big)+\bigO\bigg(\frac {(\delta_{n}\sqrt{n})^{4}}{\sqrt{n}}\bigg), \qquad \mbox{as } n \to \infty.
\end{multline}

The sums $S_2$ and $S_3$ are treated similarly as in the $r_1$-case, with the result that

\begin{equation}
S_2 = \frac{\sqrt{\Delta Q(r_2)}}{r_2} e^{-t^2} \sum\limits_{\ell=0}^{\infty}  \frac{a^{-1}\rho^{2(\ell+x+\frac{1}{2})} }{1+a^{-1}\rho^{2(\ell+x+\frac{1}{2}})} +\bigO\bigg(\frac {(\delta_{n}\sqrt{n})^{3}}{\sqrt{n}}\bigg),
\end{equation}
and
\begin{equation}
S_3  = -\frac{\sqrt{\Delta Q(r_2)}}{r_2} e^{-t^2} \sum\limits_{\ell=0}^{\infty} \frac{a\rho^{2(\ell-x+\frac{1}{2})}}{1+a \rho^{2(\ell-x+\frac{1}{2})}} +\bigO\bigg(\frac {(\delta_{n}\sqrt{n})^{3}}{\sqrt{n}}\bigg).
\end{equation}

Hence
$$
S_2 + S_3 = e^{-t^2} \frac{\sqrt{\Delta Q(r_2)}}{r_2} \Big(   \frac{\partial_1 \Theta(x+\tfrac{1}{2}; \rho, 1/a)-\log(1/a)}{2\log {\rho}} -  x   \Big) + \bigO\bigg(\frac {(\delta_{n}\sqrt{n})^{3}}{\sqrt{n}}\bigg).
$$
Using Lemma \ref{lemma: Theta is expressible in terms of Jacobi}, \eqref{trecal} and $\theta(z+1;\tau)=\theta(z;\tau)$, we obtain
\begin{align*}
\partial_1 \Theta(x+\tfrac{1}{2}; \rho, a(s)^{-1}) = (\log \theta)' \bigg( \frac{B}{2}n + \frac{\log a(s)}{2 \log(r_{2}/r_{1})} ; \frac{\pi i}{\log(r_{2}/r_{1})} \bigg),
\end{align*}
and all in all, we obtain the asymptotics, as $n\to\infty$,
\begin{multline}
\tilde{R}_n^G(z)= \frac{n}{2}\Delta Q(r_2) \erfc(-t) \\
+\sqrt{\frac{n}{2\pi}} \frac{\sqrt{\Delta Q(r_2)}}{r_2} e^{-t^2}\Big(-\frac{s}{2}r_2\lfun'(r_2) +  \frac{1}{6} (2-t^2) +\frac{(\log \theta)' ( \frac{B}{2}n + \frac{\log a(s)}{2\log(r_2/r_1)} ; \frac{\pi i}{\log(r_{2}/r_{1})} )+\log(a(s))}{2\log \rho} \Big) \\  +  \sqrt{\frac{n}{2\pi}} e^{-t^2}\frac{\nlap Q(r_2)}{\sqrt{\Delta Q(r_2)}}\Big(\frac{1}{2} \sqrt{\pi} t\erfc(-t)e^{t^2} +\frac{1}{12}(2t^2+5)\Big)
+ \bigO((\delta_{n}\sqrt{n})^{4}).
\end{multline}

\smallskip
 Using Lemma \ref{bock}, we obtain the statement in Theorem \ref{sr2}. $\qed$

\section{Edge fluctuations}\label{sec4}

In this section, we prove Theorem \ref{flucco} on the distribution of fluctuations $\fluct_n\lfun$ (recall the definition \eqref{def of fluctn}), where $\lfun$ is a fixed, radially symmetric, sufficiently smooth function supported in some small neighbourhood of the closure $\overline{G}$, and $d\sigma=n\Delta Q\cdot \1_S\, dA$ is the equilibrium measure.

By now the strategy is clear: Lemma \ref{lem11} tells us how to compute the cumulant generating function using the perturbed $1$-point functions $\tilde{R}_{n,s\lfun}$, and these we know to a good precision thanks to Theorems \ref{sr1}, \ref{sr2} coupled with well-known bulk-asymptotics which is found in the literature.

\subsection{Expectation of fluctuations}
Fix a $C^6$-smooth test function $f$ supported in a small neighbourhood of the closure of the gap $G=\{r_1<|z|<r_2\}$. We do not assume that $f$ is radially symmetric.

Given a random sample $\{z_j\}_1^n$ associated with the perturbed
potential
$$\tilde{Q}=Q-\frac {s\lfun} n,$$
we consider the linear statistic $\fluct_n f$.

The goal of the present section is to use asymptotics for the $1$-point function $\tilde{R}_{n,s\lfun}$ to obtain a good approximation of the expectation
\begin{equation}\label{2be}
\tilde{\mathbb{E}}_{n,s\lfun}(\fluct_n f)=\int_{\mathbb{C}} f\cdot (\tilde{R}_{n,s\lfun}-n\Delta Q\1_S)\, dA.
\end{equation}
To this end, we will use boundary asymptotics from Theorems \ref{sr1} and \ref{sr2}, as well as the following well-known lemmas:

\begin{lem}[``Bulk asymptotics'']\label{wlk}
Assume $Q, \lambda$ are $C^{6}$-smooth in a neighbourhood of $S$. For $z\in \Int S$ such that
$$\dist(z,\d S)\ge \delta_n,$$
we have
\begin{align}\label{bulkr}\tilde{R}_{n,s\lambda}(z)&=n\Delta\tilde{Q}(z)+\frac 1 2 \Delta\log\Delta \tilde{Q}(z)+\bigO(n^{-1})\\
&=n\Delta Q(z)+\frac 1 2 \Delta\log\Delta Q(z)-s\Delta\lfun(z)+\bigO(n^{-1}).\nonumber
\end{align}
\end{lem}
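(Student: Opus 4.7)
The first equality in \eqref{bulkr} is the classical interior bulk expansion, essentially the main result of \cite{A2} (see also \cite{AKM,HW}) applied to the perturbed potential $\tilde{Q}=Q-s\lfun/n$. My plan therefore proceeds in two steps.

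\emph{Step 1 (applying the bulk expansion to $\tilde{Q}$).} Since $\lfun$ is $C^6$ and supported in a fixed compact set, the perturbation $\tilde{Q}-Q=-s\lfun/n$ has $C^6$-norm of order $\bigO(1/n)$. Consequently, for $n$ large, $\tilde{Q}$ inherits the $C^6$-smoothness and strict subharmonicity of $Q$ in a neighbourhood of $S$, uniformly for $|s|\le 1$. A standard perturbation argument based on the Perron envelope characterization \eqref{envo} shows that the droplet $\tilde{S}$ of $\tilde{Q}$ satisfies $d_H(S,\tilde{S})=\bigO(1/n)$ in Hausdorff distance. Hence any $z$ with $\dist(z,\d S)\ge \delta_n$ also satisfies $\dist(z,\d\tilde{S})\ge \delta_n/2$ for $n$ large. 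Applying the interior expansion of \cite{A2} to $\tilde{Q}$ at such $z$ yields
\[
\tilde{R}_{n,s\lfun}(z)=n\Delta\tilde{Q}(z)+\tfrac12\Delta\log\Delta\tilde{Q}(z)+\bigO(n^{-1}),
\]
with the error uniform over the specified region. (The approach is the same saddle-point + Laplace-method analysis used in Section \ref{sec3} for the gap, except that in the bulk each weighted polynomial $\tilde{p}_j$ has a single peak-point $r_j$ solving $q'(r_j)=2j/(nr_j)$, and one sums over $j$ via Euler--Maclaurin; the leading term from the summation gives $n\Delta\tilde{Q}(z)$ and the Laplace-subleading term assembles into $\tfrac12\Delta\log\Delta\tilde{Q}(z)$.)

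\emph{Step 2 (passing from $\tilde{Q}$ to $Q$).} This step is purely algebraic. We have
\[
n\Delta\tilde{Q}(z)=n\Delta Q(z)-s\Delta\lfun(z),
\]
while expanding the logarithm gives
\[
\log\Delta\tilde{Q}(z)=\log\Delta Q(z)-\frac{s\Delta\lfun(z)}{n\Delta Q(z)}+\bigO(n^{-2}),
\]
uniformly on the region $\{\Delta Q\ge c>0\}$, which contains $S$. Taking Laplacians (using that $\lfun$ is $C^6$ so that $\Delta(\Delta\lfun/\Delta Q)$ is bounded) yields $\tfrac12\Delta\log\Delta\tilde{Q}(z)=\tfrac12\Delta\log\Delta Q(z)+\bigO(n^{-1})$. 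Substituting into the expression from Step 1 produces the second equality.

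The main obstacle is ensuring that the interior expansion of \cite{A2} remains valid, uniformly in $s$ with $|s|\le 1$, when the potential is perturbed by a term of size $\bigO(1/n)$; in particular one needs the expansion to hold up to the distance $\delta_n$ from the boundary, rather than a larger fixed distance. This is implicit in the localized construction of \cite{A2} (the arguments there are local and only use $C^6$-regularity plus a positive lower bound on $\Delta Q$), but it must be tracked explicitly.
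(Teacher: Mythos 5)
Your proposal is correct and follows essentially the same route as the paper, which simply cites \cite[Corollary 2.3]{A2} (and alternatively \cite{BBS2008}) for the bulk expansion and then rewrites $\Delta\tilde{Q}$, $\Delta\log\Delta\tilde{Q}$ in terms of $Q$ and $\lambda$ exactly as you do in Step 2. The concern you flag at the end — uniformity of the expansion in \cite{A2} under an $\bigO(1/n)$ perturbation of the potential, and up to distance $\delta_n$ from $\d S$ — is real but is precisely what the cited reference is being used to cover; the paper offers no further tracking either, so there is no divergence from the paper's own level of rigour here.
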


A proof of Lemma \ref{wlk} can for example be found in \cite[Corollary 2.3]{A2}. See alternatively \cite{BBS2008} for related, very general results in a setting of several complex variables.

\begin{lem}[``Localization'']\label{loca} Let $\delta(z)=\dist(z,S)$. There are then positive constants $C$ and $c$ such that, whenever $|s|\le 1$,
$$\tilde{R}_{n,s\lfun}(z)\le Ce^{-cn\min\{\delta(z)^2,1\}},\qquad z\in\C.$$
\end{lem}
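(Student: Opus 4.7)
The plan is to bound the 1-point function via a standard Bernstein--Walsh type pointwise estimate on weighted polynomials, combined with the geometry of the obstacle function $\check Q$. First, I would invoke the inequality
\begin{equation}\label{BWLoca}
\frac{|p(z)|^{2}}{\|p\|^{2}} \le Cn\, e^{-n(\tilde Q - \check{\tilde Q})(z)}, \qquad z \in \C,
\end{equation}
valid for any weighted polynomial $p = P\,e^{-n\tilde Q/2}$ with $\deg P \le n-1$, where $\check{\tilde Q}$ denotes the obstacle function for the perturbed potential $\tilde Q = Q - s\lambda/n$. This inequality is a standard consequence of the envelope characterization of the obstacle function, in the spirit of \cite[Lemma 3.3]{AC}. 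Applying it to $p = \tilde p_{j}$ and summing over $j = 0,\dots,n-1$ yields $\tilde R_{n,s\lambda}(z) \le Cn^{2}\,e^{-n(\tilde Q - \check{\tilde Q})(z)}$.

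Second, I would compare $\tilde Q - \check{\tilde Q}$ with the unperturbed quantity $Q - \check Q$. Since $|s| \le 1$ and $\lambda$ is fixed, bounded, and compactly supported, we have $|\tilde Q - Q| \le \const/n$ uniformly on $\C$. The envelope characterization \eqref{envo} then immediately forces $|\check{\tilde Q} - \check Q| \le \const/n$, so $\tilde Q - \check{\tilde Q} \ge (Q - \check Q) - \const/n$ and consequently
$\tilde R_{n,s\lambda}(z) \le Cn^{2}\,e^{-n(Q - \check Q)(z)}$, with the new $C$ independent of $s \in [-1,1]$.

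The last ingredient is the geometric lower bound $(Q - \check Q)(z) \ge c\min\{\delta(z)^{2},\,1\}$ on $\C$. For $z$ in a fixed neighborhood of $\partial S$, the function $Q - \check Q$ vanishes on $S$, is globally $C^{1,1}$, and has strictly positive Laplacian on $\partial S$ by assumption; these properties force quadratic growth in the outward normal direction, giving $(Q - \check Q)(z) \gtrsim \delta(z)^{2}$ in a neighborhood of $\partial S$ (cf.~\cite[Lemma 3.4]{AC}). For $z$ bounded away from $S$, the growth condition $Q(z) - 2\log|z| \to +\infty$ together with $\check Q(z) = 2\log|z| + O(1)$ yields $(Q - \check Q)(z) \ge c > 0$ uniformly outside any fixed compact neighborhood of $S$. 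Combining these two regimes with the previous step (and absorbing the polynomial prefactor $n^{2}$ into a mild reduction of $c$, which is justified in the regime where the bound is actually informative, i.e.~$n\delta(z)^{2} \gg \log n$) concludes the argument.

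The principal technical obstacle is the Bernstein--Walsh estimate \eqref{BWLoca}; the envelope argument for it is classical but must be carried out for the \emph{perturbed} potential $\tilde Q$, which fortunately still satisfies all hypotheses of Subsection \ref{backgr} uniformly in $n$ and in $s \in [-1,1]$. Everything else reduces to a routine perturbation computation and the standard geometry of $Q - \check Q$ near a soft edge.
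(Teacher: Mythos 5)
The paper offers no proof here; it simply cites \cite[Lemma 3.1]{AM}, so there is no in-house argument to compare against. Your reconstruction — pointwise Bernstein--Walsh bound $\tilde R_n \lesssim n^{k} e^{-n(\tilde Q - \check{\tilde Q})}$, a perturbation comparison of the obstacle functions, and the quadratic lower bound $(Q-\check Q)(z)\gtrsim\min\{\delta(z)^2,1\}$ — is exactly the standard route used in \cite{AM,AC}, and the individual ingredients are correct. In particular, the envelope argument showing $|\check{\tilde Q}-\check Q|\le \const/n$ when $|\tilde Q - Q|\le\const/n$ is right, and the near-boundary quadratic growth of $Q-\check Q$ from $C^{1,1}$ regularity plus $\Delta Q>0$ on $\partial S$ is the content of \cite[Lemma 3.4]{AC}.

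That said, there is a genuine gap that you flag but do not resolve: the polynomial prefactor. Summing the single-polynomial estimate over $j=0,\dots,n-1$ gives $\tilde R_n(z)\le Cn^2 e^{-n(\tilde Q - \check{\tilde Q})(z)}$ (the extremal characterization $R_n(z)=\sup\{|u(z)|^2 : u\in\tilde{\calW}_n,\ \|u\|\le 1\}$ would save one factor, yielding $Cn$, but the issue persists). This prefactor cannot be ``absorbed into $c$'' uniformly in $z$: for $z\in\Int S$ we have $\delta(z)=0$ and $\tilde R_n(z)\sim n\Delta Q(z)$, so the inequality with an $n$-independent constant $C$ is simply false there. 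What you have in fact proved is $\tilde R_n(z)\le Cn\,e^{-cn\min\{\delta(z)^2,1\}}$, which is the version that appears in \cite[Lemma 3.1]{AM} (and which is all the paper actually needs, since the lemma is invoked only for $\delta(z)\ge\delta_n=\log n/\sqrt n$, where any fixed power of $n$ is swamped by $e^{-c\log^2 n}$). You should state the bound with the $Cn$ factor rather than gesture at absorbing it, because as written the claimed inequality is not achievable.

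One smaller point: your lower bound on $Q-\check Q$ away from $\partial S$ leans only on the growth of $Q-2\log|z|$ at infinity. That handles $|z|$ large but not bounded $z$ with $\delta(z)\gtrsim 1$, e.g.\ points deep inside a spectral gap $G$. There you need the standing assumption $S=S^*$, i.e.\ $Q>\check Q$ strictly on $\C\setminus S$; together with continuity on compacts this gives $(Q-\check Q)(z)\ge c>0$ on $\{\delta(z)\ge\epsilon_0\}\cap K$ for any compact $K$, which is what your argument requires in the intermediate regime.
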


 Lemma \ref{loca} is much simpler to prove than Lemma \ref{wlk} and can be found in many sources. See \cite[Lemma 3.1]{AM}, for example.
 Also recall that when studying fluctuations using the integral formula in Lemma \ref{lem11}, we may restrict to values of the parameter $s$ for which $0\le s\le 1$. This is done tacitly in the rest of this section.

 \smallskip

In view of Lemma \ref{loca}, we see that in \eqref{2be}, restricting the integration to the $\delta_n$-neighbourhood of the droplet $S$, causes merely a negligible error of order $O(e^{-cn\delta_n^2})$.

With this in mind, we introduce two annular regions
\begin{align*}
\calA(r_k,\delta_n)=\{z\in\C\,:\,||z|-r_k|< \delta_n\},\qquad k=1,2.
\end{align*}

Combining \eqref{2be} with Lemma \ref{wlk} and Lemma \ref{bulkr} it is straightforward to check that
\begin{align}\label{badem}
\tilde{\mathbb{E}}_{n,s\lfun}(\fluct_n f)&=I_{b}+I(r_1)+I(r_2)+\bigO(\delta_n)
\end{align}
where
\begin{equation}\label{ird}
I_{b}=\int_S f\cdot \bigg(\frac 1 2\Delta\log \Delta Q-s\Delta\lfun\bigg)\, dA, \quad I(r_k)=\int_{\calA(r_k,\delta_n)} f\cdot (\tilde{R}_{n,s\lfun}-n\Delta Q\1_S)\, dA,\quad k=1,2.
\end{equation}

The integral $I_{b}$ is already in a suitable form; we turn to estimates for $I(r_1)$ and $I(r_2)$.

\subsection{The integral over $\ROne$.} In the following we consider a fixed point $z\in\ROne$, which we represent uniquely as
$$z=e^{i\theta}\bigg(r_1 + \frac{t}{\sqrt{2n\Delta Q(r_1)}}\bigg),$$
where $0\le \theta<2\pi$ while $t$ is real and $|t|\le \log n$.

We shall use the formula \eqref{finalr1}.

In order to analyze the integral $I(r_1)$, we must estimate the difference
$$\tilde{R}_{n,s\lfun}(z)-n\Delta Q(z)\,\1_S(z).$$
We want to replace ``$\Delta Q(z)$'' by ``$\Delta Q(r_1)$''. By Taylor's formula, using $r_1=|z|-\frac{t}{\sqrt{2n\Delta Q(r_1)}},$ we have

$$\Delta Q(z)=\Delta Q(r_{1})+\frac{t}{\sqrt{2n\Delta Q(r_1)}}\nlap Q(r_1)+\bigO(\delta_n^2).$$

Combining this with \eqref{finalr1}, we get for $|t|\le \log n$ that,

\begin{align}
\tilde{R}_{n,s\lfun}(z)&-n\Delta Q(z)\1_S(z)=n\Delta Q(r_1)\bigg(\frac {\erfc t} 2-\1_{t\le 0}\bigg)  + \sqrt{\frac{n}{2\pi}} \frac{\sqrt{\Delta Q(r_1)}}{r_1} e^{-t^2} \frac{t^2-2}{6} \label{mform2} \\
&+\sqrt{n}\frac{\nlap Q(r_1)}{\sqrt{2\Delta Q(r_1)}}\left(\frac{t\erfc(t)}{2}-t\1_{t<0} -e^{-t^2}\frac{1}{12\sqrt{\pi}}(2t^2+5)\right)\nonumber \\
&+\sqrt{\frac{n}{2\pi}}\frac{\sqrt{\Delta Q(r_1)}}{r_1} e^{-t^2} \Bigg[\frac{\log a(s)}{2\log(r_{2}/r_{1})} + \frac s 2 r_1\lfun'(r_1) \nonumber \\
&+ \frac{1}{2 \log(r_{2}/r_{1})} (\log \theta)' \bigg( \frac{B}{2}n + \frac{\log a(s)}{2\log(r_2/r_1)} ; \frac{\pi i}{\log(r_{2}/r_{1})} \bigg)
\Bigg] + \bigO((\delta_{n}\sqrt{n})^{4}). \nonumber
\end{align}

By \eqref{ird} and \eqref{mform2} we have the approximation
$$I(r_1)=B_{n,1}+C_{n,1}+D_{n,1}+E_{n,1}+\bigO((\delta_{n}\sqrt{n})^{4})\delta_{n}),$$
where the terms are, in turn:
\begin{align*}
B_{n,1}&=n\int_{\ROne}f(z)\Delta Q(r_1)\bigg(\frac {\erfc t}{2}-\1_{t\le 0} \bigg) \, dA(z),\\
C_{n,1}&=\sqrt{\frac n {2\pi}}\frac{\sqrt{\Delta Q(r_1)}}{r_1}  \int_{\ROne} f(z)\,e^{-t^2} \frac{t^2-2}{6}  dA(z),\\
D_{n,1}&=\sqrt{n}\frac {\dn\Delta Q(r_1)}{\sqrt{2\Delta Q(r_1)}}\int_{\ROne} f(z)\bigg( \frac{t\erfc(t)}{2}-t\1_{t<0}-\frac 1 {12\sqrt{\pi}} (2t^2+5)e^{-t^2}\bigg)\, dA(z),\\
E_{n,1}&=\sqrt{\frac n {2\pi}}\frac {\sqrt{\Delta Q(r_1)}} {r_1}\Bigg[\frac{\log a(s)}{2\log(r_{2}/r_{1})} + \frac s 2 r_1\lfun'(r_1) + \frac{1}{2 \log(r_{2}/r_{1})}  \\
&\times (\log \theta)' \bigg( \frac{B}{2}n +\frac{\log a(s)}{2\log (r_{2}/r_{1})} ; \frac{\pi i}{\log(r_{2}/r_{1})} \bigg)
 \Bigg] \int_{\ROne}f(z)e^{-t^2}
\, dA(z).
\end{align*}
Note that the parameter $s$ only enters the last term $E_{n,1}$.

\smallskip

In the thin annulus $\ROne$ we now expand the smooth function $f$ in a uniformly convergent Fourier series
$$f(re^{i\theta})=\sum_{j=-\infty}^{+\infty}f_j(r)e^{ij\theta},\qquad |r-r_1|\le\delta_n.$$
Writing the above integrals in polar coordinates, we see that only the term $f_0(r)$ will give a non-zero contribution. We can therefore assume that $f(z)=f_0(r)$ is radially symmetric.

A computation of the Jacobian gives
\begin{equation}\label{prex}dA=\frac 1 \pi r\, dr d\theta=\frac 1 \pi \bigg(r_1+\frac{t}{\sqrt{2n\Delta Q(r_1)}}\bigg)\frac 1 {\sqrt{2n\Delta Q(r_1)}}\, dtd\theta.\end{equation}

While we shall occasionally need the exact expression \eqref{prex}, it will for
the most part suffice to use the approximation
\begin{align}\label{approx Jac}
dA=\frac 1 \pi \frac {r_1} {\sqrt{2n\Delta Q(r_1)}}\, dtd\theta\,\cdot (1+\bigO(\delta_n)).
\end{align}

Let us write $\tilde{\calA}_n$ for the domain $\ROne$ in the $(t,\theta)$ coordinates:
$$\tilde{\calA}_n=\{(t,\theta)\, ;\, |t|\le \sqrt{2\Delta Q(r_{1})} \delta_{n} \sqrt{n}\,,\,0\le \theta<2\pi\}.$$

\subsubsection*{Computation of $B_{n,1}$}
To handle the $B_{n,1}$-term, we write
$$f_0(z)-f_0(r_1)=\dn f_0(r_1)\cdot \frac t{\sqrt{2n\Delta Q(r_1)}}+\bigO(\delta_{n}^{2}).$$
 Using that
$$\int_\R\bigg(\frac {\erfc t} 2-\1_{t\le 0}\bigg)\, dt=0,$$
and the precise Jacobian \eqref{prex},
we deduce that
\begin{align*}
B_{n,1}&=\bigg(f_0(r_1)+r_1 \dn f_0(r_1) \bigg) \frac{1}{2\pi}\int_{\tilde{\calA}_n} t\bigg(\frac {\erfc t} 2-\1_{t<0}\bigg)\, dtd\theta +\bigO(\delta_n^{2}\sqrt{n}).
\end{align*}

To evaluate the above integrals we note that
$$\int_{r_{1}-\sqrt{2\Delta Q(r_{1})}\delta_{n}\sqrt{n}}^{r_{1}+\sqrt{2\Delta Q(r_{1})}\delta_{n}\sqrt{n}} t \bigg(\frac {\erfc t} 2-\1_{t<0} \bigg)\, dt=\int_0^{\infty} t\erfc t\, dt + \bigO(e^{-cn\delta_n^2}), \qquad \mbox{as } n \to \infty,$$
and use Fubini's theorem,
\begin{align*}\int_0^\infty t\erfc t\, dt&=\frac 2 {\sqrt{\pi}}\int_0^\infty t\, dt\int_t^\infty e^{-s^2}\, ds=\frac 1 {\sqrt{\pi}}\int_0^\infty s^2 e^{-s^2}\, ds=\frac 1 4,
\end{align*}
so we see that
$$B_{n,1}=\frac{1}{4}\bigg(f_0(r_1)+r_1 \dn f_0(r_1) \bigg)+\bigO(\delta_n^{2}\sqrt{n}).$$

This gives
\begin{align*}
\lim_{n\to\infty}B_{n,1}&=\frac 1 {4}\frac 1 {2\pi}\bigg(\int_0^{2\pi}\dn f_0(r_1)\,r_1d\theta +\frac 1 {r_1}\int_0^{2\pi} f_0(r_1)\, r_1d\theta \bigg)\\
&=\frac 1 {8\pi} \int_{|z|=r_1}\dn f(z)\,|dz|+\frac 1 {8\pi r_1}\int_{|z|=r_1}f(z)\, |dz|.
\end{align*}

\subsubsection*{Computation of $C_{n,1}$}
Using that $\int_\R e^{-t^2}\, dt=\sqrt{\pi}$ and $\int_\R t^2e^{-t^2}\, dt=\frac {\sqrt{\pi}}2$, and using \eqref{approx Jac}, we compute
\begin{align*}
C_{n,1}&=\frac 1 \pi \sqrt{\frac n {2\pi}} \frac{\sqrt{\Delta Q(r_1)}}{r_1} \frac {r_1} {\sqrt{2n\Delta Q(r_1)}}\int_{\tilde{\calA}_n} f(z) \frac{t^2-2}{6} e^{-t^2}\, dtd\theta +\bigO(\delta_n)\\
&= f_0(r_1)\frac 1 {2\pi}\cdot \frac 1 {\sqrt{\pi}} \cdot 2\pi\cdot \int_\R \frac{t^2-2}{6} e^{-t^2}\, dt + \bigO(\delta_n) =f_0(r_1)\frac 1 {\sqrt{\pi}} \bigg( \frac{\sqrt{\pi}}{12} -\frac{\sqrt{\pi}}{3} \bigg) + \bigO(\delta_n) \\
& =-\frac{1}{4} f_0(r_1) + \bigO(\delta_n) =-\frac{1}{8\pi r_{1}}\int_{|z|=r_1} f(z)\,|dz| + \bigO(\delta_n).
\end{align*}

\subsubsection*{Computation of $D_{n,1}$} Using that $\int_0^\infty t\erfc t\, dt=\frac 1 4$, $\int_\R e^{-t^2}\, dt=\sqrt{\pi}$ and $\int_\R t^2e^{-t^2}\, dt=\frac {\sqrt{\pi}}2$,  we find
\begin{align*}
D_{n,1}&=\frac {r_1}{2\pi}\frac {\dn\Delta Q(r_1)}{\Delta Q(r_1)}\int_{\tilde{\calA}_n} f(z)\bigg( \frac{t\erfc t}{2}-\1_{t<0}-\frac 1 {12\sqrt{\pi}} (2t^2+5)e^{-t^2}\bigg)\, dtd\theta + \bigO(\delta_n)\\
&= -\frac{r_1}{4} f_0(r_1)(\dn\log\Delta Q) (r_1)+\bigO(\delta_n) =-\frac {1} {8\pi}\int_{|z|=r_1}f(z)(\dn\log\Delta Q) (z)\,|dz| + \bigO(\delta_n).
\end{align*}

\subsubsection*{Computation of $E_{n,1}$} A straightforward computation using that $\int_\R e^{-t^2}\, dt=\sqrt{\pi}$ shows that
\begin{align*}
E_{n,1}&=f_0(r_1)\Bigg[\frac{\log a(s)}{2\log(r_{2}/r_{1})} + \frac s 2 r_1\lfun'(r_1) + \frac{(\log \theta)' \big( \frac{B}{2}n + \frac{\log a(s)}{\log(r_2/r_1)} ; \frac{\pi i}{\log(r_{2}/r_{1})} \big)}{2 \log(r_{2}/r_{1})} \Bigg] +\bigO(\delta_n).
\end{align*}

\subsection{The integral over $\RTwo$} For $z\in\RTwo$ we write
$$z=e^{i\theta}\bigg(r_2+\frac t {\sqrt{2n\Delta Q(r_2)}}\bigg).$$

Using \eqref{finalr2}, the term $I(r_2)$ in \eqref{ird} decomposes as
$$I(r_2)=B_{n,2}+C_{n,2}+D_{n,2}+E_{n,2}$$
where
\begin{align*}
B_{n,2}&=n\int_{\RTwo}f(z)\Delta Q(r_2)\bigg(\frac {\erfc(-t)} 2-\1_{t\ge 0}\bigg)\, dA(z),\\
C_{n,2}&=-\sqrt{\frac n {2\pi}}\frac{\sqrt{\Delta Q(r_{2})}}{r_{2}}\int_{\RTwo} f(z) \frac{t^2-2}{6} e^{-t^2}\, dA(z),\\
D_{n,2}&=\sqrt{n}\frac {\dn\Delta Q(r_{2})}{\sqrt{2\Delta Q(r_{2})}}\int_{\RTwo} f(z)\bigg( \frac{t\erfc(-t)}{2} - t \1_{t>0}+\frac 1 {12\sqrt{\pi}} (2t^2+5)e^{-t^2}\bigg)\, dA(z),\\
E_{n,2}&=-\sqrt{\frac n {2\pi}}\frac {\sqrt{\Delta Q(r_{2})}} {r_{2}}\Bigg[ \frac{\log a(s)}{2\log(r_{2}/r_{1})}   + \frac s 2 r_2\lfun'(r_2) + \frac{1}{2\log(r_{2}/r_{1})}  \\
& \times (\log \theta)' \bigg( \frac{B}{2}n + \frac{\log a(s)}{2 \log(r_{2}/r_{1}) } ; \frac{\pi i}{\log(r_{2}/r_{1})} \bigg) \Bigg]
\int_{\RTwo}f(z)e^{-t^2}
\, dA(z).
\end{align*}

In a similar way as in the $r_1$-case, we now deduce the asymptotic relations as $n\to\infty$ (note that $\dn=-\partial_{r}$ on $|z|=r_{2}$)
\begin{align*}
B_{n,2}&=\frac 1 {8\pi}\int_{|z|=r_2}\dn f(z)\,|dz| - \frac 1 {8\pi r_{2}}\int_{|z|=r_{2}}f(z)\,|dz| +\bigO(\delta_n^{2}\sqrt{n}),\\
C_{n,2}&=\frac{1}{8\pi r_{2}} \int_{|z|=r_{2}} f(z)\,|dz|+ \bigO(\delta_n), \quad \\
D_{n,2}&=-\frac 1 {8\pi}\int_{|z|=r_2}f(z)(\dn\log\Delta Q)(z)\, |dz|+\bigO(\delta_n),\\
E_{n,2}&=-\frac 1 {2\pi r_2}\Bigg[ \frac{\log a(s)}{2\log(r_{2}/r_{1})}   + \frac s 2 r_2\lfun'(r_2) + \frac{1}{2\log(r_{2}/r_{1})}  \\
&   \times (\log \theta)' \bigg( \frac{B}{2}n + \frac{\log a(s)}{2\log(r_2/r_1)} ; \frac{\pi i}{\log(r_{2}/r_{1})} \bigg) \Bigg]\int_{|z|=r_2}f(z)\,|dz|+\bigO(\delta_n).
\end{align*}

\subsection{Conclusion of the proof} Let us denote the average values of a function $f$ with respect to the
boundary component $\{z:|z|=r_{k}\}$ of the gap $G$ by
\begin{equation}\label{perioddef}
M_k(f)=\frac 1 {2\pi r_k}\int_{|z|=r_k}f(z)\, |dz|,\qquad k=1,2.
\end{equation}

Using the decomposition \eqref{badem} we obtain
\begin{equation}\label{badem2}
\tilde{\mathbb{E}}_{n,s\lfun}(\fluct_n f)=I(f)+II(f)+III(f)+IV(f)+V(f)+\bigO((\delta_{n}\sqrt{n})^{4}\delta_{n}), \qquad \mbox{as } n \to \infty
\end{equation}
uniformly for $s \in [0,t]$, where
\begin{align*}
I(f)&=\int_S f\bigg(\frac 1 2 \Delta\log\Delta Q-s\Delta \lfun \bigg)\, dA,\\
II(f)&=\frac 1 {8\pi}\int_{\d S}\dn f(z)\,|dz|,\\
III(f)&=-\frac 1 {8\pi}\int_{\d S}f(z)\dn \log \Delta Q(z)\, |dz|, \\
IV(f)&= \frac{\log a(s)}{2\log(r_2/r_1)}(M_{1}(f)-M_{2}(f)) + s \frac{r_{1}\lambda'(r_{1})M_{1}(f)-r_{2}\lambda'(r_{2})M_{2}(f)}{2}, \\
V(f) &  = \frac{M_{1}(f)-M_{2}(f)}{2\log(r_{2}/r_{1})}(\log \theta)' \bigg( \frac{B}{2}n + \frac{\log a(s)}{2\log(r_2/r_1)} ; \frac{\pi i}{\log(r_{2}/r_{1})} \bigg).
\end{align*}

Denote the cumulant generating function of $\lfun$ by $F_{n,\lambda}(t)=\log \mathbb{E}_n (e^{t\fluct_n\lambda}).$

Making use of Lemma \ref{lem11}, we insert $f=\lambda$ in the formula \eqref{badem2} and integrate in $s$ from $0$ to $t$. This gives

\begin{align} \label{cumulant_jacobi}
& F_{n,\lambda} (t) = t\bigg(e_{\lambda} + (\lambda(r_{1})-\lambda(r_{2})) \frac{\log \frac{\Delta Q (r_{2})}{\Delta Q(r_{1})}}{4 \log(r_{2}/r_{1})} \bigg)  \\
&+\frac{t^2}{2}\bigg(v_\lambda + \frac {r_1 \lambda(r_{1})\lfun'(r_1)-r_2 \lambda(r_{2})\lfun'(r_2)}{2} \bigg)   +\calF_n(t)+\bigO\bigg(\frac {\log^5 n}{\sqrt{n}}\bigg) \nonumber,
\end{align}
where $\calF_n(t)$ is given by \eqref{brock}.

Since $\theta(z+1;\tau)=\theta(z;\tau)$, we can replace $\frac{Bn}{2}$ by $x$ in \eqref{brock}. Using then the well-known relation $\theta(\tau^{-1}z;-\tau^{-1}) = e^{\pi i z^{2} \tau^{-1}}\sqrt{-i\tau}\theta(z;\tau)$ (see \cite[eq 21.5.8]{NIST}), we obtain
\begin{align*}
e^{\mathcal{F}_{n}} = \frac{\theta\big(\frac{-i}{\pi}\log(\frac{r_{2}}{r_{1}}) ( \alpha + \frac{t(\lambda(r_{1})-\lambda(r_{2}))}{2\log(\frac{r_{2}}{r_{1}})} ); \frac{i}{\pi}\log(\frac{r_{2}}{r_{1}})\big)}{\theta\big(\frac{-i}{\pi}\log(\frac{r_{2}}{r_{1}}) \alpha ; \frac{i}{\pi}\log(\frac{r_{2}}{r_{1}})\big)}e^{-t(\lambda(r_{1})-\lambda(r_{2}))\alpha},
\end{align*}
where $\alpha = \alpha(n)=x+\frac {\log \frac {\Delta Q(r_2)}{\Delta Q(r_1)}}{4\log\frac {r_2}{r_1}}$. Using \eqref{def of Jacobi theta}, we can rewrite the above as
\begin{align*}
e^{\mathcal{F}_{n}} = \frac{\sum_{\ell=-\infty}^{+\infty} (\frac{r_{1}}{r_{2}})^{(\ell-\alpha)^{2}}e^{(\ell-\alpha)t(\lambda(r_{1})-\lambda(r_{2}))}}{\sum_{\ell=-\infty}^{+\infty} (\frac{r_{1}}{r_{2}})^{(\ell-\alpha)^{2}}} = \mathbb{E}[e^{t(\lambda(r_{1})-\lambda(r_{2}))(X_{n}-\alpha)}],
\end{align*}
where $X_{n}\sim dN(\alpha,\dNpara)$, $\dNpara=(\frac {r_1}{r_2})^2$. This proves \eqref{oscterm}. $\qed$

\begin{rem} In the above formulas, we observe that further simplifications occur if $f$ is radially symmetric, namely that $M_k(f)=f(r_{k})$ (by \eqref{perioddef}).
If $f$ is the real part of an analytic function in the gap $G$, then $M_1(f)=M_2(f)$ and the formulas also simplify.
\end{rem}

\section{Two-point correlations} \label{sec5} In this section, we prove Theorems \ref{thmr1r2} and \ref{thmr1r1} on off-diagonal asymptotics for the kernel $K_n(z,w)$.

\subsection{Preliminary computations} We now prove Theorem \ref{thmr1r2}. To this end consider two points $z$ and $w$ of the form
$$z=e^{i\theta_1}\bigg(r_1+\frac t {\sqrt{2n\Delta Q(r_1)}}\bigg),\qquad w=e^{i\theta_2}\bigg(r_2+\frac s {\sqrt{2n\Delta Q(r_2)}}\bigg).$$

As before, we shall begin by estimating the truncated kernel
$$K_n^G(z,w)=\sum_{|\tau-\frac B 2|\leq C \delta_n}\frac {p_j(z)\overline{p_j(w)}}{I(\tau)},$$
where $p_j(z)=z^je^{-\frac n 2 Q(z)}$ and $I(\tau)=\|p_j\|^2$.

Using the relation \eqref{iitau} (with $s=0$) to approximate $I(\tau)$,
as well as the Taylor approximation $g_j(r_{k,j})=g_j(r_k)+\mathcal{D}(\tau-\tfrac B 2)^2+\bigO((\tau-\tfrac B 2)^3)$ (the exact value of $\mathcal{D}$ will not be important to us),
we find that
\begin{multline*}
K_n^G(z,w)=\sqrt{\frac n {2\pi}}\sum_{|\ell|\leq C n\delta_{n}}
\frac {(z\bar{w})^j(r_1r_2)^{\ell-x}e^{-\frac n 2 (Q(z)+Q(w))}e^{{\frac n 2}(g_j(r_{1,j})+g_j(r_{2,j}))}}
{\frac {r_1^{2\ell-2x+1}}{\sqrt{\Delta Q(r_1)}} e^{\frac{\mathcal{C}}{2}\ell^{2}/n}+\frac {r_2^{2\ell-2x+1}}{\sqrt{\Delta Q(r_2)}} e^{-\frac{\mathcal{C}}{2}\ell^{2}/n}} +\bigO\big((\delta_{n}\sqrt{n})^{3}\big)\\
=\sqrt{\frac n {2\pi}}e^{\frac n 2(Q(r_{1})-Q(z))+\frac n 2(Q(r_{2})-Q(w))}
\sum_{|\ell|\leq C n\delta_{n}}\left(\frac {z\bar{w}}{r_{1}r_{2}}\right)^{j}
\frac{(r_1r_2)^{\ell-x}}{\frac {r_1^{2\ell-2x+1}}{\sqrt{\Delta Q(r_1)}}+\frac {r_2^{2\ell-2x+1}}{\sqrt{\Delta Q(r_2)}}} +\bigO\big((\delta_{n}\sqrt{n})^{3}\big),
\end{multline*}
where in the last step we have used a similar analysis as the one done below \eqref{s2prel}, relying on the fact that the summand is exponentially small as $|l|\to \infty$:
$$ \frac{(r_1r_2)^{\ell-x}}{\frac {r_1^{2\ell-2x+1}}{\sqrt{\Delta Q(r_1)}}+\frac {r_2^{2\ell-2x+1}}{\sqrt{\Delta Q(r_2)}}}\le C_1e^{-c_1|\ell|}, \qquad \mbox{for some } C_1,c_1>0.$$
The above asymptotics can be rewritten as
\begin{multline*}
K_n^G(z,w)=\sqrt{\frac n {2\pi}} e^{\frac n 2(Q(r_{1})-Q(z))+\frac n 2(Q(r_{2})-Q(w))}(z\bar{w})^m (r_1r_2)^{-\tfrac{Bn}{2}}\\
\times\sum_{|\ell|\leq C n\delta_{n}}\frac{(z\bar{w})^\ell}{\frac {r_1^{2\ell-2x+1}}{\sqrt{\Delta Q(r_1)}}+\frac {r_2^{2\ell-2x+1}}{\sqrt{\Delta Q(r_2)}}} +\bigO\big((\delta_{n}\sqrt{n})^{3}\big)
\end{multline*}

Recalling the definition \eqref{wsg} of $\calS^G(z,w;n)$, we now recognize that
\begin{equation}\label{vosm}
K_n^G(z,w)=\sqrt{2\pi n}(z\bar{w})^m (r_1r_2)^{-\tfrac{Bn}{2}}
\calS^G(z,w;n)e^{\frac n 2(Q(r_1)-Q(z)+Q(r_2)-Q(w))}+\bigO\big((\delta_{n}\sqrt{n})^{3}\big).
\end{equation}

By \eqref{pjn}, if $j$ is such that $|\tau-\frac B 2|>C \delta_n$, we have the estimate $\frac {|p_j(z)|}{\|p_j\|}\lesssim e^{-c n\delta_{n}^{2}}$, so we conclude from \eqref{vosm} that $K_n(z,w)=K_n^G(z,w)+\bigO(n^{-100})$.
This concludes our proof of Theorem \ref{thmr1r2}.

\subsection{The $r_1-r_2$ case}\label{section coro} We now prove Corollary \ref{corr1r2}. Using \eqref{vkdef} and \eqref{harm_ext2}, as $n \to \infty$ we get
\begin{align*}
& z^m r_1^{-\frac{Bn}{2}} e^{\frac{n}{2}(Q(r_1)-Q(z))} = e^{i\theta_1 m } |z|^{-x} e^{\frac{n}{2}(V_{\frac{B}{2}}^1(z)- Q(z) )} = e^{i\theta_1 m } |z|^{-x} e^{-\frac{t^2}{2}} + \bigO(n^{-1/2}), \\
& \bar{w}^m r_2^{-\frac{Bn}{2}} e^{\frac{n}{2}(Q(r_2)-Q(w))} = e^{-i\theta_2 m } |w|^{-x} e^{\frac{n}{2}( V_{\frac{B}{2}}^2(w)- Q(w) )} = e^{-i\theta_2 m } |w|^{-x} e^{-\frac{s^2}{2}} + \bigO(n^{-1/2}).
\end{align*}
Combining the above with Theorem \ref{thmr1r2}, the proof of Corollary \ref{corr1r2} follows. $\qed$

\subsection{The $r_1$-$r_1$ case} We now prove Theorem \ref{thmr1r1}. Thus we assume that
$\theta_1\ne \theta_2$, fix two real numbers $s,t$ and put
\begin{align*}
z=\bigg(r_1 + \frac{t}{\sqrt{2n\Delta Q(r_1)}} \bigg)e^{i\theta_1}, \qquad
w=\bigg(r_1 + \frac{s}{\sqrt{2n\Delta Q(r_1)}} \bigg)e^{i\theta_2}
\end{align*}

We will do estimates first for the truncated kernel
\begin{equation}\label{kunk}K_n^G(z,w)=\sum_{|\tau-\frac B 2|\le C\delta_n}\frac
{p_j(z)\overline{p_j(w)}}{\|p_j\|^2}.
\end{equation}
By the end of the proof, we shall be able to transfer the estimates to the full kernel $K_n(z,w)$.

In the following, we keep the notation $\rho =r_1/r_2$ ,  $a=a(0)=\sqrt{\frac{\Delta Q(r_2)}{\Delta Q(r_1)}}$ ,  $m=\lfloor Bn/2\rfloor$ and $x=\{Bn/2\}$.

We adapt our proof of edge asymptotics for $R_n(z)=K_n(z,z)$ in Theorem \ref{sr1} using the partial summation technique developed in \cite{AC} (cf.~also \cite[Chapter 3]{R} for an elementary source on partial summation). To this end, many detailed computations can be recycled modulo some minor notational changes. To avoid tedious repetitions, we shall be brief about such details.

Our starting point is the asymptotic formula
\begin{equation} \label{startp}
K_n^G(z,w) = \sqrt{\frac{n}{2\pi}} (S_1 + S_2+S_3)  (1+\bigO(n^{-1})), \qquad \mbox{as } n \to \infty,
\end{equation}
with
\begin{align}
&S_1 = \sum\limits_{j= m-Cn\delta_n}^{m-1} \frac{(z\bar{w})^j}{c_1(j,n)} e^{-\tfrac{n}{2}(Q(z)+Q(w))} ,
\\
&S_2 = \sum\limits_{j=m}^{m+Cn\delta_n} \frac{(z\bar{w})^{j}}{c_1(j,n)+c_2(j,n)} e^{-\tfrac{n}{2}(Q(z)+Q(w))} ,
\\
&S_3=-\sum\limits_{j=m-Cn\delta_n}^{m-1} \frac{c_2(j,n)}{c_1(j,n)}\frac{(z\bar{w})^{j}}{c_1(j,n)+c_2(j,n)} e^{-\tfrac{n}{2}(Q(z)+Q(w))},
\end{align}
where we used the notation \eqref{vupp}.

The sums $S_2$ and $S_3$ can be treated by adapting our arguments from the diagonal case $z=w$, but the sum $S_1$ requires a new analysis.

We start by rewriting $S_2$ and $S_3$ in the following way
\begin{align}
S_2 = \sum\limits_{j=m}^{m+Cn\delta_n} \frac{e^{i(\theta_1-\theta_2)j }}{1 + \frac{c_2(j,n)}{c_1(j,n)}} \frac{|z\bar{w}|^{j}}{c_1(j,n)}  e^{-\tfrac{n}{2}(Q(z)+Q(w))},
\end{align}
and
\begin{align}
S_3 = -\sum\limits_{j=m-Cn\delta_n}^{m-1}     \frac{e^{i(\theta_1-\theta_2)j}}{1 + \tfrac{c_1(j,n)}{c_2(j,n)}} \frac{|z\bar{w}|^{j}}{c_1(j,n)}  e^{-\tfrac{n}{2}(Q(z)+Q(w))}.
\end{align}

To simplify these expressions, we recall from Lemma \ref{Lemma:asymp pj over ck} that (with $r=|z|$)

\begin{align*}
\frac {|z|^j e^{-\frac{n}{2} Q(z)}} {\sqrt{c_1(j,n)}} = \frac {(\Delta Q(r_1))^{\frac 14}}{\sqrt{r_1}}
e^{-\frac{1}{2} (t-\frac{\ell/\sqrt{n}}{r_{1} \sqrt{2\Delta Q(r_1)}})^2} \cdot (1+\bigO(\delta_n^{2}\sqrt{n})) , \qquad \ell=j-m,
\end{align*}
as $n \to \infty$ uniformly for $|\ell| \leq Cn\delta_{n}$.

Making use of a similar relation with $z$ replaced by $w$, we obtain
$$
S_2 = \frac{\sqrt{\Delta Q(r_1)} }{r_1}        \sum\limits_{j=m}^{m+Cn\delta_n} \frac{e^{i(\theta_1-\theta_2)j }}{1 + \frac{c_2(j,n)}{c_1(j,n)}}e^{-\frac 1 2((t-\frac{\ell/\sqrt{n}}{r_{1}\sqrt{2\Delta Q(r_{1})}})^2+(s-\frac{\ell/\sqrt{n}}{r_{1}\sqrt{2\Delta Q(r_{1})}})^2)} + \bigO(\delta_n^{2}\sqrt{n}),
$$
and
$$
S_3 = -\frac{\sqrt{\Delta Q(r_1)} }{r_1}       \sum\limits_{j=m-Cn\delta_n}^{m-1} \frac{e^{i(\theta_1-\theta_2)j }}{1 + \frac{c_1(j,n)}{c_2(j,n)}}e^{-\frac 1 2((t-\frac{\ell/\sqrt{n}}{r_{1}\sqrt{2\Delta Q(r_{1})}})^2+(s-\frac{\ell/\sqrt{n}}{r_{1}\sqrt{2\Delta Q(r_{1})}})^2)}+ \bigO(\delta_n^{2}\sqrt{n}).
$$
By Lemma \ref{c1c2lem},
\begin{equation}\label{c2c1 Section5}
\frac{c_2(j,n)}{c_1(j,n)} = a^{-1} \Big(\frac{r_2}{r_1} \Big)^{2\ell-2x+1}\cdot e^{-\mathcal{C}\ell^2/n}\cdot
\bigg(1+\bigO\Big(
\frac {(\delta_{n}\sqrt{n})^{3}}{\sqrt{n}}\Big) \bigg),
\end{equation}
where $x=Bn/2-m$ with $m=\lfloor Bn/2 \rfloor$.
In particular, $\frac{c_{2}(j,n)}{c_{1}(j,n)}$ is exponentially large as $\ell \to \infty$ and exponentially small as $\ell \to -\infty$. Hence, following the same three steps done after equation \eqref{s2prel} (see also \cite[Lemma 3.6]{C}), we infer that we can replace $e^{-\frac 1 2((t-\frac{\ell/\sqrt{n}}{r_{1}\sqrt{2\Delta Q(r_{1})}})^2+(s-\frac{\ell/\sqrt{n}}{r_{1}\sqrt{2\Delta Q(r_{1})}})^2)}$ by $e^{-\frac{1}{2}(t^{2}+s^{2})}$ and $e^{-\mathcal{C}\ell^2/n}$ by $1$ in the above sums at the cost of an error of order $\frac{\log n}{\sqrt{n}}$. We thus arrive at
\begin{align*}
& S_2 = \frac{\sqrt{\Delta Q(r_1)} }{r_1}    e^{-\frac 1 2 (t^2+s^2)}  e^{i(\theta_1-\theta_2)m}  \sum\limits_{\ell=0}^{Cn\delta_n} \frac{a \rho^{2\ell-2x+1}e^{i(\theta_1-\theta_2)\ell }}{1 + a \rho^{2\ell-2x+1} } +\bigO\Big(
\frac {(\delta_{n}\sqrt{n})^{3}}{\sqrt{n}}\Big) , \\
& S_3 = - \frac{\sqrt{\Delta Q(r_1)} }{r_1}    e^{-\frac 12(t^2+s^2)} e^{i(\theta_1-\theta_2)m}  \sum\limits_{\ell=-Cn\delta_n}^{-1} \frac{a^{-1} \rho^{-(2\ell-2x+1)} e^{i(\theta_1-\theta_2)\ell}  }{1+a^{-1} \rho^{-(2\ell -2x+1)}} + \bigO\Big(
\frac {(\delta_{n}\sqrt{n})^{3}}{\sqrt{n}}\Big).
\end{align*}
Furthermore, in the above sums $Cn\delta_n$ can be replaced by $\infty$ at the cost of an exponentially small error term.

We next turn to $S_{1}$. The idea behind the following analysis is that since $z\bar{w}$ is close to the unimodular constant $e^{i(\theta_1-\theta_2)}\ne 1$, there should be large cancelations in the sum. This suggests using partial summation, similar to the approach in \cite{AC}.

To this end, using \eqref{def of h} with $s=0$ and \eqref{vupp}, we start by writing
\begin{align*}\frac {z^j}{\sqrt{c_1(j,n)}}e^{-\frac n 2 Q(z)}
=e^{ij\theta_1}\frac {(g_j''(r_{1,j}))^{\frac 1 4}}{(2r_{1,j})^{\frac 1 2}} e^{\frac n 2(V_\tau^1(z)-Q(z))}=e^{ij\theta_1}\frac {(\Delta Q(r_1))^{\frac 1 4}}
{r_1^{\frac 1 2}}e^{\frac n 2(V_\tau^1(z)-Q(z))}\cdot (1+\bigO(\delta_n)).
\end{align*}
Using a similar relation with $z$ replaced by $\bar{w}$, we conclude that
$$
S_1 = \frac{\sqrt{\Delta Q(r_1)}}{r_1}\sum\limits_{j=m-Cn\delta_n}^{m-1} e^{i(\theta_1-\theta_2)j} e^{\tfrac{n}{2}(V_{\tau}^1(z)-Q(z))} e^{\tfrac{n}{2}(V_{\tau}^1(w)-Q(w))} + \bigO(\delta_n).
$$

Again set
$$r=|z|=r_1+t/\sqrt{2n\Delta Q(r_1)}.$$

By the estimates \eqref{rkgen}, \eqref{droplet_tau} and \eqref{Jak}
we have
\begin{align*}n(V_{\tau}^1(r)-q(r)) &= n\Big(-2 \Delta Q(r_1)(r-r_{1,j})^2 +A_1(r-r_{1,j})^2(r_1-r_{1,j})+B_1(r-r_{1,j})^3+\bigO(\delta_n^4) \Big)\\
&=- \Big(t-\frac{\ell/\sqrt{n}}{r_1\sqrt{2\Delta Q(r_1)}}\Big) ^2 +\frac {A_{1}'(\ell/\sqrt{n})^{3}+B_{1}'t(\ell/\sqrt{n})^{2}+C_{1}'t^{2}\ell/\sqrt{n}+D_{1}'t^{3}}{\sqrt{n}}+\bigO(n\delta_n^4),
\end{align*}
where $A_1,B_1,A_1',B_1',C_{1}',D_{1}'$ are some constants whose exact values are irrelevant for what follows.

In this notation, we have
\begin{align}\label{S1 lol}
S_1= \frac{\sqrt{\Delta Q(r_1)}}{r_1} e^{im(\theta_{1}-\theta_{2})} \sum\limits_{\ell = 1}^{Cn\delta_n} c_{\ell} d_\ell + \bigO(\delta_n),
\end{align}
where
\begin{align}\label{billy}
c_{\ell}& = \exp\left( - \tfrac{1}{2} \Big(t+\tfrac{\ell/\sqrt{n}}{r_1\sqrt{2\Delta Q(r_1)}} \Big)^2 -\tfrac{1}{2} \Big(s +\tfrac{\ell/\sqrt{n}}{r_1\sqrt{2\Delta Q(r_1)}} \Big)^2\right) \exp(\bigO(n\delta_n^4)) \\
& \times \exp \bigg(\tfrac{A_{1}'(\ell/\sqrt{n})^{3}+B_{1}'t(\ell/\sqrt{n})^{2}+C_{1}'t^{2}\ell/\sqrt{n}+D_{1}'t^{3}}{2\sqrt{n}} + \tfrac{A_{1}'(\ell/\sqrt{n})^{3}+B_{1}'s(\ell/\sqrt{n})^{2}+C_{1}'s^{2}\ell/\sqrt{n}+D_{1}'s^{3}}{2\sqrt{n}} \bigg), \nonumber
\\
d_{\ell} &= e^{-i(\theta_1-\theta_2)\ell}.
\end{align}

As a simple but useful consequence of \eqref{billy}, we note the estimate
\begin{equation}\label{butt}c_\ell\lesssim e^{-k\ell^2/n},\qquad (1\le \ell\le Cn\delta_n),
\end{equation}
where $k>0$ and where the implied constant is uniform for $s,t$ in any given compact subset of $\mathbb{R}$.

The partial summation formula  gives
$$
\sum\limits_{\ell=1}^{Cn\delta_n} c_\ell d_\ell = c_{Cn\delta_n} D_{Cn\delta_n} -c_{1}D_{0} - \sum\limits_{\ell=1}^{Cn\delta_n-1}(c_{\ell+1}-c_{\ell}) D_\ell,
$$
where
$$D_\ell = \sum\limits_{j=0}^{\ell} d_j = \frac{1-\eps^{\ell+1}}{1-\eps},\qquad  \eps: = e^{-i(\theta_1-\theta_2)}.$$
Hence we can write
\begin{equation}\label{basic}
\sum\limits_{\ell=1}^{Cn\delta_n} c_\ell d_\ell = c_{Cn\delta_n} D_{Cn\delta_n} - c_1D_0 - (c_{Cn\delta_n}-c_1) \frac{1}{1-\eps} + \sum\limits_{\ell=1}^{Cn\delta_n-1} (c_{\ell+1}-c_{\ell}) \frac{\eps^{\ell+1}}{1-\eps}.
\end{equation}

Noting that $c_{Cn\delta_n} = \bigO(e^{-kn\delta_n^2})$ and $D_{\ell}=\bigO(1)$,
we find that
$$
\sum\limits_{\ell=1}^{Cn\delta_n} c_\ell d_\ell = c_1 \frac{\eps}{1-\eps} + \sum\limits_{\ell=1}^{Cn\delta_n-1} (c_{\ell+1}-c_{\ell}) \frac{\eps^{\ell+1}}{1-\eps} + \bigO(e^{-kn\delta_n^2}).
$$

Next we use the cancelations (which follows from \eqref{billy})
\begin{equation}\label{bucklo}c_{\ell+1}-c_\ell =
c_\ell \bigg(-\frac{\sqrt{2}\ell/\sqrt{n}+r_{1}(s+t)\sqrt{\Delta Q(r_{1})}}{r_{1}^{2}\Delta Q(r_{1})\sqrt{2n}}+\bigO(n\delta_n^4)\bigg),\end{equation}
to prove that
\begin{equation}
\label{differens}
\bigg|\sum\limits_{\ell=1}^{Cn\delta_n-1} (c_{\ell+1}-c_{\ell}) \frac{\eps^{\ell +1}}{1-\eps}\bigg| = \frac{\bigO(n^{-\frac 1 2})} {|1-\eps|^2}+\frac{\bigO(n^2\delta_n^5)}
{|1-\eps|}=\bigO\bigg(\frac{(\delta_{n}\sqrt{n})^{5}}{\sqrt{n}}\bigg).
\end{equation}

A detailed proof of \eqref{differens} runs as follows: inserting \eqref{bucklo} we see that
\begin{equation}\label{detail0}\sum\limits_{\ell=1}^{Cn\delta_n-1} (c_{\ell+1}-c_{\ell}) \frac{\eps^{\ell +1}}{1-\eps}=\frac A n \frac 1 {1-\eps}\sum_{\ell=1}^{Cn\delta_n-1}\ell c_\ell \eps^\ell+\frac B {\sqrt{n}}\frac 1 {1-\eps}\sum_{\ell=1}^{Cn\delta_n-1}c_\ell \eps^\ell+\bigO(n^2\delta_n^5),\end{equation}
where $A$ and $B$ are independent of $n$.

Using \eqref{basic} with $c_{\ell}$ replaced by $\ell c_{\ell}$, and using $|\epsilon| \le 1$, $(\ell+1)c_{\ell+1}-\ell c_\ell=\ell(c_{\ell+1}-c_\ell)+c_{\ell+1}$, \eqref{bucklo}, \eqref{butt} and a Riemann sum approximation we obtain
\begin{align*}
\bigg|\sum_{\ell=1}^{Cn\delta_n-1}\ell c_\ell \eps^\ell\bigg|&\lesssim \frac{1}{|1-\eps|}\frac 1 {\sqrt{n}}\sum_{\ell=1}^{Cn\delta_n-2}\ell c_\ell \bigg(1+\frac{\ell}{\sqrt{n}}\bigg) +\frac 1 {|1-\eps|}
\sum_{\ell=1}^{Cn\delta_n-2}c_{\ell+1}\\
&\lesssim \frac{1}{|1-\eps|}\sum_{\ell=1}^{Cn\delta_n-2}\bigg(\frac{\ell}{\sqrt{n}}+\frac{\ell^{2}}{n}\bigg) e^{-k\ell^2/n}+\frac 1 {|1-\eps|}\sum_{\ell=1}^{Cn\delta_n-2} e^{-k\ell^2/n}\\
&\lesssim \frac {\sqrt{n}}{|1-\eps|}\int_0^{C\sqrt{n}\delta_n}(t^{2}+t+1)e^{-kt^2}\, dt\lesssim \frac {\sqrt{n}}{|1-\eps|}.
\end{align*}

A similar estimate shows that
\begin{align*}
\bigg|\sum_{\ell=1}^{Cn\delta_n-1}c_\ell \eps^\ell\bigg|\lesssim \frac{1}{|1-\epsilon|}\int_0^{C\sqrt{n}\delta_n}(t+1)e^{-kt^2}\, dt\lesssim \frac{1}{|1-\epsilon|}.
\end{align*}

Inserting these estimates in \eqref{detail0}, we have verified the estimate \eqref{differens}.

\smallskip

Now note that the first term $c_1$ in \eqref{billy} satisfies
$$c_1 = e^{ -\tfrac{1}{2}(t^2+s^2)} + \bigO(n^{-\frac 12}(\delta_{n}\sqrt{n})^{3}),$$
whence, by \eqref{S1 lol}, \eqref{basic} and the above estimates \eqref{butt}, \eqref{differens},
$$
S_1 =  e^{i(\theta_1-\theta_2)m} \frac{\sqrt{\Delta Q(r_1)}}{r_1} e^{-\tfrac{1}{2}(t^2 +s^2)} \frac{1}{e^{i( \theta_1-\theta_2)}-1} + \bigO\bigg(\frac{(\delta_{n}\sqrt{n})^{5}}{\sqrt{n}}\bigg).
$$

Summing up, we have shown that
\begin{multline}\label{bjudo}K_n^G(z,w)=\frac{\sqrt{n\Delta Q(r_1)}}{\sqrt{2\pi}\,r_1} e^{-\tfrac{1}{2}(t^2 +s^2)} e^{i(\theta_1-\theta_2)m} \\
\times \bigg(  \frac{1}{e^{i( \theta_1-\theta_2)}-1} +         \sum\limits_{\ell=0}^{\infty} \frac{e^{i(\theta_1-\theta_2)\ell }}{1 + a^{-1} \rho^{-(2\ell-2x + 1}) }    -\sum\limits_{\ell=-\infty}^{-1} \frac{ e^{i(\theta_1-\theta_2)\ell}  }{1+a \rho^{2\ell -2x+1}} \bigg)              +       \bigO((\delta_{n}\sqrt{n})^{5}).
\end{multline}

There remains to replace the approximate kernel $K_n^G$ by the full kernel $K_n$.

But if $|\tau-\frac B 2|\ge C\delta_n$ then by \eqref{pjn} we have an easy estimate $\frac {|p_j(z)|}{\|p_j\|}\le e^{-kn\delta_{n}^{2}}$ with a suitable $k>0$.
Using also a similar estimate for $w$, we find
$$|K_n(z,w)-K_n^G(z,w)|=\bigO(e^{-kn\delta_n^2}).$$

Hence \eqref{bjudo} is true also when $K_n^G(z,w)$ is replaced with the full kernel $K_n(z,w)$.
Our proof of Theorem \ref{thmr1r1} is complete. $\qed$

\section{Fluctuations for real parts of analytic functions}\label{flureg}
\label{sec6}
In this section, we state and prove a general theorem on the behaviour of fluctuations $\fluct_n f$
where $f$ is a smooth function of the form $f=\re g$ in a neighbourhood of a general (smooth) gap $G$, i.e., a connected component of $\C\setminus S$ which disconnects the droplet $S$.
When we specialize to the radially symmetric case, we will obtain Theorem \ref{thereg} as an immediate consequence.

We require some preparations.

\subsection{Setup} We now return to the setting of a general potential function $Q$ as in the first paragraph (\ref{backgr}) in the introduction. We do not make any hypotheses about radial symmetry, but we assume that the droplet $S$ has a \textit{gap} $G$, i.e., a component of the complement $\C\setminus S$ which disconnects $S$. It is convenient to assume that the set $G$ is bounded,
and that the boundary $\d G$ consists of two disjoint, real-analytic Jordan curves.

Under these conditions, the obstacle function $\check{Q}$, which is harmonic in $G$, continues harmonically across the boundary to some neighbourhood of the closure $\overline{G}$. Let us fix such a small neighbourhood $N_1$ and denote by $V$ the harmonic continuation of $\check{Q}|_G$ to $N_1$ (to be specific, we assume that $N_1$ is small enough that the boundary $\d N_1\subset \Int S$ and $N_1\cap (\C\setminus S)=G$).

It is important to assume that the Laplacian $\Delta Q$ is \textit{strictly} positive in a small neighbourhood $N_1'$ of the boundary $\d G$ of the gap. Restricting $N_{1}'$ if necessary, we assume without loss of generality that $N_1'\subset N_{1}$. We will also assume that $Q$ is $C^4$-smooth
in $N_{1}'$. We then consider the $C^2$-smooth function
$$L=\log \Delta Q|_{N_1'}.$$
We continue $L$ to a $C^2$-smooth function on $\C$ in a way such that
$$\supp L\subset N_1.$$

In addition to $L$, we will use its Poisson-modification $L^G$. By definition, $L^G=L$ in $\C\setminus G$, while $L^G$ is harmonic in $G$ and continuous up to the boundary. I.e., in $G$, $L^G$ is the solution to Dirichlet's problem with boundary values $L$. Choosing the neighbourhood $N_1$ somewhat smaller we can assume that $L^G|_G$ continues harmonically to a harmonic function $\tilde{L}^G$ on $N_1$.

At the boundary $\d G$, the function $L^G$ has one-sided normal derivatives, but these do not agree in general. The jump in the normal derivative is encoded in the function
\begin{equation}\label{wolg}\calN(L^G)=-\dn L + \dn (\tilde{L}^G)\qquad (\text{on}\, \d G),
\end{equation}
which we call the \emph{Neumann jump} of $L^G$. Here we follow the convention that the normal derivative $\dn$ is taken in the direction of the normal pointing out of $S$ (i.e., into $G$).

\subsection{Statement of result}\label{genarp} We shall consider a class of test-functions on a small neighbourhood of the closure of the gap $G$. It is convenient to fix some notation.

 Given the neighbourhood $N_1$ of $\overline{G}$ above, we fix another neighbourhood $N_2$ such that $\overline{G}\subset N_2\subset\overline{N}_2\subset N_1$. We will consider test-function $f$
 such that
 \begin{enumerate}[label=(\roman*)]
 \item There is an analytic function $g$ on $N_1$ such that
 $$f=\re g\qquad \text{on}\qquad N_2.$$
  \item $f$ is globally $C^2$-smooth and $\supp f\subset N_1$.
 \end{enumerate}

Let $\{z_j\}_1^n$ be a corresponding random sample with respect to the Boltzmann-Gibbs measure \eqref{bogi}, and recall the definition $\fluct_n f=\sum_{j=1}^n f(z_j)-n\sigma(f)$, where $d\sigma=\Delta Q\cdot\1_S dA$ is the equilibrium measure and $\sigma(f) = \int_{\mathbb{C}}fd\sigma$. We will consider the cumulant generating function
$$F_{n,f}(t)=\log \mathbb{E}_n e^{t\fluct_n f}.$$

We will prove the following result.

\begin{thm} \label{gengap} If $f$ is as above, there exists a small number $\beta>0$ such that
\begin{align}\label{asymp of Fnf}
F_{n,f}(t)=te_f+\frac {t^2} 2 v_f+\bigO(n^{-\beta}), \qquad \mbox{as } n \to \infty,
\end{align}
uniformly for $t$ in compact subsets of $\mathbb{R}$, where
\begin{equation}\label{expeco}e_f=\frac 1 2\int_S f\cdot \Delta \log\Delta Q\, dA+\frac 1 {8\pi}
\int_{\d  G}\d_n f\, |dz|+\frac 1 {8\pi}\int_{\d G}f\cdot\calN(L^G)\, |dz|,
\end{equation}
and $v_f=-\int_S f\Delta f\, dA$. In particular $\fluct_n f$ converges in distribution to the normal $N(e_f,v_f)$ as $n\to\infty$.
\end{thm}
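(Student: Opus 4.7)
The plan is to adapt the limit Ward-identity method of \cite{AM} from an outer boundary component to a gap $G$. By Lemma \ref{lem11}, writing $F_{n,f}(t)=\int_0^t \tilde{\mathbb{E}}_{n,sf}(\fluct_n f)\, ds$, it suffices to establish
\begin{equation*}
\tilde{\mathbb{E}}_{n,sf}(\fluct_n f) = e_f + s\, v_f + \bigO(n^{-\beta})
\end{equation*}
uniformly for $s$ in a compact interval; integration in $s$ then produces \eqref{asymp of Fnf}. The integral $\int_{\mathbb{C}} f(\tilde R_{n,sf}-n\Delta Q\cdot \1_S)\,dA$ is restricted, via Lemma \ref{loca}, to a $\delta_n$-neighbourhood of $S$ with $\delta_n = \log n/\sqrt n$, the outside contributing only an exponentially small error.

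I then split the remaining integral into a bulk part (where $\dist(z,\partial S) > \delta_n$) and a thin shell around $\partial S$. On the bulk, Lemma \ref{wlk} gives the pointwise expansion
\begin{equation*}
\tilde R_{n,sf}(z) - n\Delta Q(z) = \tfrac{1}{2}\Delta \log \Delta Q(z) - s\Delta f(z) + \bigO(n^{-1}),
\end{equation*}
and integration against $f$ produces the bulk contributions $\frac{1}{2}\int_S f\,\Delta\log\Delta Q\, dA$ and $-s\int_S f\Delta f\, dA = s v_f$ (which, after integration in $s$, yields the $\frac{t^2}{2}v_f$ term). Because $\supp f \subset N_1$, the boundary shell analysis only matters near $\partial G$; near the other components of $\partial S$, $f$ vanishes.

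For the shell around $\partial G$, I apply the limit Ward identity of \cite{AM}, tested against the holomorphic object $\partial g$ (well-defined in $N_2$ by hypothesis on $f$). Holomorphy of $\partial g$ kills the bulk part of the Ward identity on the shell, and what remains is a boundary line integral which, after expressing the subleading edge profile of $\tilde R_{n,sf}$ through the obstacle function, yields
\begin{equation*}
\tfrac{1}{8\pi}\int_{\partial G} \partial_n f\, |dz| + \tfrac{1}{8\pi}\int_{\partial G} f\cdot \calN(L^G)\, |dz|.
\end{equation*}
The Neumann-jump term $\calN(L^G)$ arises because the subleading edge correction at $\partial G$ is governed by the harmonic extension $\tilde L^G$ of $\log\Delta Q$ across $\partial G$ into $G$, rather than by $L$ itself; the jump $-\dn L + \dn \tilde L^G$ is precisely what a Poisson modification of $L$ inside $G$ produces when the two one-sided edge contributions are combined.

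The main obstacle will be a clean derivation of the Neumann-jump term at a gap boundary. In \cite{AM} the analogous term appears as $-\dn L$ because the exterior $U$ is unbounded and the relevant one-sided harmonic extension reduces to $L$ on $\partial U$ in a natural way; at a gap, by contrast, the bounded ``opposite side'' of $\partial G$ supports a non-trivial harmonic reflection that must be carefully isolated. The cleanest route is to treat the two components of $\partial G$ separately, localize the Ward computation on a collar neighbourhood of each, and observe that the one-sided reflection into $G$ is encoded in $\tilde L^G$. Theorem \ref{thereg} then follows by specialization to radially symmetric $Q$: for $f=\re g$ with $g$ analytic in an annulus, the Laurent coefficient identity $M_1(f)=M_2(f)$ forces $\int_{\partial G} f\, \dn\tilde L^G\, |dz| = 0$, leaving only the $-\dn L$ contribution displayed in Theorem \ref{thereg}.
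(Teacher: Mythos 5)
Your starting point --- Lemma \ref{lem11} reducing $F_{n,f}(t)$ to an integral in $s$ of the perturbed expectation $\tilde{\mathbb{E}}_{n,sf}(\fluct_n f)$ --- matches the paper, and your remark about $M_1(f)=M_2(f)$ explaining the reduction from \eqref{expeco} to \eqref{expf} is correct. But from there your plan diverges from the paper's and runs into a genuine gap. The paper does \emph{not} use a bulk/shell decomposition for Theorem \ref{gengap}. Instead it sets $f_+=g/2$, $f_-=\bar g/2$, and takes $v_\pm=\dbar f_+/\Delta Q,\; \d f_-/\Delta Q$ as the test function in the limit Ward identity (Proposition \ref{dentity}); the crucial observation is that $v_\pm$ vanishes identically on $N_2\cup(\C\setminus N_1)$ (because $f_\pm$ is (anti)holomorphic there), is supported inside $S$ where $Q=\check Q$, and hence $v_\pm\Delta Q+\dbar v_\pm\cdot\d(Q-\check Q)=\dbar f_\pm$ on all of $\C$. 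With this choice the Ward identity directly yields $\nu_n(f_\pm)=\tfrac12\sigma(\d v_\pm)+\bigO(n^{-\beta})$ and $\tilde\nu_n(f_\pm)-\nu_n(f_\pm)=\sigma(v_\pm\cdot\d h)+\bigO(n^{-\beta})$, with no pointwise asymptotics for $\tilde R_n$ anywhere near $\d G$. The $\calN(L^G)$ term then appears at the very end from a purely deterministic Green's-formula manipulation of $\tfrac12\int_\C f\Delta L$ using the Poisson modification $L^G$ of $L=\log\Delta Q$ in $G$.

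The concrete gap in your proposal is the shell. You write that the shell contribution comes out ``after expressing the subleading edge profile of $\tilde R_{n,sf}$ through the obstacle function,'' but that subleading ($\sqrt n$-order) edge profile is precisely the thing that is \emph{not} available for a general non-radial potential: the paper only proves it in the radially symmetric case (Theorems \ref{sr1} and \ref{sr2}), and that is exactly why Theorem \ref{flucco} is proved by bulk-plus-edge-density integration while Theorem \ref{gengap} is proved by the Ward-identity route, which sidesteps edge asymptotics entirely. Similarly, ``testing the Ward identity against $\partial g$ on the shell'' is not a well-defined step: the identity is a global statement and there is no local version of it on a collar. To make your bulk-plus-shell plan rigorous you would need a general-potential analogue of Theorems \ref{sr1}--\ref{sr2}, which the paper does not have and which is a substantially harder theorem than the one you are trying to prove. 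The missing idea is the specific ansatz $v_\pm=\dbar f_+/\Delta Q$ and the exploitation of its support properties.
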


\begin{rem} With a slight extra twist, Theorem \ref{gengap} can be generalized by allowing the addition of a smooth function $f_0$ which vanishes at the boundary $\d G$. This is
done in \cite{AC:Heine},  (cf. also \cite{AM}).
\end{rem}

Before proving Theorem \ref{gengap}, we pause to explain how Theorem \ref{thereg} follows from it.

\begin{proof}[Proof of Theorem \ref{thereg} from Theorem \ref{gengap}] We now specialize to an annular gap of the form $G=\{r_1<|z|<r_2\}$. When comparing the results, the only issue is that the expressions for $e_f$ in \eqref{expeco} and \eqref{expf} do not appear to be quite the same. However, they will be the same if
\begin{equation*}\int_{\d G}f\cdot \calN(L^G)\,|dz|=-\int_{\d G}f\cdot \dn L\,|dz|,
\end{equation*}
i.e., if
\begin{equation}\label{IF}\int_{\d G}f\cdot \dn(\tilde{L}^G)\, |dz|=0.\end{equation}

However, in the present case, we have explicitly
$$\tilde{L}^G(z)=C_1+C_2 \log|z|,\qquad (C_2=\frac {\log(\frac {\Delta Q(r_2)}{\Delta Q(r_1)})}{\log\frac {r_2} {r_1}})$$
where $C_{1}$ is an irrelevant constant. It follows that
$$\int_{\d G}f\cdot \dn(\tilde{L}^G)\, |dz|=2\pi C_2\cdot (M_{1}(f)-M_{2}(f)).$$
where the average value $M_k(f)$ is defined in \eqref{perioddef}.
But $f$ is harmonic in a neighbourhood of $\{r_1\le |z|\le r_2\}$, so the two average values are the same. This proves that \eqref{IF} holds, and the rest of Theorem \ref{thereg} is immediate from Theorem \ref{gengap}.
\end{proof}

\subsection{The limit Ward identity} We now prepare for the proof of Theorem \ref{gengap}.
We will keep it brief, and we refer to \cite{AM} for more details.

We will work with an auxiliary function $v$, which is fixed by the end of the computation.
At the outset, $v$ may be any complex-valued, say $C^1$-smooth, compactly supported function.

Following \cite{AM} we now associate with $v$ and a random sample $\{z_j\}_1^n$ from \eqref{bogi} a random variable denoted $W_n^+[v]$, whose definition is:
\begin{align}\label{Wnplus}
W_n^+[v]:=B_n[v]-\trace_n[v\d Q]+\frac 1 n \trace_n[\d v],
\end{align}
where
$$B_n[v]=\frac 1 {2n}\sum_{1\leq j\ne k\leq n}\frac {v(z_j)-v(z_k)}{z_j-z_k}.$$

(The assignment $v\mapsto W_n^+[v]$ is known as a ``Ward's tensor''; it has a meaning in conformal field theory \cite{KM}. This, however, is not used in what follows.)

The basic \textit{Ward identity} asserts that
$$\mathbb{E}_n[W_n^+[v]]=0$$ for each function $v$ as above. The proof is not hard; in
\cite[Proposition 2.1]{AM} a proof is given depending on a change of variables. There is also a short proof based on integrating by parts, see e.g.~\cite{AKS2,BBNY2}.

In the following, we will consider smooth perturbations of the potential of the following kind. Fix a $C^2$-smooth, bounded real-valued function $h$ and set
$\tilde{Q}=Q-\frac h n.$
Note that $\tilde{\sigma}=\sigma$. Also write
$$d\sigma_n=\frac 1 n R_n\, dA,\qquad d\tilde{\sigma}_n=\frac 1 n \tilde{R}_n\, dA,\qquad d\sigma=\Delta Q\,\1_S\, dA,$$
where $R_n$, $\tilde{R}_n$ are 1-point functions in potentials $Q$, $\tilde{Q}$, respectively.

In potential $\tilde{Q}$ the Ward's tensor becomes
\begin{align}\label{Wtnplus}
\tilde{W}_n^+[v]=B_n[v]-\trace_n[v\d Q]+\frac 1 n \trace_n[\d v+v\d h].
\end{align}
Note that here everything is acting on a random sample $\{\tilde{z}_j\}$ from $\tilde{\mathbb{P}}_n$.

We emphasize that Ward's identity holds also for this kind of perturbed potentials, i.e., we have
\begin{equation}\label{wdid}
\tilde{\mathbb{E}}_n[\tilde{W}_n^+[v]]=0.
\end{equation}

Now denote
$$\nu_n(f):=\mathbb{E}_n(\fluct_n f)=n(\sigma_{n}(f)-\sigma(f)),\qquad \tilde{\nu}_n(f):=\tilde{\mathbb{E}}_n(\fluct_n f)=n(\tilde{\sigma}_{n}(f)-\sigma(f)),$$
and introduce the functions
$$D_n(z)=\nu_n(k_z),\qquad \tilde{D}_n(z)=\tilde{\nu}_n(k_z),$$
where
$$k_z(w)=\frac 1 {z-w}.$$
It is easy to show that these functions are small near infinity: $|\tilde{D}_n(z)|\le Cn/|z|^2$ as $z\to\infty$; see \cite[Lemma 2.4]{AM} for details.

Finally recall that the obstacle function associated to $Q$ is
$$\check{Q}(z)=-2U^\sigma(z)+\gamma=2\int \log|z-w|\, d\sigma(w)+\gamma$$
where $\gamma$ is the Robin's constant. Also recall our standing assumption that $S$ is the contact set:
$$S=\{z\,;\, Q(z)=\check{Q}(z)\}.$$
$\check{Q}$ is harmonic on $\C\setminus S$ and $Q>\check{Q}$ there.

Clearly
$\d \check{Q}(z)=\sigma(k_z)$, so
$$\tilde{D}_n(z)=n(\tilde{\sigma}_{n}(k_{z})-\sigma(k_{z}))=n(\tilde{\sigma}_n(k_z)-\d \check{Q}(z))$$
and
so (in the sense of distributions)
$$\dbar\tilde{D}_n(z)=\tilde{R}_n-n\Delta Q\,\1_S.$$

As a consequence,
$$\tilde{\nu}_n(f)=n(\tilde{\sigma}_{n}(f)-\sigma(f))=\int f\dbar[\tilde{D}_n]\, dA=-\int \tilde{D}_n\cdot\dbar f.$$

We now rewrite Ward's identity, using that (with $\tilde{R}_{n,2}$ the 2-point function in potential $\tilde{Q}$)
\begin{align*}
\tilde{\mathbb{E}}_n[B_n[v]]&=\frac 1 {2n}\int_{\C^2}\frac {v(z)-v(w)}{z-w}\, \tilde{R}_{n,2}(z,w)dA(z)dA(w)\\
&=\frac 1 n\int_{\C^2}  v(z)k_z(w)(\tilde{R}_n(z)\tilde{R}_n(w)-|\tilde{K}_n(z,w)|^2) dA(z)dA(w)\\
&=\int_{\C} v(z)\tilde{\sigma}_n(k_z)\tilde{R}_n(z)\, dA(z)-\frac 1 {2n}
\int_{\C^2}\frac {v(z)-v(w)}{z-w}|\tilde{K}_n(z,w)|^2dA(z)dA(w)
\end{align*}
and
$$\int_{\C} v(z)\tilde{\sigma}_n(k_z)\tilde{R}_n(z)\, dA(z)=\int_{\C} v\cdot \d\check{Q}
\cdot \tilde{R}_n\, dA+\frac 1 n\int_{\C} v\tilde{D}_n\tilde{R}_n\, dA.$$

Combining the above computations, we obtain
\begin{equation}\label{ebn}
\tilde{\mathbb{E}}_n[B_n[v]]=\int_{\C} v\cdot \d\check{Q}\cdot \tilde{R}_n+\frac 1 n\int_{\C} v\tilde{D}_n\tilde{R}_n-\frac 1 {2n}
\int_{\C^2}\frac {v(z)-v(w)}{z-w}|\tilde{K}_n(z,w)|^2.
\end{equation}

Inserting this in Ward's identity \eqref{wdid} we find
\begin{align}\label{wef}\int_{\C} v\cdot \d\check{Q}\cdot \tilde{R}_n&+\frac 1 n\int_{\C} v\tilde{D}_n\tilde{R}_n-\frac 1 {2n}
\int_{\C^2}\frac {v(z)-v(w)}{z-w}|\tilde{K}_n(z,w)|^2-\int_{\C} v\cdot \d Q\cdot \tilde{R}_n\\
&+\frac 1 n\int_{\C}(\d v+v\cdot \d h)\tilde{R}_n=0.\nonumber
\end{align}

This can be written as
\begin{multline}\label{chik}
\int_{\C} v\cdot   \d (\check{Q}-Q)\cdot \tilde{R}_n+\int_S v\cdot \tilde{D}_n\cdot \Delta Q+(\tilde{\sigma}_n-\sigma)(v\cdot\tilde{D}_n)
\\
=\frac 1 {2n}\int_{\C^{2}}\frac {v(z)-v(w)}{z-w}|\tilde{K}_n(z,w)|^2 -
\sigma(\d v+v \cdot\d h)-(\tilde{\sigma}_n-\sigma)(\d v+v\cdot\d h).
\end{multline}

Using estimates proved in \cite{AM}, we now identify negligible terms in the exact relationship \eqref{chik}.

First, by standard estimates on the $1$-point function $\tilde{R}_n$
it is easy to see that the last term is
\begin{equation}\label{3o}
(\tilde{\sigma}_n-\sigma)(\d v+v\cdot\d h)=\bigO \bigg(\frac {\log n}{\sqrt{n}}\bigg).
\end{equation}
(To prove this, it suffices combine the standard bulk and exterior asymptotic estimates $\frac 1 n|\tilde{R}_n(z)-n\Delta Q(z)\1_S(z)|=\bigO(\frac 1 n)$ when $\dist(z,\d S)\ge \delta_n$ with the global estimates
$R_n\lesssim n$, $\tilde{R}_n\lesssim n$. See for instance \cite{AM,A2,BBS2008}, or references given there.)

Moreover, from \cite[Proposition 4.5]{AM}
we have the estimate
\begin{equation}\label{2o}\frac 1 n \int_{\C^{2}}\frac {v(z)-v(w)}{z-w}|\tilde{K}_n(z,w)|^2=\sigma(\d v)+\bigO\bigg(\frac {\log^4 n}{\sqrt{n}}\bigg).
\end{equation}
Finally, by \cite[Proposition 4.6]{AM} there is a small number $\beta>0$ such that
\begin{equation}\label{1o}(\tilde{\sigma}_n-\sigma)(v\cdot\tilde{D}_n)=\bigO(n^{-\beta}).\end{equation}

Inserting \eqref{3o}-\eqref{1o} in \eqref{chik}
and using that $Q=\check{Q}$ on $S$ we now find that
\begin{equation}\label{wdl}\int_{\C\setminus S} v\cdot \d(\check{Q}-Q)\cdot \tilde{R}_n+\sigma(v\cdot\tilde{D}_n)=-\frac 1 2
\sigma(\d v)-\sigma(v\cdot\d h)+\bigO(n^{-\beta}).\end{equation}

It is convenient to rewrite the above formula, using that
$$\dbar \tilde{D}_n=\tilde{R}_n\qquad \text{on}\qquad \C\setminus S.$$
This gives
$$\int_S v\cdot \Delta Q\cdot \tilde{D}_n+\int_{\C\setminus S}v\cdot \d (\check{Q}-Q)\cdot\dbar
\tilde{D}_n=-\frac 1 2
\sigma(\d v)-\sigma(v\cdot\d h)+\bigO(n^{-\beta}).$$
But using that $\check{Q}$ is harmonic in $\C\setminus S$ we have (by a suitable application of Green's formula)
$$\int_{\C\setminus S}v\cdot \d (\check{Q}-Q)\cdot\dbar
\tilde{D}_n=\int_{\C\setminus S} \dbar v\cdot \d (Q-\check{Q})\cdot \tilde{D}_n+
\int_{\C\setminus S} v\cdot\Delta Q\cdot \tilde{D}_n,$$
where we have used that $\d (\check{Q}-Q)=0$ on $\partial (\mathbb{C}\setminus S)$.

Combining the last two identities, we obtain the following result, which corresponds to the ``limit Ward identity'' in \cite{AM}.
\begin{prop}\label{dentity} Let $v$ be any
complex-valued, $C^1$-smooth,
bounded function on $\C$. Then
$$\int_\C [v\Delta Q+\dbar v\cdot \d(Q-\check{Q})]\tilde{D}_n
=-\frac 1 2
\sigma(\d v)-\sigma(v\cdot\d h)+\bigO(n^{-\beta}).$$
\end{prop}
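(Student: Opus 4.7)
The plan is to assemble the pieces already set up in the preceding discussion. Starting from the Ward identity $\tilde{\mathbb{E}}_n[\tilde{W}_n^+[v]]=0$ for the perturbed ensemble, I would first expand $\tilde{\mathbb{E}}_n[B_n[v]]$ using the determinantal relation $\tilde R_{n,2}(z,w)=\tilde R_n(z)\tilde R_n(w)-|\tilde K_n(z,w)|^2$; this produces the exact identity \eqref{ebn}. Substituting into \eqref{wdid} and regrouping according to whether a term is supported on $S$ or involves the factor $\d(Q-\check Q)$, which vanishes on $S$, yields the exact formula \eqref{chik}.

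Next I would invoke three asymptotic inputs from \cite{AM}: the smooth linear statistic bound \eqref{3o}, the two-point kernel integral estimate \eqref{2o} (\cite[Proposition 4.5]{AM}), and the off-diagonal bound \eqref{1o} (\cite[Proposition 4.6]{AM}). Substituted into \eqref{chik}, these simultaneously extract the bulk term $-\tfrac12\sigma(\d v)$ and absorb $(\tilde\sigma_n-\sigma)(v\cdot\tilde D_n)$ into the error, producing the approximate identity \eqref{wdl}.

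Finally, since $\dbar\tilde D_n=\tilde R_n$ on $\C\setminus S$, the complement integral in \eqref{wdl} may be rewritten as $\int_{\C\setminus S} v\cdot\d(\check Q-Q)\cdot\dbar\tilde D_n\,dA$. An application of Green's formula, using that $\d(\check Q-Q)=0$ on $\d S$ (so the boundary term drops, thanks to $C^{1,1}$-matching of $Q$ and $\check Q$) and that $\check Q$ is harmonic on $\C\setminus S$, converts this to
\begin{equation*}
\int_{\C\setminus S}\dbar v\cdot \d(Q-\check Q)\cdot \tilde D_n\,dA+\int_{\C\setminus S} v\cdot\Delta Q\cdot\tilde D_n\,dA.
\end{equation*}
Combining with the interior term $\int_S v\cdot\Delta Q\cdot\tilde D_n\,dA$ already present in \eqref{wdl}, and using that $\dbar v\cdot\d(Q-\check Q)$ is supported off $S$, yields the claimed identity.

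The hardest ingredient in this plan is \eqref{2o}, i.e.~\cite[Proposition 4.5]{AM}, which requires genuine boundary analysis of the correlation kernel $\tilde K_n$; the ultimate $n^{-\beta}$ loss in the statement is however inherited from \eqref{1o} rather than from \eqref{2o}.
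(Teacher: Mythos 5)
Your proposal is correct and follows essentially the same route as the paper's own derivation: expand $\tilde{\mathbb{E}}_n[B_n[v]]$ via the determinantal structure to obtain \eqref{ebn}, substitute into Ward's identity and regroup to get \eqref{chik}, invoke \eqref{3o}--\eqref{1o} from \cite{AM} to reduce to \eqref{wdl}, and finish by rewriting the exterior integral with $\dbar\tilde D_n=\tilde R_n$ and applying Green's formula using $\d(\check Q-Q)=0$ on $\d S$ and harmonicity of $\check Q$ off $S$. Your closing remark that the $n^{-\beta}$ loss is inherited from \eqref{1o} rather than from \eqref{2o} or \eqref{3o} is also consistent with the paper's treatment.
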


\begin{rem} The above proof actually works for less regular functions $v$ that are Lipschitz continuous and uniformly smooth in $\C\setminus \d S$ (and meeting some mild condition on the growth near infinity). This fact is used in \cite{AM} to study fluctuations of general smooth linear statistics when the droplet is connected and equals to the contact set. In \cite{AC:Heine} we exploit this fact concerning Proposition \ref{dentity} in order to study fluctuations in more general regimes; 
see \cite[Sections 8 and 9]{AC:Heine}.
\end{rem}

Armed with the limit Ward identity, we now return to the issue of finding the asymptotic distribution of $\fluct_n f=\sum_1^n f(z_j)-n\sigma(f)$ where $f$ is the real part of an analytic function in each gap.

Following \cite{AM} our strategy is to first find good approximations for the expectations
$$\nu_n(f)=\mathbb{E}_n(\fluct_n f),\qquad \tilde{\nu}_n(f)=\tilde{\mathbb{E}}_n(\fluct_n f),$$
and then use Lemma \ref{lem11} to compute the cumulant generating function with sufficient precision.

For the following computation, we remind that $L(z)$ denotes a fixed smooth extension of $\log\Delta Q|_{N_1'}$ to $\C$, with $\supp L\subset N_1$.

\subsection{Computation of $\nu_n(f)$} Assume as before that $f=\re g$ on $N_{2}$ where $g$ is holomorphic in $N_1$, with $\overline{G}\subset N_{2}\subset \overline{N}_{2}\subset N_{1}$. Thus $f=f_+ + f_-$ where $f_+=g/2$ is analytic and $f_-=\bar{g}/2$ is conjugate-analytic in $N_2$. Also $f_+$,$f_-$ are globally $C^\infty$-smooth and vanish identically outside $N_1$.

Now set
\begin{equation}\label{vanin}v_+=\frac {\dbar f_+}{\Delta Q},\qquad v_-=\frac {\d f_-}{\Delta Q}=\bar{v}_+.\end{equation}
Then $v_+,v_-$ are globally $C^1$-smooth and supported in the interior of the droplet $S$, close to the boundary $\d G$, where $\Delta Q>0$ by hypothesis. (It is understood that $v_+=v_-=0$ identically in $N_{2}$, and also in $\C\setminus N_1$.)

Since $Q=\check{Q}$ on $S$ and since the functions in \eqref{vanin} vanish on $N_{2}\cup (\mathbb{C}\setminus N_{1})$, we infer that $\dbar v_+\cdot \d (Q-\check{Q}) \equiv 0 \equiv \d v_-\cdot\dbar(Q-\check{Q}) $ on $\mathbb{C}$, and thus
\begin{equation}\label{band}
v_+\Delta Q+\dbar v_+\cdot \d (Q-\check{Q})=\dbar f_+,\qquad \text{and}\qquad  v_-\Delta Q+\d v_-\cdot\dbar(Q-\check{Q})=\d f_-.
\end{equation}

Multiplying the first identity in \eqref{band} by $D_n$ and recalling that $\nu_n(f_+)=-\int
D_n\cdot\dbar f_+$, we find that
$$-\nu_n(f_+)=\int [v_+\Delta Q+\dbar v_+\cdot \d(Q-\check{Q})]D_n.$$
By Proposition \ref{dentity} (with $h=0$) we now see that
\begin{equation}\label{brak}
\nu_n(f_+)=\frac 1 2 \sigma(\d v_+)+\bigO(n^{-\beta}),\qquad (n\to\infty).
\end{equation}

Taking complex conjugates, we infer that
\begin{equation}\label{lol5}
\nu_n(f_-)=\frac 1 2 \sigma(\dbar v_-)+\bigO(n^{-\beta}),\qquad (n\to\infty).
\end{equation}

Now observe that (with $L$ the above smooth extension of $\log\Delta Q$)
\begin{align*}
\sigma(\d v^+)&=\int_S\d\left(\frac {\dbar f_+} {\Delta Q}\right)\,\Delta Q\, dA=\int_S \frac {\Delta f_+\Delta Q-\dbar f_+ \cdot\d\Delta Q}{(\Delta Q)^2}\, \Delta Q\, dA\\
&=\int_S\Delta f_+\, dA-\int_S\dbar f_+\cdot \d L\, dA.
\end{align*}

By use of Green's formula we can rewrite this as
\begin{align}\label{lol3}
\sigma(\d v^+)=\int_S\Delta f_+-\int_\C\dbar f_+\cdot \d L
=\int_S\Delta f_++\int_\C f_+\Delta L.
\end{align}
and taking the conjugate yields
\begin{align}\label{lol4}
\sigma(\dbar v_-)&=\int_S\Delta f_-+\int_\C f_-\Delta L.
\end{align}
So, summing up \eqref{brak} and \eqref{lol5}, and using \eqref{lol3} and \eqref{lol4}, we find
\begin{equation}\label{diffa}\nu_n(f)=\frac 1 2 \int_S\Delta f+\frac 1 2 \int_\C f\Delta L+\bigO(n^{-\beta}).\end{equation}

\subsection{Computation of $\tilde{\nu}_n(f)-\nu_n(f)$} As before, we decompose $f=f_++f_-$ where
$f_+=g/2$ and $f_-=\bar{g}/2$, and we define $v_+$ and $v_-$ by \eqref{vanin}.

Multiplying the first identity in \eqref{band} by $\tilde{D}_n$ and recalling that
$\tilde{\nu}_n(f_+)=-\int\tilde{D}_n\cdot\dbar f_+$, we see that
\begin{equation}
-\tilde{\nu}_n(f_+)=\int_\C[v_+\Delta Q+\dbar v_+\cdot \d(Q-\check{Q})]\cdot \tilde{D}_n.
\end{equation}

Applying Proposition \ref{dentity} (with $v$ replaced by $v_{+}$) and subtracting \eqref{brak}, we now obtain
\begin{equation*}\tilde{\nu}_n(f_+)-\nu_n(f_+)=\sigma(v_+\cdot \d h)+\bigO(n^{-\beta})=
\int_S\dbar f_+\cdot \partial h+\bigO(n^{-\beta})=- \int_S \Delta f_+ \cdot h+\bigO(n^{-\beta}).
\end{equation*}
Adding the complex conjugate relation for $f_-$, we obtain
\begin{equation}\label{diffo}\tilde{\nu}_n(f)-\nu_n(f)=-\int_S\Delta f\cdot h\, dA+\bigO(n^{-\beta}).
\end{equation}

\begin{proof}[Proof of Theorem \ref{gengap}] Consider $f=\re g$ as above.
By Lemma \ref{lem11},
\begin{equation}\label{lemp}
F_{n,f}(t):=\log \mathbb{E}_n [e^{t\fluct_n f}]=\int_0^t \tilde{\mathbb{E}}_{n,sf}(\fluct_n f)\, ds,
\end{equation}
where $\tilde{\mathbb{E}}_{n,sf}$ is expectation with respect to the potential
$\tilde{Q}_s=Q-\frac {sf}n.$ Let us now put
$$e_f=\frac 1 2 \int_S\Delta f+\frac 1 2 \int_\C f\Delta L,\qquad
v_f=-\int_\C f\Delta f\, dA.$$

By \eqref{diffa} and \eqref{diffo} (with $h$ replaced by $sf$), there is a small number $\beta>0$ such that
\begin{equation}\label{inta}
\tilde{\mathbb{E}}_{n,sf}(f)-e_f=sv_f+\bigO(n^{-\beta}).
\end{equation}
(It is easy to see that the $\bigO$-constant is uniform for $s$ in bounded subsets of $\R$.)

Integrating \eqref{inta} in $s$ from $0$ to $t$ and using \eqref{lemp}, we find \eqref{asymp of Fnf}. If we ignore the error-term, we recognize the cumulant generating function of a normal $N(e_f,v_f)$-distributed random variable. Since \eqref{asymp of Fnf} holds uniformly for $t$ in compact subsets of $\mathbb{R}$, we have shown that $\fluct_n f$ converges in distribution to $N(e_f,v_f)$.

It remains to prove that the expectation term $e_f$ agrees with the formula
\ref{expeco} in Theorem \ref{gengap}. For this, we use Green's formula and the definition of Neumann's jump $\calN(L^G)$ in \eqref{wolg}.

In detail, we have
$$\frac 1 2\int_S\Delta f\, dA=\frac 1{8\pi}\int_{\d S}\dn f\, |dz|.$$
Applying twice Green's identity, since $f$ and $L^G$ are harmonic in the gap,
\begin{align*}
0 = \frac 1 2\int_G f\Delta L^{G}\, dA = -\frac 1 {8\pi}\int_{\d G} f\cdot \dn L^{G}\,|dz|+\frac 1 {8\pi}\int_{\d G} \dn f\,\cdot L^G\,|dz|
\end{align*}
and thus, using again Green's identity and also that $L=L^G=\tilde{L}^G$ on $\d G$,
\begin{align*}
\frac 1 2 \int_G f\Delta L\, dA&=-\frac 1 {8\pi}\int_{\d G} f\cdot \dn L\,|dz|+\frac 1 {8\pi}\int_{\d G} \dn f\,\cdot L^G\,|dz|, \\
&=-\frac 1 {8\pi}\int_{\d G} f\cdot \dn L\,|dz|+\frac 1 {8\pi}\int_{\d G} f\,\cdot \dn(\tilde{L}^G)\,|dz|, \\
&=\frac 1 {8\pi}\int_{\d G}f\cdot\calN(L^G)\,|dz|.
\end{align*}
(The normal derivatives are on the outwards direction from $S$ and $\tilde{L}^G$ is harmonic continuation of $L^G|_G$).

We have shown that $e_f$ indeed agrees with the formula \eqref{expeco}. The proof is complete.
\end{proof}

\subsection*{Acknowledgement.} CC acknowledges support from the Swedish Research Council, Grant No. 2021-04626.




\begin{thebibliography}{999}
\bibitem{ADM} Akemann, G., Duits, M., Molag, L., \textit{The Elliptic Ginibre Ensemble: A Unifying Approach to Local and Global Statistics
for Higher Dimensions}, J. Math. Phys. \textbf{64}, 023503 (2023). 
\bibitem{A2} Ameur, Y., \textit{Near-boundary asymptotics of correlation kernels},
J. Geom. Anal. \textbf{23} (2013), 73--95.
\bibitem{ACC} Ameur, Y., Charlier, C., Cronvall, J., \textit{Free energy and fluctuations in the random normal matrix model with spectral gaps}, to appear in Constr. Approx. (cf. arxiv: 2312.13904).
\bibitem{ACC2} Ameur, Y., Charlier, C., Cronvall, J., \textit{Random normal matrices: eigenvalue
correlations near a hard wall}, J. Stat. Phys. \textbf{98}, article no. 98 (2024).
\bibitem{ACCL1} Ameur, Y., Charlier, C., Cronvall, J., Lenells, J.,
\textit{Exponential moments for disk counting statistics at the hard edge of random normal matrices}, J. Spectr. Theory \textbf{13} (2023), 841-902.
\bibitem{ACCL} Ameur, Y., Charlier, C., Cronvall, J., Lenells, J., \textit{Disk counting statistics near hard edges of random normal matrices: the multi-component regime}, Adv. Math. \textbf{441} (2024), Paper No. 109549.
\bibitem{AC:Heine} Ameur, Y., Cronvall J., \textit{On fluctuations of Coulomb systems and universality of the Heine distribution}, arxiv: 2411.10288.
\bibitem{AC} Ameur, Y., Cronvall, J., \textit{Szeg\H{o} type asymptotics for the reproducing kernel in spaces of full-plane weighted polynomials}, Commun. Math. Phys. \textbf{398} (2023), 1291-1348. 
\bibitem{AM} Ameur, Y., Hedenmalm, H., Makarov, N., \textit{Random normal matrices and Ward identities}, Ann. Probab. \textbf{43} (2015), 1157--1201.
\bibitem{AKM} Ameur, Y., Kang, N.-G., Makarov, N., \textit{Rescaling Ward identities in the random normal matrix model}, 
Constr. Approx. \textbf{50} (2019), 63--127.
\bibitem{AKS2} Ameur, Y., Kang, N.-G., Seo, S.-M., \textit{The random normal matrix model: insertion of a point charge}, Potential Anal. \textbf{58} (2023), 331-372.
\bibitem{Ax} Axler, S., \textit{Harmonic functions from a complex analysis viewpoint}, Am. Math. Monthly \textbf{93} (1986), 246--258.
\bibitem{BC} Ball, J.A., Clancey, K.F., \textit{Reproducing kernels for Hardy spaces on multiply connected domains}, Integral Equations Operator Theory \textbf{25} (1996), 35--57.
\bibitem{BMe} Balogh, F., Merzi, D., \textit{Equilibrium Measures for a Class of Potentials
with Discrete Rotational Symmetries}, Constr. Approx. \textbf{42} (2015), 399--424.
\bibitem{BBNY2} Bauerschmidt, R., Bourgade, P., Nikula, M., Yau, H.-T., \textit{The two-dimensional Coulomb plasma: quasi-free approximation and central limit theorem}, Adv. Theor. Math. Phys. \textbf{23}, 841--1002, (2019).
\bibitem{Bell} Bell, S.R., \textit{The Cauchy Transform, Potential Theory and Conformal Mapping},
Chapman \& Hall 2016.
\bibitem{BBS2008} Berman, R., Berndtsson, B., Sj\"{o}strand, J.: \textit{Asymptotics of Bergman kernels}, Ark. Mat. \textbf{46} (2008).

\bibitem{BEG} Bertola, M., Elias Rebelo, J.G., Grava, T., \textit{Painlev\'{e} IV critical asymptotics for orthogonal polynomials in the complex plane}, SIGMA Symmetry Integrability Geom. Methods Appl. \textbf{14} (2014), Paper No. 091.

\bibitem{BCL1} Blackstone, E., Charlier, C., Lenells, J., \textit{Oscillatory asymptotics for Airy kernel determinants on two intervals}, Int. Math. Res. Not. IMRN \textbf{2022} (2022), no. 4, 2636--2687.

\bibitem{BCL2} Blackstone, E., Charlier, C., Lenells, J., \textit{Gap probabilities in the bulk of the Airy process}, Random Matrices Theory Appl. \textbf{11} (2022), no. 2, Paper No. 2250022, 30 pp.

\bibitem{BCL3} Blackstone, E., Charlier, C., Lenells, J., \textit{The Bessel kernel determinant on large intervals and Birkhoff's ergodic theorem}, Comm. Pure Appl. Math. \textbf{76} (2023), no. 11, 3300--3345.


\bibitem{BDE2000} Bonnet, G., David, F., Eynard, B., \textit{Breakdown of universality in multi-cut matrix models}, J. Phys. A \textbf{33} (2000), no. 38, 6739--6768.

\bibitem{BG2} Borot, G., Guionnet, A., \textit{Asymptotic expansion of $\beta$ matrix models in the multi-cut regime}, Forum Math., Sigma \textbf{12} (2024).

\bibitem{BE} Byun, S.-S., Ebke, M. \textit{Universal scaling limits of the symplectic elliptic Ginibre ensemble}, Random Matrices Theory Appl. \textbf{12} (2023), no. 1, Paper No. 2250047, 33 pp.


\bibitem{BF} Byun, S.-S., Forrester, P.J., \textit{Progress on the study of Ginibre ensembles,} KIAS Springer Series in Mathematics \textbf{3} (2025).


\bibitem{BF2022} Byun, S.-S., Forrester, P.J., \textit{Spherical induced ensembles with symplectic symmetry}, SIGMA Symmetry Integrability Geom. Methods Appl. \textbf{19} (2023), paper 033, 28 pages.

\bibitem{BKS} Byun, S.-S., Kang, N.-G., Seo, S.-M., \textit{Partition functions of determinantal and Pfaffian Coulomb gases with radially symmetric potentials}, Commun. Math. Phys. \textbf{401} (2023), 1627--1663.

\bibitem{BLY} Byun, S.-S., Lee, S.-Y., Yang, M., \textit{Lemniscate ensembles with spectral singularity}, arXiv:2107.07221,
2021.

\bibitem{BS2021} Byun, S.-S., Seo, S.-M., \textit{Random normal matrices in the almost-circular regime}, Bernoulli \textbf{29} (2023), 1615--1637.

\bibitem{BY2022} Byun, S.-S., Yang, M., \textit{Determinantal Coulomb gas ensembles with a class of discrete rotational symmetric potentials}, SIAM J. Math. Anal. \textbf{55} (2023), 6867--6897.

\bibitem{CFTW} Can, T., Forrester, P.J., T\'{e}llez, G., Wiegmann, P., \textit{Singular behavior at the edge of Laughlin states.} Phys.
Rev. B \textbf{89}, 235137 (2014).
\bibitem{CSA} Cardoso, G., St\'{e}phan, J.-M., Abanov, A., \textit{The boundary density profile of a Coulomb droplet. Freezing at the edge}, J. Phys. A.: Math. Theor. \textbf{54}(1), (2021), 015002.

\bibitem{C} Charlier, C., \textit{Large gap asymptotics on annuli in the random normal matrix model}, Math. Ann. \textbf{388} (2024), 3529--3587. 

\bibitem{CFWW} Charlier, C., Fahs, B., Webb, C., Wong, M.-D., \textit{Asymptotics of Hankel determinants with a multi-cut regular potential and Fischer-Hartwig singularities}, to appear in Memoirs of the
American Mathematical Society. (arXiv:2111.08395).

\bibitem{ClaeysGravaMcLaughlin} Claeys, T., Grava, T., McLaughlin, K.T.-R., \textit{Asymptotics for the partition function in two-cut random matrix models}, Commun. Math. Phys. \textbf{339} (2015), no. 2, 513--587.

\bibitem{Deift} Deift, P., \textit{Orthogonal polynomials and random matrices: a Riemann-Hilbert approach}, Courant Lecture Notes in Mathematics, 3.

\bibitem{DIZ1997} Deift, P., Its, A., Zhou, X., \textit{A Riemann-Hilbert approach to asymptotic problems arising in the theory of random matrix models, and also in the theory of integrable statistical mechanics}, Ann. of Math. (2) \textbf{146} (1997), no. 1, 149--235.

\bibitem{DKMVZ1999} Deift, P., Kriecherbauer, T., McLaughlin, K.T.-R., Venakides, S., Zhou, X., \textit{Uniform asymptotics for polynomials orthogonal with respect to varying exponential weights and applications to universality questions in random matrix theory}, Comm. Pure Appl. Math. \textbf{52} (1999), no. 11, 1335--1425.

\bibitem{ES} Estienne, B., St\'{e}phan, J.-M., \textit{Entanglement spectroscopy of chiral edge modes in the Quantum Hall effect}, Phys. Rev. B \textbf{101}, 115136 (2020).

\bibitem{FK2020} Fahs, B., Krasovsky, I., \textit{Sine-kernel determinant on two large intervals}, Comm. Pure Appl. Math. 77 (2024), no. 3, 1958--2029.

\bibitem{F} Forrester, P.J., \textit{A review of exact results for fluctuation formulas in random matrix theory.} Probab. Surveys \textbf{20} (2023),  170--225.


\bibitem{Fo} Forrester P.J., \textit{Log-gases and Random Matrices} (LMS-34), Princeton University Press, Princeton 2010.

\bibitem{FH} Forrester, P.J., Honner, G., \textit{Exact statistical properties of the zeros of complex random polynomials}, J. Phys. A. \textbf{41}, 375003 (1999).

\bibitem{FJ} Forrester, P.J., Jancovici, B., \textit{Two-dimensional one-component plasma
in a quadrupolar field}, International Journal of Modern Physics A \textbf{11}, no. 5 (1996).

\bibitem{FM1} Forrester. P.J., Mays, A., \textit{Finite-size corrections in random matrix theory and Odlyzko's dataset for the Riemann zeros}, Proc. A. \textbf{471} (2015), no. 2182, 20150436, 21 pp.

\bibitem{FM2} Forrester. P.J., Mays, A., \textit{Finite size corrections relating to distributions of the length of longest increasing subsequences}, Adv. in Appl. Math. \textbf{145} (2023), paper no. 102482, 33 pp.

\bibitem{GFF} Garoni, T.M., Forrester, P.J., Frankel, N.E., \textit{Asymptotic corrections to the eigenvalue density of the GUE and LUE}, J. Math. Phys. \textbf{46} (2005), no. 10, 103301, 17 pp.

\bibitem{G2006} Grava, T., \textit{Partition function for multi-cut matrix models}, J. Phys. A \textbf{39} (2006), no. 28, 8905--8919.

\bibitem{Ke} Kemp, A., \textit{Characterizations of a discrete normal distribution}, Journal of Statistical Planning and Inference, \textbf{63} (1997), 223--229.

\bibitem{HM} Hedenmalm, H., Makarov, N., \textit{Coulomb gas ensembles and Laplacian growth}, Proc. London. Math. Soc. \textbf{106} (2013), 859--907.

\bibitem{HW22} Hedenmalm, H., Wennman, A., \textit{Berezin density and planar orthogonal polynomials}, Trans. Amer. Math. Soc. \textbf{377} (2024), no. 7, 4825--4863.

\bibitem{HW} Hedenmalm, H., Wennman, A., \textit{Planar orthogonal polynomials and boundary universality in the random normal matrix model}, Acta Math. \textbf{227} (2021), 309-406.

\bibitem{J} Johansson, K., \textit{On fluctuations of eigenvalues of
random Hermitian matrices}, Duke Math. J. \textbf{91} (1998),
151--204.

\bibitem{KM} Kang, N.-G., Makarov, N. G.,
\textit{Gaussian free field and conformal field theory},
Ast\'{e}risque \textbf{353} (2013), viii+136 pp.

\bibitem{KS} Katori, M., Shirai, T. \textit{Zeros of the i.i.d. Gaussian Laurent Series on an Annulus: Weighted Szeg\H{o} Kernels and Permanental-Determinantal Point Processes}, Commun. Math. Phys. 392, 1099--1151 (2022).

\bibitem{K} Kemp, A.W., \textit{Characterizations of a discrete normal distribution}, J. Statist. Plann. Inference \textbf{63}, 223--229.

\bibitem{KM2021} Krasovsky, I., Maroudas, T.-H., \textit{Airy-kernel determinant on two large intervals}, arXiv:2108.04495.

\bibitem{LR} Lee, S.-Y., Riser, R., \textit{Fine asymptotic behaviour of random normal matrices: ellipse case}, J. Math. Phys. \textbf{57} (2016), 023302.

\bibitem{LSe} Lebl\'{e}, T., Serfaty, S., \textit{Fluctuations of two-dimensional Coulomb gases}, Geom. Funct. Anal. \textbf{28} (2018), 443--508.

\bibitem{MR} Marceca, F., Romero, J.-L., \textit{Improved discrepancy for the planar Coulomb gas at
low temperatures}, To appear in J. Anal. Math., arxiv:2212.14821.


\bibitem{Marchal} Marchal, O., \textit{Asymptotic expansion of Toeplitz determinants of an indicator function with discrete rotational symmetry and powers of random
unitary matrices}, Lett. Math. Phys. \textbf{113} (2023), no. 4, Paper No. 78, 16 pp.

\bibitem{ML} McCullough, S., Shen, L.-C., \textit{On the Szeg\H{o} kernel of an annulus}, Proc. Amer. Math. Soc. \textbf{121} (1994), 1111-1121.

\bibitem{M} Mehta, M. L., \textit{Random matrices}, Third Edition, Academic Press 2004.

\bibitem{OLMSE} Oblak, B., Lapierre, B., Moosavi, P., St\'{e}phan, J.-M., Estienne, B., \textit{Anisotropic quantum Hall droplets}, Physical Review X \textbf{14} (2024), paper no 011030.


\bibitem{NIST} Olver, F.W.J., Olde Daalhuis, A.B., Lozier, D.W, Schneider, B.I., Boisvert, R.F., Clark, C.W., Miller, B.R., Saunders, B.V., NIST Digital Library of Mathematical Functions. http://dlmf.nist.gov/, Release 1.1.6 of 2022-06-30.

\bibitem{RV} Rider, B., Vir\'{a}g, B., \textit{The noise in the circular law and the Gaussian free field}, Int. Math. Res. Not. IMRN 2007, no. 2, Art. ID rnm006, 33 pp.

\bibitem{R} Rudin, W., \textit{Principles of mathematical analysis}, McGraw Hill 1976.

\bibitem{Sa} Sakai, M., \textit{Regularity of a boundary having a Schwarz function}, Acta Math. \textbf{166} (1991), 263--297.

\bibitem{ST} Saff, E. B., Totik, V., \textit{Logarithmic potentials with
external fields}, Springer 1997.

\bibitem{Shcherbina} Shcherbina, M., \textit{Fluctuations of Linear Eigenvalue Statistics of $\beta$ Matrix Models in the Multi-cut Regime}, J. Stat. Phys. \textbf{151} (2013), no. 6, 1004--1034.

\bibitem{Seo}
Seo, S.-M., \textit{Edge Behavior of Two-Dimensional Coulomb Gases Near a Hard Wall}, Ann. Henri Poincar\'{e} \textbf{23} (2022), 2247--2275.

\bibitem{S} Simon, B., \textit{Advanced Complex Analysis: A comprehensive course in analysis 2B}, American Mathematical Society 2015.

\bibitem{S2} Simon, B., \textit{Basic Complex Analysis: A comprehensive course in analysis 2A}, American Mathematical Society 2015.

\bibitem{Sz} Szablowski, P.J., \textit{Discrete Normal distribution and its relationship with Jacobi Theta functions} Statistics and Probability Letters, \textbf{52} (2001), 289--299.

\bibitem{T} Temme, N.M., \textit{Special Functions: An Introduction to the Classical Functions of Mathematical Physics}, Wiley 2011. 


\bibitem{Widom1995} Widom, H., \textit{Asymptotics for the Fredholm determinant of the sine kernel on a union of intervals}, Comm. Math. Phys. \textbf{171} (1995), 159--180.

\bibitem{ZW} Zabrodin, A., Wiegmann, P., \textit{Large $N$ expansion for the 2D Dyson gas}, J. Phys. A \textbf{39} (2006), no.28, 8933--8964.
\end{thebibliography}
\end{document}